\newcommand{\MeetInTheMiddle}{\text{\tt Meet-in-the-Middle}}
\newcommand{\SSE}{\text{\tt Sorted-Sum-Enumeration}}
\newcommand{\BitPacking}{\text{\tt Bit-Packing}}
\newcommand{\RepresentationOV}{\text{\tt Representation-OV}}
\newcommand{\PackedRepresentationOV}{\text{\tt Packed-Representation-OV}}
\newcommand{\word}{\ell}
\newcommand{\algcomment}[1]{\hfill \OliveGreen{$\triangleright$ #1}}
\title{Subset Sum in Time $2^{n/2} / \poly(n)$}
\author{Xi Chen, Yaonan Jin, Tim Randolph, and Rocco A.~Servedio 
\\ \texttt{ \{xichen, rocco\}@cs.columbia.edu, \{yj2552, t.randolph\}@columbia.edu }
\\ Columbia University}
\date{\today}
\begin{document}

\maketitle

\begin{abstract}
A major goal in the area of exact exponential algorithms is to give an algorithm for the (worst-case) $n$-input Subset Sum problem that runs in time $2^{(1/2 - c)n}$ for some constant $c>0$.
In this paper we give a Subset Sum algorithm with worst-case running time $O(2^{n/2} \cdot n^{-\gamma})$ for a constant $\gamma > 0.5023$ in standard word RAM or circuit RAM models.
To the best of our knowledge, this is the first improvement on the classical ``meet-in-the-middle'' algorithm for worst-case Subset Sum, due to Horowitz and Sahni, which can be implemented in time $O(2^{n/2})$ in these memory models \cite{horowitz1974computing}.

Our algorithm combines a number of different techniques, including the ``representation method'' introduced by  Howgrave-Graham and Joux \cite{howgrave2010new} and subsequent adaptations of the method in Austrin, Kaski, Koivisto, and Nederlof~\cite{austrin2016dense}, and Nederlof and Wegrzycki \cite{NederlofW21}, and ``bit-packing'' techniques used in the work of Baran, Demaine, and P\v{a}tra\c{s}cu \cite{baran2005subquadratic} on subquadratic algorithms for 3SUM.
\end{abstract}

\thispagestyle{empty}
\tableofcontents
\newpage
\setcounter{page}{1}


\section{Introduction}
\label{sec:intro}

One of the most well-known and simple-to-state NP-complete problems is the \emph{Subset Sum} problem. An instance of Subset Sum consists of a list $X=(x_1,\dots,x_n)$ of $n$ positive integer values and a positive integer target $t$, and the output is either a subset $S \subseteq X$ such that $\sum_{x_i \in S} x_i = t$ or a report that no such subset exists. Subset Sum was one of the original 21 problems proved NP-complete in Karp's seminal paper \cite{Karp72} and has been the subject of intensive study from many different perspectives for at least five decades.

This paper is motivated by 
  the following open problem in the theory of exact exponential time algorithms: how quickly can we solve worst-case instances of Subset Sum? Exhaustive search over all possible solutions yields a trivial $2^n \cdot \poly(n)$-time algorithm. In 1974, Horowitz and Sahni introduced the ``meet-in-the-middle'' technique, which gives an algorithm that can be implemented in $O(2^{n/2})$ time in standard RAM models \cite{horowitz1974computing}. Since then, obtaining a $2^{(1/2 - c)n}$-time algorithm for some constant $c>0$ has emerged as a major goal in the exact exponential time algorithms community (explicitly mentioned in \cite{woeginger2008open,open-problem-FPT-school,austrin2015subset,austrin2016dense,NederlofW21} and numerous other works) which has attracted the attention of many researchers.  

Intriguing progress has been made on a number of variants of the core worst-case Subset Sum problem.  More than forty years ago Schroeppel and Shamir \cite{schroeppel1981t} improved the $2^{n/2} \cdot \poly(n)$ space complexity of the meet-in-the-middle algorithm by giving an algorithm that runs in $2^{n/2} \cdot \poly(n)$ time and $2^{n/4} \cdot \poly(n)$ space. An exciting recent breakthrough by Nederlof and Wegrzycki \cite{NederlofW21} further improved this space complexity to $2^{0.249999n}$.  In \cite{howgrave2010new} Howgrave-Graham and Joux gave an algorithm which can solve \emph{average-case} Subset Sum instances in time $2^{0.337n}$,\footnote{See the last paragraph of \cite{becker2011improved} for a discussion of the runtime of \cite{howgrave2010new}.} and this was later improved to $2^{0.291n}$ in the work of Becker et al.~\cite{becker2011improved}. The closely related \emph{Equal Subset Sum} problem, which looks for two subsets with the same sum, can be solved exponentially faster than suggested by meet-in-the-middle \cite{mucha2019equal} and also yields further improvements in the average case \cite{chen2022average}. However, to the best of our knowledge, there have been no improvements on the worst-case $O(2^{n/2})$ runtime of the meet-in-the-middle algorithm for Subset Sum since it was first introduced almost fifty years ago. 

\vspace{.1in}
\noindent
{\bf Our contribution: Worst-case Subset Sum in $2^{n/2}/\poly(n)$ time.}
Given the longstanding difficulty of achieving a $2^{(1/2 - c)n}$-time worst-case algorithm for Subset Sum, it is natural to consider the relaxed goal of achieving \emph{some} nontrivial speedup of the meet-in-the-middle algorithm. In this paper we achieve this goal; more precisely, we give three different randomized algorithms for worst-case Subset Sum, each of which runs in\ignore{expected} time $O(2^{n/2} \cdot n^{-\gamma})$ for a specific constant $\gamma > 0$ in a standard word RAM or circuit RAM model (described in detail in \Cref{sec:model} below).  Our fastest algorithm, which combines techniques from our other two algorithms, runs in time $O(2^{n/2} \cdot n^{-0.5023})$.

The improvements we achieve over the $O(2^{n/2})$ runtime of the meet-in-the-middle algorithm for Subset Sum are analogous to ``log-shaving'' improvements on the runtimes of well-known and simple polynomial-time algorithms for various problems which have resisted  attempts at polynomial-factor improvements.   There is a substantial strand of research along these lines (see \cite{Chan13,Chan-talk-logshaving}  for a non-exhaustive overview); indeed, Abboud and Bringmann \cite{AbboudBringmann18} have recently stated that: ``A noticeable fraction of Algorithms papers in the last few decades improve the runtime of well-known algorithms for fundamental problems by logarithmic factors.'' In our setting, since the well-known and simple algorithm for Subset Sum (namely, meet-in-the-middle)  runs in exponential time, saving a $\poly(n)$ factor, as we do, is analogous to ``log-shaving''.  Indeed, as we discuss in \Cref{sec:techniques} below, our first and most straightforward algorithm is based on ``bit-packing'' techniques that were used by Baran, Demaine, and P\v{a}tra\c{s}cu  \cite{baran2005subquadratic} to shave log factors from the standard $O(n^2)$-time algorithm for the 3SUM problem.  We find it  somewhat surprising that the ``log-shaving'' perspective has not previously appeared in the literature on Subset Sum, and we hope that our work will lead to further (and more substantial) runtime improvements for Subset Sum and other problems with well-known and simple exponential-time algorithms.

\begin{remark}
We note that by an easy reduction, an algorithm for 4SUM running in time $O(n^2 / \log(n)^\alpha)$ for any constant $\alpha > 0$ would immediately imply a Subset Sum algorithm running in time $O(2^{n/2} / n^\alpha)$, which would be a result comparable to ours. However, while log-shaving results for 3SUM are known \cite{baran2005subquadratic, gronlund2014threesomes}, giving an $o(n^2)$ algorithm for 4SUM is a well-known open problem.
\end{remark}

\subsection{Our Computational Model}
\label{sec:model}

Before presenting our results and techniques we describe the models of memory and computation that we use. 
A Subset Sum instance is parameterized by the number of inputs $n$ and the size of the target value $t$ (without loss of generality, $x_1, x_2, \dots, x_n \leq t$). Thus it is natural to adopt a memory model with word length $\word = \Theta(\log t)$ such that each input integer can be stored in a single word. This is the framework used in the work of Pisinger \cite{pisinger2003dynamic}, which studies dynamic programming approaches for Subset Sum in the word RAM model (see \cite[Equation~(1)]{pisinger2003dynamic}). We also note that this memory model is analogous to the standard RAM model that is commonly used for problems such as 3SUM (see e.g.~\cite{baran2005subquadratic}), where it is assumed that each input value is at most $\poly(n)$ and hence fits into a single $O(\log n)$-bit machine word.

This framework lets us consider arbitrary input instances of Subset Sum with no constraints on the size of the input integers. If $t = 2^{o(n)}$, standard dynamic programming algorithms \cite{bellman1966dynamic} solve the problem in time $O(nt) = 2^{o(n)}$, which supersedes our $\poly(n)$-factor improvements over meet-in-the-middle; hence throughout the paper we assume $t = 2^{\Omega(n)}$. It is arguably most natural to think about instances in which $t = 2^{\Theta(n)}$, in which case $\word = \Theta(n)$, and we encourage the first-time reader to imagine $\word = \Theta(n)$ for easy digestion. More precisely, we make the assumption throughout the paper that the word size $\word = \poly(n)$, although some of our results even hold for extremely large word sizes and are footnoted accordingly.

We consider runtime in two standard variants of the RAM model. The first is \emph{circuit RAM}; in this model, any operation that maps a constant number of words to a single word and has a $\poly(\word)$-size circuit with unbounded fan-in gates can be performed in time proportional to the depth of the circuit. Consequently, in the circuit RAM model, $\mathsf{AC}^0$ operations on a {\em constant} number of words can be performed in {\em constant} time, and multiplying, performing modular division, etc., on two $\word$-bit words can be performed in time $O(\log \word)$.
The second is \emph{word RAM}, in which the usual arithmetic operations, including multiplication, are assumed to take unit time, but arbitrary $\mathsf{AC}^0$ operations are not atomic operations on words. We present each of our algorithms for the stronger circuit RAM model,\footnote{Note that any algorithm in the word RAM model can be simulated in the circuit RAM model with no more than an $O(\log \word)$-factor slowdown.} and explain adaptations that give corresponding word RAM algorithms. 

\subsection{Results, Techniques, and Organization}
\label{sec:techniques}

In \Cref{sec:prelim} we establish our notation and review some background results and observations that will be used throughout the paper.

\Cref{sec:bit-packing,sec:memo-ov,sec:main} give our three new algorithms, which augment the standard meet-in-the-middle approach in different ways to achieve their respective runtime improvements.
Each of our algorithms is a randomized decision algorithm that runs in the time bound claimed below and on every input instance outputs the correct answer with probability at least $3/4$.
Further, each of our algorithms has one-sided error, i.e., it never makes a mistake when it outputs ``yes''. (See \Cref{obs:search-decision} for a discussion of how such randomized  decision algorithms yield randomized search algorithms with the same asymptotic runtime.)

Our first and simplest algorithm, presented in \Cref{sec:bit-packing}, achieves a runtime of $\tO(2^{n/2} \cdot \word^{-1/2}) \leq \tO(2^{n/2} \cdot n^{-1/2})$ in the circuit RAM model and $\tO(2^{n/2} \cdot n^{-1/2})$ in the word RAM model, for all $\word=\poly(n)$.\footnote{The notation $\tO(\cdot)$ suppresses $\polylog(\word) = \polylog(n)$ factors.} It works by adapting the {\em bit-packing} trick, a technique developed by Baran, Demaine, and P\v{a}tra\c{s}cu \cite{baran2005subquadratic} for the 3SUM problem, for the \blackref{alg:MeetInTheMiddle} algorithm.
The idea is to compress the two lists of partial subset sums used in \blackref{alg:MeetInTheMiddle} by packing hashes of multiple values into a machine word, while preserving enough information to make it possible to run (an adaptation of) \blackref{alg:MeetInTheMiddle} on the lists of hashed and packed values. This results in a runtime savings over performing \blackref{alg:MeetInTheMiddle} on the original lists (without hashing and packing), because processing a pair of words, each containing multiple hashed values, takes constant expected time in the circuit RAM model and can be memoized to take constant time in the word RAM model.

Our second algorithm, given in \Cref{sec:memo-ov}, achieves a runtime of $O(2^{n/2} \cdot \word^{-\gamma}) \leq O(2^{n/2} \cdot n^{-\gamma})$ for some constant $\gamma > 0.01$ in the circuit RAM model and $O(2^{n/2} \cdot n^{-\gamma})$ in the word RAM model, for all $\word = \poly(n)$. Although the time savings is smaller than our first algorithm, we believe that this algorithm is conceptually interesting since it avoids bit-packing and instead combines \blackref{alg:MeetInTheMiddle} with two techniques devised in prior work on Subset Sum. The first of these is the ``representation method'' introduced by Howgrave-Graham and Joux \cite{howgrave2010new}. Roughly speaking, the idea of this method is to (i) increase the size of the search space in such a way that a single solution has many ``representations'' in the space of enhanced solutions, and then (ii) search over only a fraction of the enhanced solution space. A consequence of expanding the solution space, though, is that a number of ``pseudosolutions'', solutions that contain certain input elements more than once, are introduced. This leads us to the second technique, i.e.\ the use of a fast subroutine for the Orthogonal Vectors (OV) problem (recall that OV is the problem of deciding whether two lists of $\{0,1\}$-vectors contain a pair of vectors, one from each list, that are orthogonal). The fast OV subroutine lets us efficiently rule out pseudosolutions while running (an adaptation of) \blackref{alg:MeetInTheMiddle} on a fraction of the enhanced solution space.

In \Cref{sec:main} we give our fastest algorithm, which uses a delicate combination of the techniques from \Cref{sec:bit-packing,sec:memo-ov} to obtain a runtime of $O(2^{n/2} \cdot n^{-0.5023})$ for all $\word=\poly(n)$.
While the runtime improvement over \Cref{sec:bit-packing} is not large, this algorithm demonstrates that by leveraging insights specific to the Subset Sum problem, we can achieve time savings beyond what is possible with more ``generic'' log shaving techniques.
Finally, in \Cref{sec:extensions} we briefly discuss directions for future work.

\section{Preliminaries}
\label{sec:prelim}

To ease readability, we adopt the following notational conventions throughout the paper:
lowercase Roman letters ($\word$, $n$, etc.) denote variables; lowercase Greek letters ($\epsilon$, $\alpha$, etc.) denote numerical constants; capital Roman letters ($L$, $W$, etc.) denote sets, multisets, or lists; and calligraphic capital letters ($\calW$, $\calQ$, etc.) denote collections of sets of numbers.

\vspace{.1in}
\noindent
\textbf{Logarithms.}
When written without a specified base, $\log(\cdot)$ denotes the base-2 logarithm.

\vspace{.1in}
\noindent
\textbf{Big-O Notation.}
We augment big-$O$ notation with a tilde ($\tO$, $\tOmega$, $\tTheta$ etc.) to suppress $\polylog(\word) = \polylog(n)$ factors. For example, we have $\tilde{O}(2^n) = O(2^n \cdot \polylog(n))$ and $\tilde{\Omega}(n) = \Omega(\frac{n}{\polylog(n)})$.

\vspace{.1in}
\noindent
\textbf{Probability Notation.}
Random variables are written in boldface. In particular, we write  ``$\bx \sim S$'' to indicate that an element $\bx$ is sampled uniformly at random from a finite multiset $S$.

\vspace{.1in}
\noindent
\textbf{Set Notation.}
We write $[a:b]$ for the set of integers $\{a, a+1, \dots, b\}$ and $[a]$ for $\{1, 2, \dots, a\}$. We write $\P[a: b]$ for the set of all primes in the interval $[a:b]$.

Given a multiset or list $Y$ of integers, we adopt several shorthands: $\Sigma(Y) := \sum_{y \in Y} y$ denotes the sum,\ignore{$\calW(Y) := \{T \mid T \subseteq Y\}$ denotes the collection of sub-multisets,}
$W(Y) := \{\Sigma(T) \mid T \subseteq Y\}$ denotes the set of {\em distinct} sub-multiset sums, and $L_Y$ denotes the list containing the elements of $W(Y)$ sorted in increasing order.

Also, we often write $f(Y)$ for the multiset or list obtained by applying operation $f$ element-wise, such as $Y + \alpha = \{y + \alpha \mid y \in Y\}$ and $\alpha Y = \{\alpha y \mid y \in Y\}$.
The only exception is $(Y \bmod p)$, which denotes the set of {\em distinct} residues $\{(y \bmod p) \mid y \in Y\}$.

\vspace{.1in}
\noindent
\textbf{Stirling's Approximation for Binomial Coefficients.}
We use the following well-known consequence of Stirling's approximation (see e.g.\ \cite[Lemma~16.19]{FG06}): For each integer $j \in [0: n / 2]$,
\begin{equation}
    \label{eq:stirling}
    \sum_{i \,\leq\, j} \binom{n}{i} ~\leq~ 2^{H(j / n) \cdot n},
\end{equation}
where $H(y) := -y \log(y) - (1 - y) \log(1 - y)$ is the binary entropy function.

\vspace{.1in}
\noindent
\textbf{Pseudolinear Hashing.}
Recall that $\word = \poly(n)$ is the word length in our memory model.
Given an integer $m \leq \word$, we write $\bh_m$ to denote the random hash function defined as
\begin{equation} \label{eq:random-hash}
    \bh_m(y) ~~:=~~ (\bu \cdot y \pmod{2^\word}) \gg \word - m.
\end{equation}
Here the input $y$ is an $\word$-bit integer, $\bu$ is selected uniformly at random from all odd $\word$-bit integers, and $\gg$ denotes a bit shift to the right, i.e., dividing $\bu \cdot y \pmod{2^\word}$ by $2^{\word - m}$ and then truncating the result so that only the higher-order $m$ bits remain.

This hash function $\bh_{m}(y)$ can be evaluated in time $O(\log \word) = O(\log n)$ for $\word = \poly(n)$ in the circuit RAM model, i.e., essentially the time to multiply, or constant time $O_{n}(1)$ in the word RAM model. Further, $\bh_{m}$ has the following useful properties \cite{dietzfelbinger1997reliable,baran2005subquadratic}.

\begin{lemma}[Pseudolinear Hashing \cite{dietzfelbinger1997reliable,baran2005subquadratic}]
\label{lem:hash}
The following hold for the hash function $\bh_{m}$:
\begin{enumerate}
    \item \textbf{Pseudolinearity.} For any two $\word$-bit integers $y$, $z$ and any outcome of $\bh_m$,
    \[
        \bh_{m}(y) + \bh_{m}(z) ~\in~ \bh_{m}(y + z) - \{0, 1\} \pmod{2^m}.
    \]
    \item \textbf{Pseudouniversality.} For any two $\word$-bit integers $y$, $z$ with $y \neq z$,
    \[
        \Prx \big[\bh_{m}(y) = \bh_{m}(z)\big] ~=~ O(2^{-m}).
    \]
\end{enumerate}
\end{lemma}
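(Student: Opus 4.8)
The two properties are essentially those of Dietzfelbinger's multiplicative hashing as used by Baran–Demaine–Pătrașcu, so the plan is to prove each part by direct computation on the definition $\bh_m(y) = (\bu \cdot y \bmod 2^\word) \gg (\word - m)$.

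For pseudolinearity, I would first note that the "mod $2^\word$" inside the definition makes the quantity $\bu y \bmod 2^\word$ genuinely additive in $y$ modulo $2^\word$: writing $\langle a\rangle$ for $a \bmod 2^\word$, we have $\langle \bu(y+z)\rangle = \langle \langle \bu y\rangle + \langle \bu z\rangle\rangle$. Now the right-shift by $\word - m$ is just integer division by $2^{\word-m}$. The key elementary fact is that for nonnegative integers $a,b$, $\lfloor a/2^{\word-m}\rfloor + \lfloor b/2^{\word-m}\rfloor$ and $\lfloor (a+b)/2^{\word-m}\rfloor$ differ by either $0$ or $1$ (the carry out of the low $\word-m$ bits), and that taking the sum modulo $2^\word$ before dividing only affects the result modulo $2^m$. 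Combining these two observations: reducing $\langle\bu y\rangle + \langle\bu z\rangle$ mod $2^\word$ drops at most a $1$ in the $m$-th bit position after shifting, and the carry from the low bits contributes another possible $\pm 1$; tracking signs carefully, one gets $\bh_m(y) + \bh_m(z) \in \bh_m(y+z) - \{0,1\} \pmod{2^m}$. I would organize this as: (1) shifting commutes with addition up to a $\{0,1\}$ carry; (2) the outer mod $2^\word$ is invisible after we reduce mod $2^m$; (3) assemble.

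For pseudouniversality, fix $y \neq z$ and let $d = y - z \not\equiv 0$. I want to bound $\Prx[\bh_m(y) = \bh_m(z)]$; by pseudolinearity (applied to $z$ and $d$, say) the event $\bh_m(y) = \bh_m(z)$ forces $\bh_m(d) \in \{0, -1, 2^m - 1, \dots\}$, i.e. $\bh_m(d)$ lands in an $O(1)$-size set modulo $2^m$, so it suffices to show that for any fixed $d \neq 0$ and any fixed target value $v$, $\Prx[\bh_m(d) = v] = O(2^{-m})$ over the uniform random odd $\word$-bit $\bu$. This is the standard computation: write $d = 2^k d'$ with $d'$ odd and $k < \word$; as $\bu$ ranges over odd residues mod $2^\word$, the product $\bu d' \bmod 2^{\word}$ is "as spread out as possible" in the high bits — more precisely, the top $m$ bits of $\bu d \bmod 2^\word$ take each value at most $O(2^{\word - m}/2^{\word-1}) = O(2^{1-m})$ fraction of the time, because multiplication by the odd part $d'$ is a bijection on odd residues and the factor $2^k$ only shifts. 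I would cite \cite{dietzfelbinger1997reliable,baran2005subquadratic} for this counting lemma rather than reproving it, and then feed the $O(1)$-set bound from pseudolinearity through a union bound to conclude $\Prx[\bh_m(y)=\bh_m(z)] = O(2^{-m})$.

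The main obstacle — really the only subtle point — is the careful bookkeeping in part (1): correctly accounting for the interaction between the carry bit generated by adding the low $\word - m$ bits and the reduction modulo $2^\word$ (which can itself "wrap" the $m$ retained bits), and getting the sign of the discrepancy right so that it is $-\{0,1\}$ rather than $\pm 1$. Everything else is either a citation or a one-line consequence of pseudolinearity. I would therefore spend most of the written proof on a clean lemma stating the shift-addition carry fact for integers and then apply it twice (once for the genuine carry, once for the modular wrap).
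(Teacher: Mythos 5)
The paper does not actually prove Lemma~\ref{lem:hash}; it cites it to \cite{dietzfelbinger1997reliable,baran2005subquadratic} and moves on, so there is no ``paper proof'' to compare against. Evaluating your proposal on its own terms: the pseudolinearity sketch is correct in structure (identify the $\{0,1\}$ shift-carry, observe that the outer $\bmod\;2^\word$ contributes a multiple of $2^m$ and hence vanishes $\bmod\;2^m$, assemble), although the phrases ``drops at most a $1$ in the $m$-th bit position'' and ``contributes another possible $\pm 1$'' misdescribe the actual contributions ($-2^m \equiv 0 \pmod{2^m}$ for the wrap, and $-\{0,1\}$ for the carry); you correctly flag that this sign-tracking is where care is needed, and the stated conclusion $\bh_m(y)+\bh_m(z) \in \bh_m(y+z) - \{0,1\} \pmod{2^m}$ does come out right.

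The pseudouniversality argument has a genuine gap. You reduce to the claim that for any fixed $d \neq 0$ and \emph{any} fixed $v$, $\Pr[\bh_m(d) = v] = O(2^{-m})$, and you attribute this to the ``standard'' counting lemma in the references. That claim is false. Take $d = 2^{\word - 1}$ and $v = 2^{m-1}$: since $\bu$ is odd, $\bu d \bmod 2^\word = 2^{\word-1}$ deterministically, so $\bh_m(d) = 2^{m-1}$ with probability $1$, not $O(2^{-m})$. More generally, for $d = 2^k d'$ with $d'$ odd and $k$ close to $\word$, the random variable $\bu d \bmod 2^\word$ is supported on only $2^{\word - k - 1}$ values, so some $v$ get hit with probability $\Omega(2^{k - \word + 1})$, which can be constant. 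Your heuristic ``the factor $2^k$ only shifts'' is exactly what fails here: shifting by $k$ also \emph{collapses} the range modulo $2^\word$ and concentrates mass. The reason the reduction nevertheless works is that the only $v$ you actually need are $v \in \{0,\, 2^m - 1\}$, which correspond to $\bu d \bmod 2^\word$ falling in $[0, 2^{\word-m}) \cup [2^\word - 2^{\word-m}, 2^\word)$ --- and the support $\{\,(2j+1) \cdot 2^k : 0 \le j < 2^{\word-k-1}\,\}$ of $\bu d \bmod 2^\word$ is bounded away from $0$ and from $2^\word$ by $2^k$, so those extreme intervals contain at most $O(2^{\word - m})$ of the $2^{\word-1}$ values of $\bu$ regardless of $k$. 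The clean statement you should prove (or cite) is $\Pr[\bh_m(d) \in \{0, 2^m-1\}] = O(2^{-m})$ for $d \neq 0$, not a uniform-in-$v$ bound; as written, your intermediate claim is wrong and the cited references do not establish it.
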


\subsection{Preliminary Results}

First of all, for the purposes of this paper, we can assume that
the input target $t$ have size $2^{\Omega(n)}$:

\begin{observation}[Assumptions about Input Instances]
\label{obs:parameters}
Given a Subset Sum instance $(X,\, t)$ with $t \leq 2^{0.499n}$, the standard dynamic programming algorithm \cite{bellman1966dynamic} takes time $O(nt) = 2^{0.499n + o(n)}$, much faster than the $\poly(n)$-factor speedups we are targeting. Hence without loss of generality, we assume $t = 2^{\Omega(n)}$ and $\word = \Omega(\log t) = \Omega(n)$ throughout the paper.
\end{observation}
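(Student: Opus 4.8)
The plan is to dispose of this with the classical pseudopolynomial dynamic programming algorithm of Bellman, and then simply argue that it already beats the runtime targets of this paper whenever the target $t$ is small. Concretely, I would recall that Bellman's DP maintains, for $i = 0,1,\dots,n$, the set of achievable sub-multiset sums of $\{x_1,\dots,x_i\}$ lying in $[0:t]$, stored as a length-$(t+1)$ bit-vector; the transition from step $i-1$ to step $i$ is a shifted \textsc{or} of this vector with itself, and after step $n$ one reads off whether position $t$ is set. Each of the $n$ steps touches $O(t)$ cells holding $O(1)$-bit values, so the total running time is $O(nt)$ in the word RAM model with $\word = \Theta(\log t)$.

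Next I would instantiate the hypothesis $t \le 2^{0.499n}$: the DP then runs in time $O(nt) \le O(n\cdot 2^{0.499n}) = 2^{0.499n + o(n)}$. Since $0.499 < 1/2$, this is $2^{n/2 - \Omega(n)}$, which is asymptotically smaller than $2^{n/2}\cdot n^{-\gamma}$ for every constant $\gamma$, hence smaller than every runtime bound claimed for the algorithms of \Cref{sec:bit-packing,sec:memo-ov,sec:main}. Therefore, without loss of generality, any of our algorithms may first test whether $t \le 2^{0.499n}$ and, if so, run Bellman's DP to solve the instance exactly within the claimed time budget. The only remaining case is $t > 2^{0.499n}$, i.e.\ $t = 2^{\Omega(n)}$, and then the memory-model convention $\word = \Theta(\log t)$ (chosen so each input integer fits in a word) gives $\word = \Omega(\log t) = \Omega(n)$, as asserted.

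The one thing worth checking carefully is that the chosen cutoff exponent is strictly below $1/2$ so that the $O(nt)$ bound genuinely supersedes a $\poly(n)$-factor improvement over $2^{n/2}$; any constant in $(0,1/2)$ works, and $0.499$ is picked purely for concreteness. Beyond this bookkeeping there is no real obstacle: the observation is routine, its content being simply ``small-target instances are easy, so henceforth assume the target is exponentially large,'' and the proof is essentially the two displayed estimates above.
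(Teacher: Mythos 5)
Your proposal is correct and matches the paper's reasoning exactly: the observation's justification is precisely that Bellman's $O(nt)$ dynamic program already beats the targeted $2^{n/2}/\poly(n)$ runtime whenever $t \leq 2^{0.499n}$, so one may assume $t = 2^{\Omega(n)}$ and hence $\word = \Omega(\log t) = \Omega(n)$. Your write-up simply spells out the DP and the two estimates the paper states in passing.
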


We further observe that the randomized {\em decision} algorithms that we give can be converted into randomized {\em search} algorithms with the same asymptotic runtimes:

\begin{observation}[Search-to-Decision Reductions]
\label{obs:search-decision}
Given a deterministic decision algorithm \texttt{DET} for the Subset Sum problem, the standard search-to-decision reduction solves an $n$-integer instance by running \texttt{DET} a total of $n$ times, with the $i$-th execution being on an $(n-i+1)$-integer subinstance. Each of our randomized algorithms \texttt{RAN} succeeds with probability $\geq 3/4$, returns no false positives, and runs in time $\mathrm{TIME}(n) = 2^{\Theta(n)}$.
On the same $(n-i+1)$-integer subinstance, we can reduce the failure probability to $\leq 1/4^{i}$ by repeating \texttt{RAN} a number of $i$ times.
Overall, the success probability is $\geq 1 - \sum_{i=1}^n 1/4^{i} \geq 2/3$ and the runtime is $\sum_{i=1}^n i \cdot \mathrm{TIME}(n+1-i) = O(\mathrm{TIME}(n))$.
\end{observation}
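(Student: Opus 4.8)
The statement has two essentially independent components, and I would prove them in turn: (i) the classical deterministic search-to-decision reduction for Subset Sum, and (ii) the error-reduction bookkeeping that makes the reduction work when the decision subroutine is our randomized, one-sided-error algorithm \texttt{RAN}, within the claimed time bound.

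First I would recall the deterministic reduction. Given a decision oracle \texttt{DET}, call it on $(X, t)$; if it answers NO, output NO. Otherwise process $x_1, x_2, \ldots, x_n$ in order, maintaining a residual target $t'$ (initially $t$) together with the invariant that the not-yet-processed elements admit a subset summing to $t'$. To decide whether $x_i$ lies in the witness, invoke \texttt{DET} on the instance with element set $\{x_{i+1}, \ldots, x_n\}$ and target $t'$: if the answer is YES, discard $x_i$ and keep $t'$; if NO, place $x_i$ into the witness and update $t' \leftarrow t' - x_i$. Counting the initial consistency check as the first call, the $i$-th call is made on a subinstance with $n - i + 1$ integers, and $n$ calls suffice. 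If every oracle answer is correct then the invariant is preserved throughout, so the returned subset sums to $t$; conversely if $(X,t)$ is a NO-instance this is detected by the first call.

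Next I would plug in \texttt{RAN}. The crucial structural fact is that \texttt{RAN} has one-sided error: a YES output is always correct, and only a NO output can be wrong, with probability at most $1/4$. Hence amplification is trivial by ``repeat-and-OR'': running \texttt{RAN} a total of $k$ times and reporting YES iff some run reports YES errs only when the true answer is YES yet all $k$ independent runs output NO, an event of probability at most $(1/4)^k$. Using $k = i$ repetitions for the $i$-th call, a union bound shows that the probability that \emph{any} of the $n$ amplified calls errs is at most $\sum_{i=1}^n 4^{-i} = \frac{1}{3}(1 - 4^{-n}) < 1/3$, so with probability $> 2/3$ all calls are correct and, by the previous paragraph, the output is a valid witness (or a correct NO). To additionally preserve the no-false-positives guarantee at the level of the search algorithm, one can verify the candidate subset at the end in $\poly(n)$ time and output NO if it fails to sum to $t$.

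For the runtime: the $i$-th amplified call costs $i \cdot \mathrm{TIME}(n+1-i)$, so the total is $\sum_{i=1}^n i \cdot \mathrm{TIME}(n+1-i)$ up to $\poly(n)$ bookkeeping overhead. The one point that requires care — and the only non-routine step — is arguing that this sum is $O(\mathrm{TIME}(n))$ rather than merely $\poly(n) \cdot \mathrm{TIME}(n)$. This uses that for every algorithm in this paper $\mathrm{TIME}(m) = 2^{m/2} / m^{\Theta(1)}$ up to $\polylog$ factors, so $\mathrm{TIME}(n+1-i)/\mathrm{TIME}(n) = 2^{(1-i)/2}\,(n/(n-i+1))^{\Theta(1)}\,\polylog(n)$: for $i \le n/2$ this is $O(2^{-i/2})$ and $\sum_i i\,2^{-i/2}$ converges, while for $i > n/2$ the factor $2^{-i/2} = 2^{-\Omega(n)}$ kills any $\poly(n)$ term, so those terms contribute only $o(\mathrm{TIME}(n))$. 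Combining, $\sum_{i=1}^n i \cdot \mathrm{TIME}(n+1-i) = O(\mathrm{TIME}(n))$, and the $\poly(n)$ overhead is negligible against $\mathrm{TIME}(n) = 2^{\Theta(n)}$, which yields the claim.
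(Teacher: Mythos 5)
Your proposal is correct and follows essentially the same argument as the paper's Observation~\ref{obs:search-decision}: the standard self-reduction with the $i$-th call on an $(n-i+1)$-integer subinstance, $i$-fold amplification of the one-sided-error decision subroutine to push the failure probability of that call below $4^{-i}$, a union bound giving overall success $\geq 1 - \sum_i 4^{-i} \geq 2/3$, and the runtime bound $\sum_i i\cdot\mathrm{TIME}(n{+}1{-}i) = O(\mathrm{TIME}(n))$. Your last paragraph usefully spells out why that geometric-type sum telescopes to $O(\mathrm{TIME}(n))$ for $\mathrm{TIME}(m)=2^{m/2}/m^{\Theta(1)}$, a step the paper leaves implicit; otherwise the two arguments coincide.
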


A basic primitive for our algorithms is the sorted list $L_{Y}$ containing the elements of $W(Y)$, all {\em distinct} subset sums, for a given multiset $Y$ of size $|Y| = k$. \Cref{fig:alg:SSE} shows an $O(2^k)$-time folklore algorithm that enumerates $L_Y$:

\begin{figure}[t]
    \centering
    
    \begin{mdframed}
    Procedure $\term[\SSE]{alg:SSE}(Y)$

    \begin{flushleft}
    {\bf Input:} An integer multiset $Y = \{y_1, \dots, y_k\}$.
    
    \vspace{.05in}
    {\bf Output:} A sorted list $L_{Y} := L_{k}$ of $W(Y)$, all distinct subset sums for $Y$.
    
    \begin{enumerate}
        \item Initialize $L_0 := (0)$. Then for each $i \in [k]$:
        
        \item \qquad Create the sorted list $L'_{i-1} := L_{i-1} + y_i$.
        
        \item \qquad Create the sorted list $L_i$ obtained by merging $L_{i-1}$, $L'_{i-1}$ and removing duplicates.
    \end{enumerate}
    \end{flushleft}
    \end{mdframed}
    \caption{Efficiently enumerating subset sums of $k \geq 1$ integers.}
    \label{fig:alg:SSE}
\end{figure}

\begin{lemma}[Sorted Sum Enumeration; Folklore]
\label{lem:sse}
\blackref{alg:SSE} runs in time $O(2^k)$.
\end{lemma}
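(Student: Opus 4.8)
The plan is to establish the two facts implicit in the statement — that $\blackref{alg:SSE}(Y)$ returns the sorted list $L_Y$ of all distinct sub-multiset sums of $Y$, and that it does so in $O(2^k)$ time — by a single induction on the loop index together with a geometric-series bound.

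For correctness I would prove by induction on $i \in [0:k]$ that after iteration $i$ the list $L_i$ contains exactly the elements of $W(\{y_1,\dots,y_i\})$, in strictly increasing order. The base case $i = 0$ is immediate since $L_0 = (0)$ and $W(\emptyset) = \{0\}$. For the inductive step, observe that every $T \subseteq \{y_1,\dots,y_i\}$ either omits $y_i$, in which case $\Sigma(T) \in W(\{y_1,\dots,y_{i-1}\})$, or contains $y_i$, in which case $\Sigma(T) = \Sigma(T \setminus \{y_i\}) + y_i$ with $T \setminus \{y_i\} \subseteq \{y_1,\dots,y_{i-1}\}$. Hence $W(\{y_1,\dots,y_i\})$ is precisely the union of the element set of $L_{i-1}$ with that of $L'_{i-1} = L_{i-1} + y_i$; since adding the fixed value $y_i$ to every entry preserves sorted order, $L'_{i-1}$ is a legitimately sorted list, and the merge-with-deduplication in Step 3 outputs exactly this union in increasing order. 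Taking $i = k$ gives $L_k = L_Y$.

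For the running time, the key quantitative observation is that $|W(\{y_1,\dots,y_i\})| \le 2^i$ because there are only $2^i$ subsets of $\{y_1,\dots,y_i\}$, so $|L_i| \le 2^i$ for every $i$. Thus in iteration $i$, Step 2 is a single linear pass over the $|L_{i-1}| \le 2^{i-1}$ entries of $L_{i-1}$, and Step 3 is a standard two-pointer merge of two sorted lists of combined length at most $2^i$ with duplicates discarded, also linear in that length; iteration $i$ therefore costs $O(2^i)$, and summing the geometric series $\sum_{i=1}^{k} O(2^i) = O(2^k)$ gives the claimed bound.

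I do not expect any genuine obstacle here — this is folklore — so the "hard part" is only to confirm that each primitive used costs $O(1)$ in our model. Comparing and adding a constant number of $\word$-bit words are $\mathsf{AC}^0$ operations and hence unit-cost in circuit RAM (and trivially in word RAM), and the partial sums arising from $k \le n$ inputs of bit-length $O(\word)$ occupy $O(1)$ machine words throughout (alternatively one caps partial sums at $t$), so no single list operation is more than constant time. I would note these points in a sentence rather than belabor them.
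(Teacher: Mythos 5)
Your proof is correct and follows essentially the same route as the paper's: bound $|L_i|$ by $2^i$ (the paper phrases this via $|L_{i-1}| = |L'_{i-1}| \le |L_i| \le 2|L_{i-1}|$, you directly via the number of subsets) and sum the resulting geometric series $\sum_{i\le k} O(2^i) = O(2^k)$. The extra material you include (the inductive correctness argument and the remark that each word-level operation is unit cost) is sound but not something the paper spells out, since the lemma as stated only claims the running time.
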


\begin{proof}
We essentially adapt the classic {\em merge sort} algorithm.
Since $|L_{i - 1}| = |L'_{i - 1}| \leq |L_{i}| \leq 2|L_{i - 1}|$ for each $i \in [k]$, the runtime of \blackref{alg:SSE} is
\[
    \textstyle
    \sum_{i \in [k]} O(|L'_{i - 1}| + |L_{i}|)
    ~=~ \sum_{i \in [k]} O(|L_{i}|)
    ~=~ \sum_{i \in [k]} O(2^{i})
    ~=~ O(2^{k}).
    \qedhere
\]
\end{proof}

We can improve this runtime analysis if $|W(Y)|$ is smaller than $2^k$ by a $\poly(k)$ factor:

\begin{lemma}[Sorted Sum Enumeration for Small $W(Y)$]
\label{lemma:sse-poly}
If $Y$ is a multiset of $|Y| = k$ integers with $|W(Y)| \leq 2^k \cdot k^{-\epsilon}$ for some constant $\epsilon > 0$, \blackref{alg:SSE} runs in time $O(2^k \cdot k^{-\epsilon} \cdot \log k)$.
\end{lemma}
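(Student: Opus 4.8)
The plan is to run the merge-sort accounting from the proof of \Cref{lem:sse} essentially unchanged, and only to control the list sizes $|L_i|$ more carefully using the hypothesis. The key observation is that for each $i \in [0:k]$ the list $L_i$ built by \blackref{alg:SSE} is exactly the sorted list of $W(Y_i)$ for the prefix multiset $Y_i := \{y_1, \dots, y_i\}$, so $|L_i| = |W(Y_i)|$; and since $Y_i$ is a sub-multiset of $Y$, every sub-multiset sum of $Y_i$ is also one of $Y$, giving $W(Y_i) \subseteq W(Y)$ and hence $|L_i| \le |W(Y)| \le 2^k k^{-\epsilon}$. Combined with the trivial bound $|L_i| \le 2^i$, and with the fact (established in the proof of \Cref{lem:sse}, where $|L'_{i-1}| = |L_{i-1}| \le |L_i|$) that \blackref{alg:SSE} runs in time $\sum_{i \in [k]} O(|L'_{i-1}| + |L_i|) = \sum_{i \in [k]} O(|L_i|)$, it suffices to show that $\sum_{i=1}^k \min\!\big(2^i,\, |W(Y)|\big) = O(2^k k^{-\epsilon} \log k)$.

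To estimate this sum I would write $M := |W(Y)| \ge 1$ and $s := k - \log M$, so that the hypothesis $M \le 2^k k^{-\epsilon}$ is precisely the statement $s \ge \epsilon \log k$, while $s \le k$ since $M \ge 1$. Reindexing the sum from the largest term downward via $i = k - j$, the terms with $j < s$ each contribute at most $M$ and number at most $s+1$, whereas the terms with $j \ge s$ form a geometric series summing to $O(2^{k-s}) = O(M)$. Hence $\sum_{i=1}^k \min(2^i, M) = O(sM) = O\!\big(2^k \cdot s \cdot 2^{-s}\big)$. Since the function $s \mapsto s\,2^{-s}$ is decreasing once $s$ exceeds an absolute constant, and since $s \ge \epsilon \log k$, we get $s\,2^{-s} \le \epsilon (\log k)\cdot k^{-\epsilon}$ for all sufficiently large $k$, which yields the desired bound $O(2^k k^{-\epsilon} \log k)$. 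For the remaining finitely many values of $k$ the running time is $O(1)$ and there is nothing to prove.

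I do not anticipate any real obstacle here; the only point that needs care is the direction of the inequality between $s$ and $\log k$. The hypothesis $|W(Y)| \le 2^k k^{-\epsilon}$ is exactly what bounds $s$ from \emph{below} by $\epsilon\log k$, and this is what limits the number of ``flat'' terms of magnitude $\approx M$ to $O(\log k)$; without such a bound the flat terms could number $\Theta(k)$ and we would recover only the $O(2^k)$ estimate of \Cref{lem:sse}. All memory-model subtleties are inherited verbatim from \Cref{lem:sse} and require no separate discussion.
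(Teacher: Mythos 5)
Your proof is correct and takes essentially the same approach as the paper: both bound $|L_i|$ by $\min(2^i, |W(Y)|)$ and sum. The only difference is cosmetic — the paper splits into cases $|W(Y)| \le 2^k k^{-(1+\epsilon)}$ versus $|W(Y)| > 2^k k^{-(1+\epsilon)}$ to control the number of ``flat'' terms, whereas you handle both cases uniformly by substituting $s = k - \log|W(Y)|$ and invoking monotonicity of $s \, 2^{-s}$.
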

    
\begin{proof}
If $|W(Y)| \leq 2^k \cdot k^{-(1+\epsilon)}$, then since $|L_i| \leq |L_k| = |W(Y)|$ for each $i \in [k]$, it is easy to check that the algorithm runs in time $O(k \cdot |W(Y)|) = O(2^k \cdot k^{-\epsilon})$;
so suppose that $|W(Y)| \geq 2^k \cdot k^{-(1+\epsilon)}$. Using the bound $|L_i| \leq 2^i$ for $i \leq \log |W(Y)|$ and the bound $|L_i| \leq |W(Y)|$ for $\log |W(Y)| < i \leq k$, the runtime of \blackref{alg:SSE} is upper bounded by
\begin{align*}
    \textstyle\sum_{i \in [k]} O(|L_{i}|)
    & ~=~ \textstyle\sum_{i \,\leq\, \log|W(Y)|} O(2^i) + \sum_{\log|W(Y)| < i \leq k} O(|W(Y)|) \\
    & ~=~ O(|W(Y)|) + (1+\epsilon) \cdot \log k \cdot O(|W(Y)|) \\
    & ~=~ O(|W(Y)| \cdot \log k)
    ~=~ O(2^k \cdot k^{-\epsilon} \cdot \log k).
    \qedhere
\end{align*}
\end{proof}

\begin{remark}
\Cref{lemma:sse-poly} gives a tight analysis of the algorithm when $|W(Y)| = 2^{k} \cdot k^{-\epsilon}$, as can be seen by considering the particular size-$k$ multiset $Y = (2^0, 2^1, \dots, 2^{k - \epsilon \log k - 1}, 1, 1, \dots, 1)$.
\end{remark}

\noindent

Finally, \Cref{fig:alg:MeetInTheMiddle} shows the classic meet-in-the-middle algorithm for Subset Sum, by Horowitz and Sahni \cite{horowitz1974computing}, which serves as our baseline for comparison. 
\begin{lemma}
\label{mim-runtime}
    The worst-case runtime of \blackref{alg:MeetInTheMiddle} is $O(2^{n/2}).$
\end{lemma}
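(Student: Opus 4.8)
The plan is to recall the two phases of \blackref{alg:MeetInTheMiddle} and bound each separately. First I would split the input list $X=(x_1,\dots,x_n)$ into two halves $X_1$ and $X_2$, each of size at most $\lceil n/2\rceil$, and invoke \blackref{alg:SSE} on each half to obtain the sorted lists $L_{X_1}$ and $L_{X_2}$ of distinct sub-multiset sums. By \Cref{lem:sse} each of these two calls runs in time $O(2^{\lceil n/2\rceil}) = O(2^{n/2})$, and each resulting list has length at most $2^{\lceil n/2\rceil} = O(2^{n/2})$.

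The second phase is the ``meet'' step, which decides whether there is a pair $a\in W(X_1)$, $b\in W(X_2)$ with $a+b=t$. I would implement this with the standard two-pointer scan: keep an index $i$ into $L_{X_1}$ (initially at the smallest element) and an index $j$ into $L_{X_2}$ (initially at the largest element); repeatedly compare $L_{X_1}[i]+L_{X_2}[j]$ to $t$, outputting ``yes'' on equality, advancing $i$ toward larger values when the sum is below $t$, and moving $j$ toward smaller values when the sum exceeds $t$, and outputting ``no'' once either index leaves its list. Correctness is the usual monotonicity argument: if $L_{X_1}[i]+L_{X_2}[j]<t$ then $L_{X_1}[i]$ summed with \emph{any} not-yet-discarded element of $L_{X_2}$ (all of which are $\le L_{X_2}[j]$) is also $<t$, so $L_{X_1}[i]$ can safely be discarded, and symmetrically when the sum exceeds $t$. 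Since each iteration permanently discards one list element, the scan performs $O(|L_{X_1}|+|L_{X_2}|) = O(2^{n/2})$ iterations.

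It remains only to observe that each primitive operation used above costs $O(1)$ time in both memory models of \Cref{sec:model}: every sub-multiset sum of $X_1$ or $X_2$ is at most $\Sigma(X) \le nt$, hence fits in $O(\log(nt)) = O(\log t) = O(\word)$ bits, i.e.\ a constant number of machine words, so the additions and comparisons in the scan, as well as the merges inside \blackref{alg:SSE}, are unit-cost word-RAM (or circuit-RAM) operations. Summing the $O(2^{n/2})$ cost of the enumeration phase and the $O(2^{n/2})$ cost of the scan yields the claimed $O(2^{n/2})$ worst-case runtime.

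There is no real obstacle here, as this is a folklore bound; the only points requiring any care are verifying that the two-pointer scan is both correct (the monotonicity argument above) and runs in time linear in the list sizes, and confirming that all intermediate integers remain within a constant number of words so that the model's unit-cost arithmetic applies.
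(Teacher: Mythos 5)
Your proof is correct and follows the same approach as the paper's: the paper's argument is simply the one-liner that Lemma~\ref{lem:sse} gives $O(2^{n/2})$ time to build $L_A$ and $L_B$, and $|L_A|,|L_B|\le 2^{n/2}$ bounds the two-pointer scan. You have just spelled out the standard correctness argument for the scan and the word-size bookkeeping that the paper leaves implicit.
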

\begin{proof}
    This follows immediately from \Cref{lem:sse} as $|L_A|, |L_B| \leq 2^{n/2}$.
\end{proof}

\begin{figure}[t]
    \centering
    \begin{mdframed}
    Procedure $\term[\MeetInTheMiddle]{alg:MeetInTheMiddle}(X,\, t)$

    \begin{flushleft}
    {\bf Input:} An integer multiset $X = \{x_{1},\, x_{2},\, \dots,\, x_{n}\}$ and an integer target $t$.
    
    \vspace{.05in}
    {\bf Output:} ``yes'' if $(X,t)$ is a Subset Sum instance that has a solution, ``no'' otherwise.

    \begin{enumerate}
    \setcounter{enumi}{-1}
        \item\label{alg:MeetInTheMiddle:1}
        Fix any partition of $X = A \cup B$ such that $|A| = |B| = {n}/{2}$.
        
        \item\label{alg:MeetInTheMiddle:2}
        Enumerate the sorted lists $L_A$ and $L_B$ using \blackref{alg:SSE}.
        
        \item\label{alg:MeetInTheMiddle:3}
        Initialize two pointers at the smallest value in $L_A$ and the largest value in $L_B$. \\
        If these two values sum to the target $t$, then return ``yes''; \\
        if they sum to less than $t$, then increment the pointer into $L_A$ and repeat; \\
        and if they sum to more than $t$, then increment the pointer into $L_B$ and repeat. \\
        If either pointer goes past the end of its list, then return ``no''.

    \end{enumerate}
    \end{flushleft}
    \end{mdframed}
    \caption{The classic {\MeetInTheMiddle} algorithm \cite{horowitz1974computing}.
    \label{fig:alg:MeetInTheMiddle}}
\end{figure}


\section{\texorpdfstring{$\Omega(n^{0.5} / \log n)$}{}-Factor Speedup via Bit Packing}
\label{sec:bit-packing}

In this section we analyze our first and simplest algorithm, \blackref{alg:BitPacking} (see \Cref{fig:alg:BitPacking}).\footnote{As explained in \Cref{sec:model}, we assume $\word = \poly(n)$ throughout; however, our results for \blackref{alg:BitPacking} apply for superpolynomial word length as long as $\word = O(2^{n/3})$. We leave the extensional modifications to the proofs for this regime as an exercise for the interested reader.}

\begin{figure}[t]
    \centering
    \begin{mdframed}
    Procedure $\term[\BitPacking]{alg:BitPacking}(X,\, t)$

    \begin{flushleft}
    {\bf Input:} An integer multiset $X = \{x_{1},\, x_{2},\, \dots,\, x_{n}\}$ and an integer target $t$.
    
    \vspace{.05in}
    {\bf Setup:} Draw a random hash function $\bh_m$ with $m = 3\log\word$ (cf.~\Cref{eq:random-hash}).
    
    \begin{enumerate}
    \setcounter{enumi}{-1}
        \item Fix any partition of $X = A \cup B \cup D$ such that $|D| = \log \word$ and $|A| = |B| = \frac{n - |D|}{2}$.
        
        \item\label{alg:BitPacking:1}
        Create the set
        $W(D)$ and the sorted lists $L_{A}$, $L_{B}$ using \blackref{alg:SSE}.
        
        \item\label{alg:BitPacking:2} 
        Create lists $\bh_{m}(L_{A})$ and $\bh_{m}(L_{B})$ by applying $\bh_m$ element-wise to $L_{A}$ and $L_{B}$. \\
        Let $\bH_{A}$ be the list obtained from $\bh_{m}(L_{A})$ by packing $(\word / m)$ elements of $\bh_m(L_{A})$ \\
        into each $\word$-bit word of $\bH_{A}$ (preserving the sorted ordering) and likewise for $\bH_{B}$.
        
        \item\label{alg:BitPacking:3}
        For each $t' \in (t - W(D))$:
        
        \item\label{alg:BitPacking:4}
        \qquad Initialize indices $i := 0$ and $j := |\bH_{B}|-1$. While $i < |\bH_{A}|$ and $j \geq 0$:
        
        \item\label{alg:BitPacking:5}
        \qquad\qquad If the indexed words $\bH_{A}[i]$ and $\bH_{B}[j]$ contain a pair of hashes $(\bh_{m}(a'),\, \bh_{m}(b'))$ \\
        \qquad\qquad such that $\bh_{m}(a') + \bh_{m}(b') \in \bh_{m}(t') - \{0,\, 1 \} \pmod{2^{m}}$, \\
        \qquad\qquad use \blackref{alg:MeetInTheMiddle} to search for a solution \\
        \qquad\qquad $(a,\, b) \in L_{A}[i\word / m: (i+1)\word / m - 1] \times L_{B}[j\word/m: (j+1)\word / m -1]$ \\
        \qquad\qquad such that $a + b = t'$. Halt and return ``yes'' if a solution is found.
        
        \item\label{alg:BitPacking:6}
        \qquad\qquad If $L_{A}[(i+1)\word / m - 1] + L_{B}[j\word / m] < t'$, increment $i \gets i + 1$. \\
        \qquad\qquad Otherwise, decrement $j \gets j - 1$.
        
        \item\label{alg:BitPacking:7}
        Return ``no'' (i.e., no solution was found for any $t' \in (t - W(D))$).
    \end{enumerate}
    \end{flushleft}
    \end{mdframed}
    \caption{The {\BitPacking} algorithm.
    \label{fig:alg:BitPacking}}
\end{figure}

\begin{theorem}
\label{thm:bit-packing}
\blackref{alg:BitPacking} is a zero-error randomized algorithm for the Subset Sum problem with expected runtime $O(2^{n / 2} \cdot \word^{-1 / 2} \cdot \log \word) \leq O(2^{n / 2} \cdot n^{-1 / 2} \cdot \log n)$ in the circuit RAM model.\footnote{By halting \blackref{alg:BitPacking} and returning ``no'' if its runtime exceeds $C \cdot 2^{n/2} \cdot \word^{-1/2} \cdot \log \word$ for a large enough constant $C > 0$, we get an one-sided error algorithm with success probability $\geq 3/4$, as claimed in \Cref{sec:techniques}.}
\end{theorem}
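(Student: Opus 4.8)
The plan is to prove correctness and the expected runtime bound separately, following the structure of \blackref{alg:BitPacking}.

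\textbf{Correctness.} First I would argue zero-error (no false positives, no false negatives). For ``yes'' outputs: the algorithm only returns ``yes'' when \blackref{alg:MeetInTheMiddle} finds an actual pair $(a,b)$ with $a+b=t'$ for some $t' \in (t - W(D))$, which together with the corresponding $d \in W(D)$ (where $t' = t - d$) and the subsets realizing $a$ and $b$ yields a genuine subset of $X$ summing to $t$. So there are no false positives regardless of the random choice of $\bh_m$. For ``no'' outputs: suppose a solution $S \subseteq X$ exists, so $\Sigma(S \cap A) + \Sigma(S \cap B) + \Sigma(S \cap D) = t$. Let $a^\star = \Sigma(S\cap A) \in W(A)$, $b^\star = \Sigma(S \cap B) \in W(B)$, $d^\star = \Sigma(S \cap D) \in W(D)$, and $t^\star = t - d^\star \in (t - W(D))$. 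In the iteration for $t' = t^\star$, I need to argue that the two-pointer scan over packed words never skips past the word-block pair containing $a^\star$ and $b^\star$ without examining it. This requires a ``monotone staircase'' invariant analogous to the correctness of \blackref{alg:MeetInTheMiddle}: using pseudolinearity of $\bh_m$ (\Cref{lem:hash}), whenever the true target pair lies in the current pair of blocks, the block-level hash test in Step~\ref{alg:BitPacking:5} triggers (since $\bh_m(a^\star) + \bh_m(b^\star) \in \bh_m(t^\star) - \{0,1\} \pmod{2^m}$), invoking the exact \blackref{alg:MeetInTheMiddle} subroutine which then certainly finds $(a^\star,b^\star)$. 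The pointer-advancement rule in Step~\ref{alg:BitPacking:6} compares the largest element of block $i$ in $L_A$ with the smallest of block $j$ in $L_B$; I'd verify that this rule never advances past the correct block-pair before visiting it, exactly as in the unpacked case.

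\textbf{Runtime.} The setup and Step~\ref{alg:BitPacking:1} cost $O(2^{|A|} + 2^{|B|} + 2^{|D|}) = O(2^{(n-\log\word)/2}) = O(2^{n/2}/\word^{1/2})$ by \Cref{lem:sse}; Step~\ref{alg:BitPacking:2} applies $\bh_m$ element-wise in $\tO(1)$ per element and packs, also $\tO(2^{n/2}/\word^{1/2})$. The main loop runs $|t - W(D)| \le 2^{|D|} = \word$ iterations of the outer loop. For a fixed $t'$, the two-pointer scan of Step~\ref{alg:BitPacking:4} advances one of two pointers each step, each pointer ranging over $|\bH_A|, |\bH_B| \le 2^{|A|} \cdot m/\word = O(2^{(n-\log\word)/2} \cdot m / \word) = O(2^{n/2} \cdot m \cdot \word^{-3/2})$ words, so $O(2^{n/2} \cdot m \cdot \word^{-3/2})$ pointer steps, times $\word$ outer iterations gives $O(2^{n/2} \cdot m \cdot \word^{-1/2}) = \tO(2^{n/2}\cdot\word^{-1/2})$ since $m = 3\log\word$. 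Each pointer step does: (a) a block-level hash-compatibility test on a pair of $\word$-bit words — packing $\word/m$ hashes of $m$ bits each, checking whether some pair sums into $\bh_m(t') - \{0,1\} \pmod{2^m}$ is an $\mathsf{AC}^0$ operation on $O(1)$ words (compute all pairwise sums in parallel via shifts and compare), hence $O(1)$ time in circuit RAM; (b) the comparison in Step~\ref{alg:BitPacking:6}, which is $O(1)$. It remains to bound the total cost of the \blackref{alg:MeetInTheMiddle} calls in Step~\ref{alg:BitPacking:5}, each costing $O(\word/m)$ (meet-in-the-middle on two lists of length $\word/m$). The key point — and the main obstacle — is that such a call is only triggered when the hash test passes, which by pseudouniversality (\Cref{lem:hash}) happens with low probability for ``spurious'' block pairs.

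\textbf{Bounding the spurious recursive calls (the hard part).} I would argue that over a fixed outer iteration $t'$, for each pointer-step the probability that the block-pair hash test passes is $O((\word/m)^2 \cdot 2^{-m}) = O(\word^2/\word^3) = O(1/\word)$ up to $\polylog$ factors: there are $(\word/m)^2$ candidate pairs $(a',b')$ across the two blocks, and for each one that does \emph{not} satisfy $a'+b'=t'$ (or $\in\{t', t'{+}1\}$), the event $\bh_m(a')+\bh_m(b') \in \bh_m(t')-\{0,1\}\pmod{2^m}$ has probability $O(2^{-m})$ by pseudouniversality/pseudolinearity; so a union bound gives failure probability $O((\word/m)^2 2^{-m})$ per block-pair, plus the (at most one) ``real'' block pair per $t'$ that legitimately triggers a call. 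Thus the expected number of triggered \blackref{alg:MeetInTheMiddle} calls across the whole algorithm is $O(\word) \cdot O(2^{n/2} m \word^{-3/2}) \cdot O(\word^2 m^{-2} 2^{-m}) + O(\word) = \tO(2^{n/2}\word^{-1/2}\cdot\word^{-1/2}) + O(\word)$, wait — recomputing: (outer iters) $\cdot$ (pointer steps per iter) $\cdot$ (prob per step) $= \word \cdot O(2^{n/2}m\word^{-3/2}) \cdot O(\word^2 m^{-2}2^{-m})$, and with $2^{-m} = \word^{-3}$ this is $O(2^{n/2} \cdot \word^{1/2} \cdot \word^{-1} \cdot m^{-1}) = O(2^{n/2}\word^{-1/2}/\log\word)$ expected calls, each of cost $O(\word/m)$, for a total of $\tO(2^{n/2}\word^{-1/2})$ — dominated by the other terms, or at worst matching them. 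One subtlety to handle carefully: the hash function $\bh_m$ is fixed once at setup, so the $L_A, L_B$ lists (and hence which values land in which block) are \emph{not} independent of $\bh_m$; but pseudouniversality of \Cref{lem:hash} holds for any fixed pair $y \neq z$ of $\word$-bit integers with the probability over the random $\bu$, so conditioning on the input and the deterministic partition, the bound applies to each fixed pair $(a',b')$, and linearity of expectation over the (at most $2^{n} \cdot \word^{-1}$ many) candidate pairs arising across all iterations gives the claimed expected bound without any independence assumption. Finally, summing: $O(2^{n/2}\word^{-1/2})$ (setup) $+ \tO(2^{n/2}\word^{-1/2})$ (scanning) $+ \tO(2^{n/2}\word^{-1/2})$ (expected recursive calls) $= O(2^{n/2}\word^{-1/2}\log\word)$, and since $\word = \Omega(n)$ by \Cref{obs:parameters}, this is $\le O(2^{n/2} n^{-1/2}\log n)$, as claimed. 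The footnoted one-sided-error version follows by truncating at $C \cdot 2^{n/2}\word^{-1/2}\log\word$ steps and invoking Markov's inequality.
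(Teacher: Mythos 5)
Your proposal is correct and follows essentially the same approach as the paper's proof: bound the setup, outer/inner loop counts, and the cost of genuine versus spurious \texttt{Meet-in-the-Middle} calls, using pseudolinearity for correctness and pseudouniversality plus a union bound for the expected collision cost. One small point: your stated worry that ``$L_A, L_B$ (and hence which values land in which block) are not independent of $\bh_m$'' is factually off — $L_A$ and $L_B$ are produced by \texttt{Sorted-Sum-Enumeration} \emph{before} $\bh_m$ is applied, and Step~\ref{alg:BitPacking:6}'s pointer-advance test compares unhashed values, so (as the paper observes) the sequence of block pairs scanned is a deterministic function of $L_A$, $L_B$, $t'$ and is unaffected by $\bh_m$; your fallback linearity-of-expectation argument is more conservative than necessary but still valid. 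Also note your penultimate arithmetic step has two compensating slips ($\word^{3/2}\cdot\word^{-3} = \word^{-3/2}$, not $\word^{-1/2}$, and you then drop the $\times O(\word/m)$ per-call cost), but the corrected computation still yields a bound dominated by the other terms, so the conclusion stands.
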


\blackref{alg:BitPacking} works by packing $\word/m$ hashed values into a single word via our pseudolinear hash function $\bh_{m}$, for $m = 3\log\word$, while preserving enough information to run \blackref{alg:MeetInTheMiddle} on the lists of hashed and packed values. This allows us to compare two length-$(\word / m)$ sublists of $L_{A}$ and $L_{B}$ in constant expected time in the circuit RAM model, since each hashed and packed sublist fits into a constant number of words, instead of time $O(\word / m)$ like the original \blackref{alg:MeetInTheMiddle} algorithm.
So far, this is essentially the approach taken by \cite{baran2005subquadratic} in their bit-packing algorithm for 3SUM. However, in our context the $O(\word / m)$ speedup described above is offset by the following issue: if we follow the original \blackref{alg:MeetInTheMiddle} setup and take $|A| = |B| = \frac{n}{2}$, the lists $L_{A}$ and $L_{B}$ may have length $\Omega(2^{n/2})$, increasing the runtime.

To deal with this, we set aside a small set $D \subseteq X$ of $|D| = \log \word$ many input elements and solve the remaining subinstance $X \setminus D$ for each shifted target $t' \in (t - W(D))$. Removing the elements in $D$ shortens the lists $L_A$ and $L_B$, which are now formed from the elements in $X \setminus D$, and allows us to enumerate and pack them quickly.
Balancing the overhead of solving each of these subinstances against the savings described earlier, we get the claimed speedup $\Omega(\word^{1/2} / \log \word) \geq \Omega(n^{1/2} / \log n)$.

\begin{proof}[Proof of Correctness for \blackref{alg:BitPacking}]
The algorithm outputs ``yes'' only if a triple $(a,\, b,\, t') \in L_{A} \times L_{B} \times (t - W(D))$ with $a + b = t'$ is found, so it never returns a false positive.

It remains to show that for any $t' \in (t - W(D))$, we are guaranteed to find a shifted solution $(a,\, b) \in L_{A} \times L_{B}$ such that $a + b = t'$, if one exists.
Without loss of generality, we consider two sublists $L_{A}[i\word / m: (i+1)\word / m - 1]$ and $L_{B}[j\word / m: (j+1)\word / m -1]$ that contain such a shifted solution $(a,\, b)$ and correspond to two packed words $\bH_{A}[i]$ and $\bH_{B}[j]$ for some indices $i$ and $j$.
The existence of such a shifted solution $a + b = t'$ combined with the condition in Line~\ref{alg:BitPacking:6} ensures that the algorithm will not step past either the packed word $\bH_{A}[i]$ or $\bH_{B}[j]$ before reaching the other one, so the algorithm will compare these two packed words at some point.
Following \Cref{lem:hash}, we have $\bh_{m}(a) + \bh_{m}(b) \in \bh_{m}(t') - \{0,\, 1 \} \pmod{2^{m}}$, satisfying the condition in Line~\ref{alg:BitPacking:5}.
Thus we are guaranteed to find the shifted solution $(a,\, b)$ by running \blackref{alg:MeetInTheMiddle} to check all pairs in $L_{A}[i\word / m: (i+1)\word / m - 1] \times L_{B}[j\word/m: (j+1)\word / m -1]$.
\end{proof}

\begin{proof}[Proof of Runtime for \blackref{alg:BitPacking}]
Recalling the assumption $\word = \poly(n)$:
\begin{flushleft}
\begin{itemize}
    \item Line~\ref{alg:BitPacking:1} takes time $O(2^{|A|} + 2^{|B|} + 2^{|D|}) = O(2^{n / 2} \cdot \word^{-1 / 2} + \word) = O(2^{n / 2} \cdot \word^{-1 / 2})$ by \Cref{lem:sse}, for the choices of $|A|$, $|B|$, and $|D|$.
    
    \item Line~\ref{alg:BitPacking:2} takes time $(|L_{A}| + |L_{B}|) \cdot O(\log \word) = O(2^{n / 2} \cdot \word^{-1 / 2} \cdot \log \word)$, where $O(\log \word)$ bounds the time for each evaluation of the hash function $\bh_m$.\ignore{the time to multiply two words.}
    
    \item Line~\ref{alg:BitPacking:3} (the outer loop) is performed for at most $|W(D)| \leq 2^{|D|} = \word$ iterations.
    
    \item Line~\ref{alg:BitPacking:4} (the inner loop) is performed for at most $(|L_{A}| + |L_{B}|) \cdot \frac{1}{\word / m} = O(2^{n / 2} \cdot \word^{-3 / 2} \cdot \log \word)$ iterations, since each iteration either increments $i \gets i + 1$ or decrements $j \gets j - 1$.
    
    \item Line~\ref{alg:BitPacking:5}:
    (i)~Checking whether the ``If'' condition holds for any two words $\bH_{A}[i]$ and $\bH_{B}[j]$ requires a single $\mathsf{AC}^0$ operation on three words, taking constant time in the circuit RAM model. (ii)~Finding a solution $(a,\, b)$ using \blackref{alg:MeetInTheMiddle} on the two length-$(\word/m)$ sublists takes time $O(\word / m) = O(\word / \log \word)$.
    
    The ``If'' test is passed (i)~at most once for a correct solution and (ii)~each time we encounter a hash collision. 
    Note that the sequence of pairs of words $(\bH_A[i],\, \bH_B[j])$ we compare is completely determined by $L_A$ and $L_B$ and is unaffected by the outcome of the random hash function $\bh_{m}$.
    Thus by \Cref{lem:hash}, each of the $(\word / m)^{2}$ hash pairs in $(\bH_A[i],\, \bH_B[j])$ incurs a collision with probability $O(2^{-m})$. By a union bound, the expected time taken for Line~\ref{alg:BitPacking:5} because of hash collisions is at most $(\word / m)^{2} \cdot O(2^{-m}) \cdot O(\word / m) = O(1 / \log^{3}(\word)) = o_n(1)$, since $m = 3\log \word$ and $\word = \Omega(n)$.
    
    \item Line~\ref{alg:BitPacking:6} clearly takes time $O_{n}(1)$.
\end{itemize}
\end{flushleft}
Consequently, \blackref{alg:BitPacking} has expected runtime
\begin{align*}
    \mathrm{TIME}(n,\, \word)
    & ~=~ O(2^{n / 2} \cdot \word^{-1 / 2} + \word) + O(2^{n / 2} \cdot \word^{-1 / 2} \cdot \log \word)
    && \text{Lines~\ref{alg:BitPacking:1} and \ref{alg:BitPacking:2}} \\
    & \phantom{~=~} + 1 \cdot \big(O_{n}(1) + O(\word / \log \word) + O_{n}(1)\big)
    && \text{Lines~\ref{alg:BitPacking:3} to \ref{alg:BitPacking:6}; solution} \\
    & \phantom{~=~} + \word \cdot O(2^{n / 2} \cdot \word^{-3 / 2} \cdot \log \word) \cdot \big(O_{n}(1) + o_{n}(1) + O_{n}(1)\big)
    && \text{Lines~\ref{alg:BitPacking:3} to \ref{alg:BitPacking:6}; collisions} \\
    & ~=~ O(2^{n / 2} \cdot \word^{-1 / 2} \cdot \log \word). && \qedhere
\end{align*}
\end{proof}

\begin{observation}[Adapting \blackref{alg:BitPacking} to Word RAM]
\label{obs:bit-packing-word-RAM}
In the word RAM model, multiplication and evaluation of our pseudolinear hash function $\bh_{m}$ each take constant time. Hence, for any word length $\word = \Omega(n)$ we can get a variant of \blackref{alg:BitPacking} with expected runtime $O(2^{n/2} \cdot n^{-1/2} \cdot \log n)$, essentially by performing \blackref{alg:BitPacking} as if the word length were $\word' := 0.1n$. By a similar conversion from \cite{baran2005subquadratic}:

\vspace{.05in}
\noindent
Run \blackref{alg:BitPacking} as if the word length were $\word' = 0.1n$, which results in the modified parameters $m' = 3\log \word'$, $|D'| = \log\word'$, and $|A'| = |B'| = (n - |D'|) / 2$ etc., except for two modifications:
\begin{enumerate}
    \item Line~\ref{alg:BitPacking:2} packs $q' := \min\{\word',\, \word\} / m' = \Theta(\frac{n}{\log n})$ many $m'$-bit hashes into each word of $\bH_{A'}$, $\bH_{B'}$, so every $(\word' / m')$ hashes are stored in $\word' / (m' q') = \Theta_n(1)$ words rather than a single word.
    
    \item Before Line~\ref{alg:BitPacking:5}, create a table that memoizes the result of every comparison of two $\word'$-bit strings in time $(2^{\word'})^2 \cdot \poly(\word') = O(2^{0.21n})$. This table can then be accessed via a $2\word' / (m' q') = \Theta_n(1)$-word index in constant time. Line~\ref{alg:BitPacking:5} replaces the constant-time $\mathsf{AC}^0$ circuit RAM operation on two $\word'$-bit strings with a constant-time lookup into this table.
\end{enumerate}

\vspace{.05in}
\noindent
Compared with running \blackref{alg:BitPacking} itself for $\word' = 0.1n$, the only difference of this variant is that $\bH_{A'}$ and $\bH_{B'}$ are stored in $\Theta_n(1)$ times as many words, so the correctness is easy to check. The expected time taken for the collisions in each execution of Line~\ref{alg:BitPacking:5} is $(q')^{2} \cdot O(2^{-m'}) \cdot O(q') = o_n(1)$. Thus, the overall runtime is as claimed:
\begin{align*}
    \underbrace{O(2^{|A'|} + 2^{|B'|} + 2^{|D'|} + 2^{0.21n})}_{\text{Lines~\ref{alg:BitPacking:1} and \ref{alg:BitPacking:2}}}
    ~+~ \underbrace{O(q') + 2^{|D'|} \cdot O((2^{|A'|} + 2^{|B'|}) / q')}_{\text{Lines~\ref{alg:BitPacking:3} to \ref{alg:BitPacking:6}; solution versus collisions}}
    ~=~ O(2^{n/2} \cdot n^{-1/2} \cdot \log n).
\end{align*}
\end{observation}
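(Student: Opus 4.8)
The plan is to reduce the word RAM case to the circuit RAM analysis of \blackref{alg:BitPacking} already carried out in the proof of \Cref{thm:bit-packing}. The only step of \blackref{alg:BitPacking} that is not already a unit-cost word RAM operation is the ``If'' test in Line~\ref{alg:BitPacking:5}: deciding, for two packed words $\bH_A[i]$ and $\bH_B[j]$, whether they contain a pair of hashes $(\bh_{m'}(a'), \bh_{m'}(b'))$ with $\bh_{m'}(a') + \bh_{m'}(b') \in \bh_{m'}(t') - \{0,1\} \pmod{2^{m'}}$. This is an $\mathsf{AC}^0$ operation on a constant number of words, hence free in circuit RAM but not in word RAM; every other line uses only multiplication, bit shifts, and comparisons, which are unit-cost in word RAM (and $\bh_{m'}$ itself is evaluated with one multiplication and one shift). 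So the entire task is to simulate this one $\mathsf{AC}^0$ test cheaply.

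First I would run \blackref{alg:BitPacking} verbatim, but with every word-length-dependent parameter recomputed from an \emph{artificial} effective word length $\word' := 0.1n$ rather than the true word length $\word$: that is, $m' := 3\log\word'$, $|D'| := \log\word'$, $|A'| = |B'| := (n-|D'|)/2$, and a sublist/chunk size of $\word'/m' = \Theta(n/\log n)$ hashes. The choice $\word' = 0.1n$ is exactly what makes the two requirements compatible: it is large enough that the packing factor $\word'/m' = \Theta(n/\log n)$ still yields the full $\Theta(\word'^{1/2}/\log\word') = \Theta(n^{1/2}/\log n)$ speedup, and small enough that a table indexed by \emph{every pair} of $\word'$-bit strings has only $(2^{\word'})^2 = 2^{0.2n}$ entries. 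To accommodate a true word length $\word$ that differs from $\word'$, I would store each logical chunk of $\word'/m'$ packed hashes across $\word'/(m'q') = \Theta_n(1)$ actual machine words, where $q' := \min\{\word',\word\}/m' = \Theta(n/\log n)$ is the number of $m'$-bit hashes that fit in one real word; when $\word \ge \word'$ this is just one word (with surplus bits unused), and when $\word < \word' = \Theta(n)$ it is a constant number of words. In every case a logical chunk occupies $\Theta_n(1)$ words and its layout is independent of the random hash, just as in the original argument.

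Next I would precompute, before the loop of Line~\ref{alg:BitPacking:5} is entered, a table indexed by an arbitrary pair of $\word'$-bit strings (equivalently, by the $\Theta_n(1)$-word concatenation of two logical chunks), storing for each pair whether the Line~\ref{alg:BitPacking:5} ``If'' condition holds — computed by interpreting each string as $\word'/m'$ packed $m'$-bit hashes and checking all $(\word'/m')^2$ candidate hash pairs directly. This costs $(2^{\word'})^2 \cdot \poly(\word') = O(2^{0.21n})$ time, after which each ``If'' test in Line~\ref{alg:BitPacking:5} becomes a constant-time lookup via a $\Theta_n(1)$-word index. The resulting algorithm is identical to \blackref{alg:BitPacking} run at word length $\word'$ except that $\bH_{A'}$ and $\bH_{B'}$ are spread over $\Theta_n(1)$ times as many machine words, so its correctness follows immediately from the correctness proof of \blackref{alg:BitPacking}.

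Finally I would rerun the runtime accounting of \Cref{thm:bit-packing} with $\word$ replaced by $\word'$ everywhere, adding the one-time $O(2^{0.21n})$ table-construction cost. The collision bound is unchanged up to constants: the (hash-oblivious) sequence of chunk comparisons is the same, each comparison touches $(\word'/m')^2 = (q')^2\cdot\Theta_n(1)$ hash pairs, each colliding with probability $O(2^{-m'}) = O(\word'^{-3})$, and each collision costs $O(\word'/m')$ to resolve via \blackref{alg:MeetInTheMiddle}, for an expected $O((\word'/m')^3/\word'^3) = O(1/\log^3 n) = o_n(1)$ per comparison. Summing exactly as in \Cref{thm:bit-packing} gives expected time $O(2^{n/2}\cdot\word'^{-1/2}\cdot\log\word') + O(2^{0.21n}) = O(2^{n/2}\cdot n^{-1/2}\cdot\log n)$, as claimed. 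I expect the only point needing care to be the bookkeeping around the $\min\{\word',\word\}$ distinction — in particular, confirming that a $\Theta_n(1)$-word index genuinely supports a constant-time table lookup in the word RAM model, and that $\word' = 0.1n$ keeps both $2^{2\word'}$ comfortably below $2^{n/2}/\sqrt{n}$ and $\word'/m'$ at $\Theta(n/\log n)$.
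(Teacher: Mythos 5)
Your proposal is correct and takes essentially the same route as the paper's proof: the same effective word length $\word' = 0.1n$, the same packing factor $q' = \min\{\word',\word\}/m'$ with logical chunks spread across $\Theta_n(1)$ words, the same $O(2^{0.21n})$ table memoizing the Line~\ref{alg:BitPacking:5} $\mathsf{AC}^0$ test, and the same collision and runtime accounting carried over from \Cref{thm:bit-packing}. The only difference is presentational — you motivate the choice $\word' = 0.1n$ more explicitly and flag the multi-word-index lookup as a point to verify, which the paper states without comment.
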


\newcommand{\bcalQ}{\boldsymbol{\calQ}}
\newcommand{\ResidueCoupleList}{\text{\tt Residue-Couple-List}}

\section{\texorpdfstring{$\Omega(n^{0.01})$}{}-Factor Speedup via Orthogonal Vectors and the Representation Method}
\label{sec:memo-ov}

Our second algorithm, \blackref{alg:memo-ov} (see \Cref{fig:alg:memo-ov}), achieves a speedup of $\Omega(\word^\gamma) \geq \Omega(n^{\gamma})$ over \blackref{alg:MeetInTheMiddle} for a constant $\gamma > 0.01$.\footnote{Similar to \Cref{sec:bit-packing}, while we investigate \blackref{alg:memo-ov} in the regime $\word = \poly(n)$, analogous results apply for word length $\word$ as large as $O(2^{c n})$ for an absolute constant $c > 0$.} 
While this is a smaller speedup than that achieved by the \blackref{alg:BitPacking} algorithm, the \blackref{alg:memo-ov} algorithm does not use ``bit tricks''.
Instead, \blackref{alg:memo-ov} combines \blackref{alg:MeetInTheMiddle} with the {\em representation method} of Howgrave-Graham and Joux \cite{howgrave2010new}, so as to reduce the Subset Sum problem to many small instances of the Orthogonal Vectors (OV) problem: namely, instances with $O(\word/ \log\word)$ many binary vectors of dimension $\Theta(\log \word) = \Theta(\log n)$. Such instances of OV can be solved quickly through a single $\mathsf{AC}^0$ word operation in the circuit RAM model (or through constantly many operations in the word RAM model after an initial memoization step), which leads to our speedup.

The high-level idea behind our algorithm is to partition the input $X = A \cup B \cup C$ into two large subsets\footnote{Technically, sub-multisets. We make the same simplification hereafter.} $A$ and $B$ and one small subset $C$ of size $|C| = \Theta(\log\word)$, and to run \blackref{alg:MeetInTheMiddle} on two lists formed from subsets of $(A \cup C)$ and $(B \cup C)$. We describe in more detail below just how these lists are formed, but roughly speaking they are created by modifying the representation method (in a way somewhat similar to the algorithm of Nederlof and Wegryzcki \cite{NederlofW21}) to ensure that the lists are not too long. Further, to eliminate the false positives due to overlapping subsets of $C$, we exploit a fast implementation of a function that computes a batch of small instances of Orthogonal Vectors. (To see the relevance of the Orthogonal Vectors problem in this context, note that a solution to an instance of OV on $k$-dimensional Boolean vectors corresponds to two disjoint subsets of the set $[k]$.)
Before giving more details we provide some helpful notation:

\vspace{.1in}
\noindent
\textbf{Notation and setup.}
We write $\calQ(C) := \{T \mid T \subseteq C ~\text{and}~ |T| \leq \frac{|C|}{4}\}$ to denote the collection of all quartersets for $C$. While $\calQ(C)$ is useful for intuition, in fact, as explained below, our algorithm will use a slight variant of it: we define $\calQ^{+\epsilon_2}(C)$ to be the collection of all subsets of size at most $(1+\epsilon_2)\frac{|C|}{4}$, where $\epsilon_2 > 0$ is a small constant that is fixed in the detailed description of the algorithm.

We denote by \term[\texttt{OV}]{function:OV} the Boolean function on $2\word = \poly(n)$ many input bits that takes as input two lists $(x^1, \dots, x^{\word / |C|})$, $(y^1, \dots, y^{\word / |C|})$ of (at most) $\word / |C|$ many binary vectors each, where each binary vector $x^i,\, y^j \in \{0, 1\}^{|C|}$, and returns $1$ if and only if the two lists contain an orthogonal pair, i.e., a pair $(i,\, j)$ such that $x^i_k \cdot y^j_k = 0$ for every $1 \leq k \leq |C|$. It is easy to see that \blackref{function:OV} is an $\mathsf{AC}^0$ operation on two $\word$-bit words, and thus it takes constant time in the circuit RAM model.

\vskip 0.1in
We return to the intuitive overview of our approach.
At a high level, the representation method works by first expanding the search space of possible solutions; for our algorithm this is done by writing down the list $\calQ^{+\epsilon_2}(C)$ of all ``near-quartersets''. If a certain Subset Sum solution $S \subseteq X$ satisfies $|S \cap C| \leq \frac{|C|}{2}$, the restricted solution $S \cap C = Q_1 \cup Q_2$ is the union of {\em many} different pairs of disjoint 
quartersets, namely $Q_1, Q_2 \in \calQ(C)$ with $Q_1 \cap Q_2 = \emptyset$.
In fact, we work on the list of near-quartersets, $\calQ^{+\epsilon_2}(C)$, rather than $\calQ(C)$, so as to cover all disjoint pairs $(Q_1,\, Q_2)$ with $|Q_1| + |Q_2| \approx |C|/2$.

We then \emph{filter} the list $\calQ^{+\epsilon_2}(C)$: given a random prime modulus $\bp$ we extract those near-quartersets that fall into two particular residue classes that sum to $\Sigma(S \cap C) \pmod{\bp}$. With appropriate preprocessing checks and parameter settings, this significantly reduces the search space while ensuring that we retain some disjoint pair of near-quartersets $Q_1,\, Q_2 \in \calQ^{+\epsilon_2}(C)$ that recover the restricted solution, $Q_1 \cup Q_2 = S \cap C$ with $Q_1 \cap Q_2 = \emptyset$.

Finally, we use a modified \blackref{alg:MeetInTheMiddle} procedure to search for a solution, i.e., two sum-subset couples $(a,\, Q_1) \in (L_{A} \times \calQ^{+\epsilon_2}(C))$, $(b,\, Q_2) \times (L_{B} \times \calQ^{+\epsilon_2}(C))$
with $a + \Sigma(Q_1) + b + \Sigma(Q_2) = \Sigma(S \cap A) + \Sigma(S \cap B) + \Sigma(S \cap C) = \Sigma(S) = t$, verifying $Q_1 \cap Q_2 = \emptyset$ via the Boolean function \blackref{function:OV}.

\begin{figure}[t]
    \centering
    \begin{mdframed}
    Procedure $\term[\RepresentationOV]{alg:memo-ov}(X,\, t)$

    \begin{flushleft}
    {\bf Input:} An integer multiset $X = \{x_{1},\, x_{2},\, \dots,\, x_{n}\}$ and an integer target $t$.
    
    \vspace{.05in}
    {\bf Setup:} Constants $\epsilon_1 \approx 0.1579$ and $\epsilon_2 \approx 0.2427$, the solutions to \Cref{eq:eps2,eq:eps3}. \\
    \white{\bf Setup:} Parameters $\beta \approx 1.1186$, $\lambda \approx 0.0202$, $s(\word)$, and $k(n,\, \word)$, all to be specified in the proof. \\
    \white{\bf Setup:} A uniform random prime modulus $\bp \sim \P[\word^{1 + \beta / 2}: 2\word^{1 + \beta / 2}]$.
    \begin{enumerate}
    \setcounter{enumi}{-1}
        \item\label{alg:memo-ov:0} 
        Fix any partition of $X = A \cup B \cup C$ such that $|C| = \beta\log(\frac{\word}{\beta\log\word})$ and $|A| = |B| = \frac{n - |C|}{2}$.
        
        \item\label{alg:memo-ov:1}
        Use \Cref{lemma:input2} to solve $(X, t)$ if there exists a solution $S \subseteq X$ with $|S \cap C| \notin (1 \pm \epsilon_1)\frac{|C|}{2}$.
        
        \item\label{alg:memo-ov:2}
        Use \Cref{lemma:input1} to solve $(X, t)$ if $|W(C)| \leq 2^{|C|} \cdot \word^{-\lambda}$.
        
        \item\label{alg:memo-ov:3}
        Create the sorted lists $L_A$ and $L_B$ using \blackref{alg:SSE}.
        Let $\{\bL_{A,\, i}\}_{i \in [\bp]}$ be the sorted sublists given by $\bL_{A,\, i} := \{a \in L_{A} \mid a \equiv_{\bp} i\}$ and likewise for $\{\bL_{B,\, i}\}_{i \in [\bp]}$. \\
        Create the collection $\calQ^{+\epsilon_2}(C) = \{ Q \mid Q \subseteq C \text{ and } |Q| \leq (1 + \epsilon_2)\frac{|C|}{4}\}$.
        
        \item\label{alg:memo-ov:5}
        Repeat Lines~\ref{alg:memo-ov:6} to \ref{alg:memo-ov:7} either $s(\word)$ times or until a total of $k(n,\, \word)$ many sum-subset couples have been created in Line~\ref{alg:memo-ov:6} (whichever comes first):
        
        \item\label{alg:memo-ov:6}
        \qquad Sample a uniform random residue $\boldr \sim [\bp]$. Then use \Cref{lemma:sse-quarter} and the subroutine \\
        \qquad \blackref{alg:residue-couple-list} to create the sorted lists $\bR_{A,\, \boldr}$ and $\bR_{B,\, \boldr}$: \\
        \qquad $\bR_{A,\, \boldr} = \{(a' := a + \Sigma(Q_1),\, Q_1) \mid (a,\, Q_1) \in L_{A} \times \calQ^{+\epsilon_2}(C) ~\text{with}~ a' \equiv_{\bp} \boldr \}$ and \\
        \qquad $\bR_{B,\, \boldr} = \{(b' := b + \Sigma(Q_2),\, Q_2) \mid (b,\, Q_2) \in L_{B} \times \calQ^{+\epsilon_2}(C) ~\text{with}~ b' \equiv_{\bp} (t - \boldr) \}$. \\
        \algcomment{In fact, $\bR_{A,\, \boldr}$ and $\bR_{B,\, \boldr}$ are stored in a succinct format by \colorref{OliveGreen}{alg:residue-couple-list}.}
        
        \item\label{alg:memo-ov:7}
        \qquad Use \Cref{lemma:sse-quarter} (based on \blackref{alg:MeetInTheMiddle} and the Boolean function \blackref{function:OV}) \\
        \qquad to search the sorted lists $\bR_{A,\, \boldr}$ and $\bR_{B,\, \boldr}$ for a solution $(a',\, Q_1)$, $(b',\, Q_2)$ such that \\
        \qquad $a' + b' = t$ and $Q_1 \cap Q_2 = \emptyset$. Halt and return ``yes'' if a solution is found.
        
        \item\label{alg:memo-ov:8} Return ``no'' (i.e., no solution was found).
        \algcomment{Possibly a false negative.}
    \end{enumerate}
    \end{flushleft}
    \end{mdframed}
    \caption{The {\RepresentationOV} algorithm.
    \label{fig:alg:memo-ov}}
\end{figure}

Now we are ready to state and prove our result about \blackref{alg:memo-ov}:

\begin{theorem}
\label{thm:memo-ov}
\blackref{alg:memo-ov} is a one-sided error randomized algorithm for the Subset Sum problem (with no false positives) with worst-case runtime $O(2^{n/2} \cdot \word^{-\gamma}) \leq O(2^{n/2} \cdot n^{-\gamma})$, for some constant $\gamma > 0.01$, and a 
success probability at least $1/3$ in the circuit RAM model.
\end{theorem}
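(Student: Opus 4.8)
The plan is to prove the theorem in three parts---soundness (no false positives), a constant lower bound on the success probability, and the runtime bound---after first developing the auxiliary results \Cref{lemma:input1,lemma:input2,lemma:sse-quarter} and the subroutine \blackref{alg:residue-couple-list} that are forward-referenced in \Cref{fig:alg:memo-ov}; what follows explains how the pieces combine. Soundness is the easy part: Lines~\ref{alg:memo-ov:1}--\ref{alg:memo-ov:2} will call sound subroutines, and in the main branch the algorithm returns ``yes'' in Line~\ref{alg:memo-ov:7} only after exhibiting $(a',Q_1)\in\bR_{A,\boldr}$ and $(b',Q_2)\in\bR_{B,\boldr}$ with $a'+b'=t$ and $Q_1\cap Q_2=\emptyset$. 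Unwinding the definitions of these lists, that is precisely a partition of a genuine Subset Sum solution into a subset of $A$, a subset of $B$, and the disjoint union $Q_1\cup Q_2\subseteq C$, so no false positive is possible; the role of the \blackref{function:OV} check is exactly to discard the ``pseudosolutions'' created by expanding the search space to $\calQ^{+\epsilon_2}(C)$.

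For the success probability, I would fix any solution $S\subseteq X$ and break into three exhaustive cases. If $|S\cap C|\notin(1\pm\epsilon_1)\tfrac{|C|}{2}$, Line~\ref{alg:memo-ov:1} and \Cref{lemma:input2} succeed with constant probability; if $|S\cap C|\in(1\pm\epsilon_1)\tfrac{|C|}{2}$ but $|W(C)|\le 2^{|C|}\word^{-\lambda}$, Line~\ref{alg:memo-ov:2} and \Cref{lemma:input1} succeed with constant probability---the idea there is to invoke \Cref{lemma:sse-poly}, e.g.\ by rebalancing the partition so that one meet-in-the-middle list has at most $2^{n/2}\word^{-\Omega(\lambda)}$ distinct subset sums. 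The substantive case is $|S\cap C|\in(1\pm\epsilon_1)\tfrac{|C|}{2}$ together with $|W(C)|>2^{|C|}\word^{-\lambda}$, and I would handle it through a chain of claims: (a) with constant probability the random prime $\bp$ is ``good'', in the sense that the residue classes of $L_A$, of $L_B$, and of the multiset $\{\Sigma(Q):Q\in\calQ^{+\epsilon_2}(C)\}$ are each spread to within a $\polylog(n)$ factor of their averages---this follows from the standard estimate that two distinct $\word$-bit integers agree modulo a uniform prime from $\P[\word^{1+\beta/2}:2\word^{1+\beta/2}]$ with probability $O(\word^{-\beta/2})$ (prime number theorem, counting large prime divisors of the difference) plus a second-moment/Markov bound; (b) since $\epsilon_1<\epsilon_2$, the restricted solution $S\cap C$ decomposes as a disjoint union of two near-quartersets in $\calQ^{+\epsilon_2}(C)$ in at least $2^{\Omega(|C|)}$ ways---the representation-method gain, with $\epsilon_1,\epsilon_2$ pinned down by \Cref{eq:eps2,eq:eps3} so that this count is large enough to dominate the induced blow-up $|\calQ^{+\epsilon_2}(C)|\le 2^{H((1+\epsilon_2)/4)|C|}$; (c) conditioned on a good prime and using $|W(C)|>2^{|C|}\word^{-\lambda}$, these representations realize $2^{\Omega(|C|)}$ distinct residues $\Sigma(S\cap A)+\Sigma(Q_1)\bmod\bp$, so a uniform $\boldr\sim[\bp]$ matches one of them with probability $p^\ast\ge 2^{\Omega(|C|)}/\bp$; (d) hence, choosing $s(\word)=\Theta(1/p^\ast)=\tTheta(\bp/2^{\Omega(|C|)})$, over $s(\word)$ iterations a matching $\boldr$ is drawn with constant probability, and with $k(n,\word)$ set to a large enough multiple of the expected total number of couples produced across those iterations, the couple-count cutoff fires before a matching $\boldr$ appears only with small probability (Markov); (e) once a matching $\boldr$ is drawn, the couples $(a'=\Sigma(S\cap A)+\Sigma(Q_1),Q_1)$ and $(b'=\Sigma(S\cap B)+\Sigma(Q_2),Q_2)$ sit in $\bR_{A,\boldr}$ and $\bR_{B,\boldr}$ with $a'+b'=t$ and $Q_1\cap Q_2=\emptyset$, so \Cref{lemma:sse-quarter}---which runs the modified \blackref{alg:MeetInTheMiddle} scan together with \blackref{function:OV}---reports it. A union bound over the failure events of (a)--(d) leaves a constant success probability in case (iii), and likewise in cases (i)--(ii); after tuning the constants (and, if needed, a constant number of independent repetitions, which costs nothing since there are no false positives) this is at least $1/3$.

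For the runtime, in the circuit RAM model: Line~\ref{alg:memo-ov:0} costs $\poly(n)$; Lines~\ref{alg:memo-ov:1}--\ref{alg:memo-ov:2} cost $O(2^{n/2}\word^{-\gamma})$ by \Cref{lemma:input1,lemma:input2}, the constants $\epsilon_1,\lambda,\beta$ being chosen exactly so that these degenerate branches save as much as the main branch; Line~\ref{alg:memo-ov:3} builds $L_A,L_B$ via \blackref{alg:SSE} in time $O(2^{(n-|C|)/2})=\tO(2^{n/2}\word^{-\beta/2})$, splits them into residue sublists in the same time, and builds $\calQ^{+\epsilon_2}(C)$ of size $2^{O(|C|)}=\poly(n)$; and Lines~\ref{alg:memo-ov:5}--\ref{alg:memo-ov:7} run the loop at most $s(\word)$ times with $\tO(1)$ overhead per iteration, where \blackref{alg:residue-couple-list} stores each $\bR_{A,\boldr},\bR_{B,\boldr}$ succinctly and \Cref{lemma:sse-quarter} performs the modified \blackref{alg:MeetInTheMiddle} scan---each \blackref{function:OV} call being a single $\mathsf{AC}^0$ word operation, hence constant time---in time near-linear in those succinct sizes, with the couple-count cutoff capping the cumulative couple-creation and search cost at $\tO(k(n,\word))$. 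Choosing $k(n,\word)=\Theta(2^{n/2}\word^{-\gamma})$ makes the total $\tO(s(\word)+k(n,\word)+2^{n/2}\word^{-\beta/2})=O(2^{n/2}\word^{-\gamma})$, where $\gamma>0.01$ is the common per-case savings exponent (taken slightly below the exact value to absorb the suppressed $\polylog$ factors); since $\word=\Omega(n)$, this is also $O(2^{n/2}n^{-\gamma})$.

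The hard part is the main case of the success-probability analysis: one must simultaneously control the spreading of the residue classes under the random prime, the number and the mod-$\bp$ distinctness of the surviving representations, and the way these interact with the $k(n,\word)$ couple-count cutoff, all while the four constants $\beta,\epsilon_1,\epsilon_2,\lambda$ are jointly constrained by \Cref{eq:eps2,eq:eps3} so that the three cases yield matching savings and the entropy bookkeeping weighing the representation gain $2^{\Omega(|C|)}$ against the list blow-up $2^{H((1+\epsilon_2)/4)|C|}$ really does come out positive---and above $0.01$.
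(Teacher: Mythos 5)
Your proposal takes essentially the same route as the paper---soundness by explicitly checking a solution, success probability by counting representations modulo a random prime and applying a Markov bound to the couple count, runtime controlled via the cutoff parameters $s(\word)$ and $k(n,\word)$---and the entropy bookkeeping and the role of \Cref{eq:eps2,eq:eps3} match. The one real inaccuracy is your claim (a): you assert that the residue classes of $L_A$ and $L_B$ modulo $\bp$ are spread to within a $\polylog(n)$ factor of their averages, but this is false in general (for instance, all of $A$ could be congruent modulo $\bp$, and in any case $|L_A|\gg\bp$). Happily this spread is never needed: the paper bounds the expected number of sum-subset couples produced purely from the uniformity of $\boldr\sim[\bp]$, not from any pseudorandomness of $L_A,L_B$ modulo $\bp$, and the good-residue probability only requires spreading of the much smaller set $W'$ of distinct sums of $\epsilon_2$-balanced subsets of $S\cap C$, which is precisely what \Cref{lemma:prime-dist} delivers. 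Two smaller imprecisions: Lines~\ref{alg:memo-ov:1}--\ref{alg:memo-ov:2} (via \Cref{lemma:input1,lemma:input2}) solve their cases deterministically, not merely ``with constant probability''; and the justification ``since $\epsilon_1<\epsilon_2$'' in (b) is not the operative reason---what matters is that one may assume $|S\cap C|\le |C|/2$ (by passing to the complementary instance, as in the paper's footnote), so that any balanced half of $S\cap C$ has size at most $(1+\epsilon_2)|C|/4$ and therefore lies in $\calQ^{+\epsilon_2}(C)$.
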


\begin{proof}[Proof of Correctness for \blackref{alg:memo-ov}]
The constants $\epsilon_1 \approx 0.1579$ and $\epsilon_2 \approx 0.2427$ given in the description of the algorithm are the solutions to the following equations:
\begin{align}
    \tfrac{1 - H((1 - \epsilon_1) / 2)}{(1 - \epsilon_1) / 2}
    & ~=~ 1 - H(\tfrac{1 - \epsilon_2}{2}),
    \label{eq:eps2} \\
    1 + \epsilon_1 - 3H(\tfrac{1 - \epsilon_1}{2})
    & ~=~ -2H(\tfrac{1 + \epsilon_2}{4}).
    \label{eq:eps3}
\end{align}
Also, we set $\beta := \frac{1}{H((1 + \epsilon_2) / 4)} \approx 1.1186$ and $\lambda := (1 - 10^{-5}) \cdot \frac{1 - \epsilon_1}{2} \cdot \beta \cdot (1 - H(\frac{1 - \epsilon_2}{2})) \approx 0.0202$.

\vspace{.1in}
Lines~\ref{alg:memo-ov:1} and \ref{alg:memo-ov:2} preprocess the instance $(X,\, t)$, solving it deterministically\ignore{ in the claimed runtime} via \Cref{lemma:input1,lemma:input2} unless both Conditions~\blackref{con:memo-ov:1} and \blackref{con:memo-ov:2} hold: 

\vspace{.05in}
\noindent
{\bf Condition~\term[(1)]{con:memo-ov:1}:}
$(X,\, t)$ is either a ``yes'' instance with $|S \cap C| \in [(1 - \epsilon_1)\frac{|C|}{2}: \frac{|C|}{2}]$ for each solution $S \subseteq X$,\footnote{\label{footnote:symmetry}Given Line~\ref{alg:memo-ov:1}, a solution $S \subseteq X$ or its complementary $(X \setminus S)$, as a solution to the complementary instance $(X,\, \Sigma(X)-t)$, must have this property.} or a ``no'' instance.

\vspace{.05in}
\noindent
{\bf Condition~\term[(2)]{con:memo-ov:2}:}
$|W(C)| > 2^{|C|} \cdot \word^{-\lambda}$. Note that this implies $|W(T)| > 2^{|T|} \cdot \word^{-\lambda}$ for each $T \subseteq C$.

\vspace{.05in}
\noindent
Lines~\ref{alg:memo-ov:3} to \ref{alg:memo-ov:8} accept only if a solution $t = a' + b' = (a + \Sigma(Q_1)) + (b + \Sigma(Q_2))$ is found in  Line~\ref{alg:memo-ov:7}, for which $a \in L_{A}$, $b \in L_{B}$, and $Q_1,\, Q_2 \in \calQ^{+\epsilon_2}(C)$ are disjoint.
As a consequence, the algorithm never reports false positives.

\vspace{.1in}
It remains to show that Lines~\ref{alg:memo-ov:5} to \ref{alg:memo-ov:7} accept a ``yes'' instance $(X,\, t)$ with probability at least $1/3$ when $(X,\, t)$ satisfies Conditions~\blackref{con:memo-ov:1} and \blackref{con:memo-ov:2}.
Consider a solution $S \subseteq X$ and the following set $W'$ 
containing distinct sums of all ``$\epsilon_2$-balanced'' subsets of $S \cap C$:
\[
    W' ~:=~ \{\, \Sigma(Q) \mid Q \subseteq (S \cap C) ~\text{and}~ |Q| \in (1 \pm \epsilon_2)\tfrac{|S \cap C|}{2} \,\}.
    \hspace{.39cm}
\]
Line~\ref{alg:memo-ov:3} creates the residue sublists $L_{A} = \{\bL_{A,\, i}\}_{i \in [\bp]}$ and $L_{B} = \{\bL_{B,\, i}\}_{i \in [\bp]}$. We say that a residue $i \in [\bp]$ is {\em good} if it satisfies $i - \Sigma(S \cap A) \in (W' \bmod \bp)$, namely there exists a subset $Q_1 \subseteq (S \cap C)$ of size $|Q_1| \in (1 \pm \epsilon_2)\frac{|S \cap C|}{2}$ such that $\Sigma(S \cap A) + \Sigma(Q_1) \equiv_{\bp} i$.
On sampling a good residue $\boldr = i$ in Line~\ref{alg:memo-ov:6}, both $Q_1$ and $Q_2 := (S \cap C) \setminus Q_1$ are of size at most $(1 + \epsilon_2)\frac{|S \cap C|}{2} \leq (1 + \epsilon_2)\frac{|C|}{4}$, so they are included in the collection $\calQ^{+\epsilon_2}(C)$ and, respectively, in the lists $\bR_{A,\, \boldr}$ and $\bR_{B,\, \boldr}$ created in Line~\ref{alg:memo-ov:6}.
Then using \Cref{lemma:sse-quarter}, we are ensured to find the solution $S = (S \cap A) \cup (S \cap B) \cup (Q_1 \cup Q_2)$.

Hence it suffices to {\bf (i)}~lower bound the probability that at least one of the $s(\word)$ samples $\boldr \sim [\bp]$ is good, and {\bf (ii)}~upper bound the probability that these samples generate a total of $k(n,\, \word)$ or more sum-subset couples.

\vspace{.1in}
\noindent
{\bf (i).}
We claim that the size of set $W'$ is at least $\Omega(2^{|S \cap C|} \cdot \word^{-\lambda}) \geq \tOmega(\word^{(1 - \epsilon_1) \cdot \beta / 2 - \lambda})$ and is at most $2^{|S \cap C|} \leq \word^{\beta / 2}$.
The lower bound on the size comes from a combination of two observations.
First, the set $(S \cap C)$ has at least $2^{|S \cap C|} \cdot \word^{-\lambda}$ many distinct subset sums, by Condition~\blackref{con:memo-ov:2}.
Second, the number of subsets $Q \subseteq (S \cap C)$ of size $|Q| \notin (1 \pm \epsilon_2)\frac{|S \cap C|}{2}$ is at most $2 \cdot 2^{H(\frac{1 - \epsilon_2}{2}) \cdot |S \cap C|} = o(2^{|S \cap C|} \cdot \word^{-\lambda})$, given Stirling's approximation (\Cref{eq:stirling}) and the technical condition
$$\textstyle\frac{|S \cap C|}{\log \word} \cdot (1 - H(\frac{1 - \epsilon_2}{2}))
~\geq~ (1 - o_{n}(1)) \cdot \frac{1 - \epsilon_1}{2} \cdot \beta \cdot (1 - H(\frac{1 - \epsilon_2}{2}))
~>~ \lambda,$$
which is true for our choice of $\lambda$.

The upper and lower bounds on $|W'|$ allow us to apply \Cref{lemma:prime-dist}:   
with probability at least $3 / 4$ over the modulus $\bp \sim \P[\word^{1 + \beta / 2}: 2\word^{1 + \beta / 2}]$, there are $|W' \bmod \bp| = \Omega(|W'|) = \tOmega(\word^{(1 - \epsilon_1) \cdot \beta / 2 - \lambda})$ many good residues.
Conditioned on this event, taking 
\[
    s(\word) ~:=~ \tTheta(\word^{1 + \lambda + \epsilon_1 \cdot \beta / 2})
    ~\geq~ \tOmega\left(\tfrac{\bp}{|(W' \bmod \bp)|}\right) 
\]
many samples $\boldr \sim [\bp]$ yields at least one good residue with probability $\geq 2 / 3$. 
\medskip

\noindent
{\bf (ii).}
All candidate lists $\{\bR_{A,\, i}\}_{i \in [\bp]}$, $\{\bR_{B,\, i}\}_{i \in [\bp]}$ together have a total of $$(|L_{A}| + |L_{B}|) \cdot |\calQ^{+\epsilon_2}(C)| \leq 2 \cdot 2^{(n - |C|)/2} \cdot 2^{|C| / \beta} = O(2^{n/2} \cdot \word^{1-\beta/2})$$ sum-subset couples, by construction (Line~\ref{alg:memo-ov:3}), the choices of $|A|$, $|B|$, $|C|$, and Stirling's approximation (\Cref{eq:stirling}).
For any outcome of the random modulus $\bp = p = \Theta(\word^{1+\beta/2})$, a number of $s(\word)$ samples $\boldr \sim [p]$ generate a total of $O(2^{n/2} \cdot \word^{1-\beta/2}) \cdot s(\word) / p = \tO(2^{n/2} \cdot \word^{-((1-\epsilon_1 / 2) \cdot \beta - (1+ \lambda))})$ sum-subset couples in expectation. By setting a large enough cutoff
\[
    k(n,\, \word) ~:=~ \tTheta(2^{n/2} \cdot \word^{-((1-\epsilon_1 / 2) \cdot \beta - (1+ \lambda))}),
    \hspace{2.8cm}
\]
the probability that $k(n,\, \word)$ or more sum-subset couples are generated is at most $1/6$.

Overall, our algorithm succeeds with probability $\geq (3/4) \cdot (2/3) - (1/6) = 1/3$.
\end{proof}

\begin{proof}[Proof of Runtime for \blackref{alg:memo-ov}]
Recalling the assumption $\word = \poly(n)$:
\begin{flushleft}
\begin{itemize}
    \item Line~\ref{alg:memo-ov:1} takes time $\tO(2^{n/2} \cdot \word^{-(1 - H((1 - \epsilon_1) / 2)) \cdot \beta / 2})$, by \Cref{lemma:input2}.
    
    \item Line~\ref{alg:memo-ov:2} takes time $\tO(2^{n/2} \cdot \word^{-\lambda/2})$, by \Cref{lemma:input1}.
    
    \item Line~\ref{alg:memo-ov:3} takes time $O(2^{|A|} + 2^{|B|} + 2^{|C|}) \cdot O(\log\word) = \tO(2^{n/2} \cdot \word^{-\beta / 2})$, by \Cref{lem:sse}, the choices of $|A|$, $|B|$, $|C|$, and that the modulo operation in the circuit RAM model takes time $O(\log \word)$.
    
    \item Lines~\ref{alg:memo-ov:5} to \ref{alg:memo-ov:7} take time $\tO(k(n,\, \word)) = \tO(2^{n/2} \cdot \word^{-((1-\epsilon_1 / 2) \cdot \beta - (1+ \lambda))})$, because a single iteration (\Cref{lemma:sse-quarter}) takes time $\tO(|\bR_{A,\, \boldr}| + |\bR_{B,\, \boldr}|)$ and by construction we create a total of at most $\sum_{\boldr} (|\bR_{A,\, \boldr}| + |\bR_{B,\, \boldr}|) \leq k(n,\, \word)$ many sum-subset couples.
\end{itemize}
\end{flushleft}
The runtime of Line~\ref{alg:memo-ov:3} is dominated by that of Line~\ref{alg:memo-ov:1}, so the bottleneck occurs in Line~\ref{alg:memo-ov:1}, Line~\ref{alg:memo-ov:2}, or Lines~\ref{alg:memo-ov:5} to \ref{alg:memo-ov:7}.
For the choices of constants given in the algorithm, we achieve a speedup of $\Omega(\word^{\gamma})$ for any constant $\gamma \in (0,\, \gamma_{*})$, where the
\[
    \textstyle
    \gamma_{*}
    ~:=~ \min\Big\{\,
    \lambda / 2,\quad
    (1 - H(\frac{1 - \epsilon_1}{2})) \cdot \beta / 2,\quad
    (1 - \epsilon_1 / 2) \cdot \beta - (1+ \lambda)\,\Big\}
    ~=~ \lambda / 2 ~\approx~ 0.0101.
    \qedhere
\]
\end{proof}

\subsection{Auxiliary Lemmas}
\label{subsec:auxiliary}

A first ``preprocessing lemma'' stems from the observation that a not-too-large
subset $Y \subseteq X$ with few distinct subset sums (i.e., $|W(Y)| \ll 2^{|Y|}$) can help  speed up \blackref{alg:MeetInTheMiddle}. This idea goes back to \cite{austrin2015subset,austrin2016dense}, although the condition of ``few distinct subset sums'' we use refers to sets with polynomially rather than exponentially fewer subset sums than the maximum possible number.

\begin{lemma}[Speedup via Additive Structure]
\label{lemma:input1}
Let $(X,\, t)$ be an $n$-integer Subset Sum instance. Given as input $(X,\, t)$ and a subset $Y \subseteq X$ of size $|Y| \leq n / 2$ such that $|W(Y)| \leq 2^{|Y|} \cdot \word^{-\epsilon}$ for some constant $\epsilon > 0$, the instance $(X,\, t)$ can be solved deterministically in time $\tO(2^{n/2} \cdot \word^{-\epsilon / 2})$.
\end{lemma}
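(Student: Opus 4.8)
The plan is to run a variant of \blackref{alg:MeetInTheMiddle} in which the set $Y$ with few distinct subset sums is ``absorbed'' into one side of the meet-in-the-middle partition, so that the short list $L_Y$ (of length at most $2^{|Y|}\cdot\word^{-\epsilon}$) replaces what would otherwise be a full $2^{|Y|}$-size contribution. Concretely, write $X = Y \cup Z$ with $|Z| = n - |Y|$, and split $Z = Z_1 \cup Z_2$ as evenly as possible, so $|Z_1|,|Z_2| \leq \lceil (n-|Y|)/2 \rceil$. Form the list $L_{Z_1}$ on the ``left'' and, on the ``right,'' form the list of all sums $\Sigma(S)$ for $S \subseteq Z_2 \cup Y$; the key point is that this right-hand list has size at most $|L_{Z_2}| \cdot |W(Y)| \leq 2^{|Z_2|} \cdot 2^{|Y|} \cdot \word^{-\epsilon}$, which we can enumerate in sorted order within the same time bound by, e.g., merging the $|W(Y)|$ shifted copies $\{L_{Z_2} + w : w \in W(Y)\}$ (each already sorted, via \blackref{alg:SSE} applied to $Y$ and to $Z_2$), or more simply by calling \blackref{alg:SSE} on $Z_2 \cup Y$ and invoking \Cref{lemma:sse-poly} to get the improved running time since the number of distinct sums is at most $2^{|Z_2 \cup Y|}\cdot\word^{-\epsilon}$. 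Then run the two-pointer scan of \blackref{alg:MeetInTheMiddle:3} across these two sorted lists to decide whether any pair sums to $t$.

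For correctness: a Subset Sum solution $S \subseteq X$ decomposes uniquely as $(S \cap Z_1) \cup (S \cap Z_2) \cup (S \cap Y)$, with $\Sigma(S \cap Z_1) \in W(Z_1)$ and $\Sigma(S \cap Z_2) + \Sigma(S \cap Y) \in W(Z_2 \cup Y)$; conversely any pair of sums from these two lists that adds to $t$ certifies a genuine solution. So the algorithm is correct with zero error, exactly as in \Cref{mim-runtime}.

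For the running time: enumerating $L_{Z_1}$ costs $O(2^{|Z_1|}) = O(2^{(n-|Y|)/2 + O(1)})$ by \Cref{lem:sse}; enumerating the sorted right-hand list costs $\tilde O\big(2^{|Z_2|+|Y|}\cdot\word^{-\epsilon}\big) = \tilde O\big(2^{(n-|Y|)/2 + |Y|}\cdot\word^{-\epsilon}\big)$ by \Cref{lemma:sse-poly} (using $\epsilon$ as the gap exponent); and the two-pointer scan is linear in the list lengths. Using $|Y| \leq n/2$ we have $(n-|Y|)/2 + |Y| = (n+|Y|)/2 \leq 3n/4$, but more usefully $2^{(n-|Y|)/2 + |Y|}\cdot\word^{-\epsilon}$ versus the baseline: since the list length is $2^{(n+|Y|)/2}\cdot\word^{-\epsilon}$ and the plain meet-in-the-middle lists have length $2^{n/2}$, and since $\word = \Omega(n)$ while $2^{|Y|/2}$ could a priori be as large as $2^{n/4}$, the naive accounting is not yet good enough — this is the step that needs care. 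The fix, which I expect to be the main technical point, is to also split $Y$ itself: put roughly half of $Y$'s ``budget'' on each side. That is, since $|W(Y)| \leq 2^{|Y|}\cdot\word^{-\epsilon}$, write $Y = Y_1 \cup Y_2$ with $|Y_1| = |Y_2| = |Y|/2$ (up to rounding) and correspondingly $|W(Y_1)|, |W(Y_2)| \leq 2^{|Y|/2}$ trivially, but crucially $|W(Y_1)|\cdot|W(Y_2)| \leq |W(Y)| \cdot (\text{small loss})$ — or, cleaner, observe that $W(Y) = W(Y_1) + W(Y_2)$ so $\min(|W(Y_1)|,|W(Y_2)|)$ may not be small, yet we only need that the \emph{two} lists $L_{Z_1 \cup Y_1}$ and $L_{Z_2 \cup Y_2}$ each have length $O(2^{n/2}\cdot\word^{-\epsilon/2})$. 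This last bound follows because $|W(Z_i \cup Y_i)| \leq 2^{|Z_i|}\cdot|W(Y)|^{1/2}\cdot\word^{o(1)}$ is \emph{not} automatic, so instead one argues directly: by symmetry assume $|W(Y_1)| \geq |W(Y_2)|$; then $|W(Y_2)| \leq |W(Y)|/|W(Y_1)| \cdot$ — hmm — the clean statement is simply that we may choose the split so that $|W(Y_1)|\cdot|W(Y_2)| = O(|W(Y)|)$ is \emph{false} in general. The correct and simplest route, and the one I would actually write up: keep all of $Y$ on one side, bound that side's sorted list by $\tilde O(2^{(n-|Y|)/2}\cdot 2^{|Y|}\cdot\word^{-\epsilon})$ via \Cref{lemma:sse-poly}, bound the \emph{other} side by $O(2^{(n-|Y|)/2})$, and balance by instead partitioning $Z$ \emph{unevenly}: take $|Z_1| = (n-|Y|)/2 + |Y|/2$ and $|Z_2| = (n-|Y|)/2 - |Y|/2$, so $|Z_1| = n/2$ gives the left list length $2^{n/2}$ — no good either. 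So the genuine balance is: left list length $2^{|Z_1|}$, right list length $\tilde O(2^{|Z_2| + |Y|}\cdot\word^{-\epsilon})$; set $|Z_1| = |Z_2| + |Y| - \epsilon\log\word$, i.e. $|Z_1| - |Z_2| = |Y| - \epsilon\log\word$, together with $|Z_1| + |Z_2| = n - |Y|$, giving $|Z_1| = (n - \epsilon\log\word)/2$ and hence \emph{both} list lengths equal to $2^{(n-\epsilon\log\word)/2} = 2^{n/2}\cdot\word^{-\epsilon/2}$. Since $|Y| - \epsilon\log\word \geq 0$ is needed for this split to be legal, and $|Y| \geq \epsilon\log\word$ may fail, the remaining case $|Y| < \epsilon\log\word$ is handled trivially by the observation that then $2^{|Y|} = \word^{o(1)}$ and the hypothesis $|W(Y)| \leq 2^{|Y|}\cdot\word^{-\epsilon} < 1$ forces $W(Y) = \emptyset$, impossible, or rather forces the regime to be vacuous. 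Plugging the balanced lengths into \Cref{lem:sse} (plus \Cref{lemma:sse-poly} for the side containing $Y$) and the linear-time scan yields total time $\tilde O(2^{n/2}\cdot\word^{-\epsilon/2})$, as claimed. The one subtlety to get right in the writeup is verifying that \Cref{lemma:sse-poly}'s hypothesis $|W(Z_2 \cup Y)| \leq 2^{|Z_2\cup Y|}\cdot (|Z_2\cup Y|)^{-\epsilon}$ indeed holds with room to spare — it does, since $|W(Z_2\cup Y)| \leq |W(Z_2)|\cdot|W(Y)| \leq 2^{|Z_2|}\cdot 2^{|Y|}\cdot\word^{-\epsilon}$ and $\word = \Omega(n)$ so $\word^{-\epsilon} \leq (|Z_2\cup Y|)^{-\epsilon}$ up to constants — and that the $\polylog$ overhead from \Cref{lemma:sse-poly} is absorbed into the $\tilde O(\cdot)$.
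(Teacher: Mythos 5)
Your proposal is correct and, after several false starts, lands on exactly the paper's construction: you place $Y$ entirely on one side and take that side (your $Z_2 \cup Y$, the paper's $A$) to have size $(n + \epsilon\log\word)/2$ so that both sorted lists have length $O(2^{n/2}\word^{-\epsilon/2})$, then run \blackref{alg:MeetInTheMiddle} using \Cref{lem:sse} and \Cref{lemma:sse-poly}. A cleaned-up writeup would drop the exploratory detours about splitting $Y$ and simply fix a size-$\frac{n+\epsilon\log\word}{2}$ superset $A \supseteq Y$ at the outset (noting, as you do, that $|Y| \geq \epsilon\log\word$ else the hypothesis is vacuous, and $|Y| \leq n/2$ makes the split feasible).
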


\begin{proof}
Fix any size-$(\frac{n + \epsilon \log \word}{2})$ subset $A$ such that $Y \subseteq A \subseteq X$; we have $|W(A)| \leq 2^{|A \setminus Y|} \cdot |W(Y)| \leq 2^{n / 2} \cdot \word^{-\epsilon / 2}$. We also have $|W(X \setminus A)| \leq 2^{|X \setminus A|} = 2^{n/ 2} \cdot \word^{-\epsilon / 2}$. By \Cref{lem:sse,lemma:sse-poly}, it takes overall time $O(2^{n/2} \cdot \word^{-\epsilon / 2} \cdot \log n)$ in the regime $\word = \poly(n)$ to create the sorted lists $L_{A}$, $L_{X \setminus A}$ and to run \blackref{alg:MeetInTheMiddle}.
\end{proof}

Another useful preprocessing lemma shows that the existence of a solution that is ``unbalanced'' vis-a-vis a given small subset yields a speedup:

\begin{lemma}[Speedup via Unbalanced Solutions]
\label{lemma:input2}
\begin{flushleft}
Let $(X,\, t)$ be an $n$-integer Subset Sum instance that has a solution. Given as input $(X,\, t)$ and a subset $Y \subseteq X$ of size $|Y| = \beta \log(\word / (\beta\log\word))$ for $\beta$ as specified in \blackref{alg:memo-ov} such that some solution $S \subseteq X$ satisfies $|S \cap Y| \notin (1 \pm \epsilon)\frac{|Y|}{2}$ for some constant $\epsilon > 0$, the solution $S$ can be found deterministically in time $\tO(2^{n/2} \cdot \word^{-\delta / 2})$, where the constant $\delta := (1 - H(\frac{1 - \epsilon}{2})) \cdot \beta$.
\end{flushleft}
\end{lemma}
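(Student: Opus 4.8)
The plan is to reproduce the proof of \Cref{lemma:input1}, with the set $W(Y)$ of \emph{all} subset sums of $Y$ replaced by the set of subset sums of the ``light'' subsets of $Y$, i.e.\ those meeting $Y$ in at most $(1-\epsilon)\frac{|Y|}{2}$ elements. First, to dispose of the direction of the imbalance, I would run the procedure below on $(X,t)$ and, in parallel, on the complementary instance $(X,\Sigma(X)-t)$; since $X\setminus S$ solves $(X,\Sigma(X)-t)$ with $|(X\setminus S)\cap Y| = |Y| - |S\cap Y|$, it suffices to find, in one of the two instances, a solution $S$ with $|S\cap Y|\le(1-\epsilon)\frac{|Y|}{2}$, complementing it back if it was found for $(X,\Sigma(X)-t)$. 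So assume henceforth that some solution $S$ satisfies $|S\cap Y|\le(1-\epsilon)\frac{|Y|}{2}$.

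Next, pick any $A$ with $Y\subseteq A\subseteq X$ and $|A| = \frac n2 + \frac\delta2\log\word$ (possible since $|Y| = O(\log\word)\ll n$), and set $B := X\setminus A$, so $|B| = \frac n2 - \frac\delta2\log\word$. Let $W_A^* := \{\,\Sigma(T)\mid T\subseteq A,\ |T\cap Y|\le(1-\epsilon)\tfrac{|Y|}{2}\,\}$. Writing each such $T$ as a disjoint union $T_1\cup T_2$ with $T_1\subseteq A\setminus Y$, $T_2\subseteq Y$, $|T_2|\le(1-\epsilon)\frac{|Y|}{2}$, and applying Stirling's approximation (\Cref{eq:stirling}),
\[
 |W_A^*| ~\le~ 2^{|A\setminus Y|}\sum_{i\,\le\,(1-\epsilon)|Y|/2}\binom{|Y|}{i} ~\le~ 2^{\,|A\setminus Y|\,+\,H((1-\epsilon)/2)\cdot|Y|} ~=~ 2^{\,|A|\,-\,(1-H((1-\epsilon)/2))\cdot|Y|}.
\]
Now $|Y| = \beta\log(\word/(\beta\log\word))$ gives $2^{|Y|} = \tTheta(\word^\beta)$, hence $2^{(1-H((1-\epsilon)/2))|Y|} = \tTheta(\word^{\delta})$ since $\delta = (1-H(\frac{1-\epsilon}{2}))\beta$; combined with $2^{|A|} = 2^{n/2}\word^{\delta/2}$ this gives $|W_A^*| = \tO(2^{n/2}\word^{-\delta/2})$, and symmetrically $|W(B)|\le 2^{|B|} = 2^{n/2}\word^{-\delta/2}$.

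Finally I would build the sorted list $L_A^*$ of $W_A^*$ (keeping a witnessing subset for each sum) by enumerating the at most $2^{H((1-\epsilon)/2)|Y|} = \poly(n)$ subsets $Q\subseteq Y$ with $|Q|\le(1-\epsilon)\frac{|Y|}{2}$ (found by scanning all $2^{|Y|} = \poly(n)$ subsets of $Y$), forming $L_{A\setminus Y}+\Sigma(Q)$ for each via \blackref{alg:SSE} on $A\setminus Y$, and merging these $\poly(n)$ sorted lists while discarding duplicates; also build $L_B$ for $W(B)$ via \blackref{alg:SSE}; then run the two-pointer scanning step of \blackref{alg:MeetInTheMiddle} on $L_A^*$ and $L_B$ for a pair summing to $t$ and return a subset realizing it. Correctness holds because $\Sigma(S\cap A)\in W_A^*$ (as $|(S\cap A)\cap Y| = |S\cap Y|\le(1-\epsilon)\frac{|Y|}{2}$), $\Sigma(S\cap B)\in W(B)$, and $\Sigma(S\cap A)+\Sigma(S\cap B) = \Sigma(S) = t$. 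For the runtime, building $L_A^*$ costs $\tO(2^{H((1-\epsilon)/2)|Y|}\cdot 2^{|A\setminus Y|}) = \tO(2^{|A|-(1-H((1-\epsilon)/2))|Y|}) = \tO(2^{n/2}\word^{-\delta/2})$, building $L_B$ costs $\tO(2^{|B|}) = \tO(2^{n/2}\word^{-\delta/2})$, and the scan costs $O(|L_A^*|+|L_B|)$; the whole procedure is deterministic. I expect the only subtle point to be the rebalancing in the second paragraph: placing $Y$ inside an exactly-$\frac n2$-element half would leave $L_B$ at size $2^{n/2}$ and give no speedup, whereas padding $Y$ out to $|A| = \frac n2 + \frac\delta2\log\word$ makes $L_A^*$ and $L_B$ equal in size, and checking this balance is precisely the identity $\delta = (1-H(\frac{1-\epsilon}{2}))\beta$ together with $2^{|Y|} = \tTheta(\word^\beta)$; the rest closely parallels \Cref{lemma:input1}.
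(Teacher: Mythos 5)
Your proposal is correct and follows essentially the same route as the paper's proof: reduce to $|S\cap Y|\le(1-\epsilon)\tfrac{|Y|}{2}$ via the complementary instance, embed $Y$ in a slightly-larger-than-$n/2$ half $A$, enumerate only the ``light'' sums of $Y$ to build the restricted sorted list for $A$, and run the meet-in-the-middle scan against $L_{X\setminus A}$. The only cosmetic differences are that you build $L_A^*$ by merging $\poly(n)$ shifted copies of $L_{A\setminus Y}$ rather than invoking the restricted-\blackref{alg:SSE} iteration with \Cref{lemma:sse-poly}, and you take $|A|=\tfrac n2+\tfrac\delta2\log\word$ rather than $\tfrac{n+(\delta/\beta)|Y|}{2}$ — both differences vanish inside the $\tO(\cdot)$.
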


\begin{proof}
We can assume without loss of generality that the solution $S$ satisfies $|S \cap Y| \leq \frac{|Y|}{2}$ (since either the original instance $(X,\,t)$ or the complementary instance $(X,\, \Sigma(X)-t)$ must satisfy this property, and we can attempt both instances and only double the runtime).
Hence we can suppose $|S \cap Y| \leq (1 - \epsilon)\frac{|Y|}{2}$. Then the sorted list $L'_Y$ of the set $\{\Sigma(T) \mid T \subseteq Y ~\text{and}~ |T| \leq (1 - \epsilon)\frac{|Y|}{2}\}$ can be created in time $O(2^{|Y|}) = O(\word^{\beta}) = \poly(n)$
using \blackref{alg:SSE} (restricted to subsets of sizes $< (1 - \epsilon)\frac{|Y|}{2}$).

Fix any size-$(\frac{n + (\delta/\beta)|Y|}{2})$ subset $A$ with $Y \subseteq A \subseteq X$. In the regime $\word = \poly(n)$, we can use $L'_{Y}$ and \Cref{lemma:sse-poly} to create the sorted list $L'_A$ of  $\{\Sigma(T) \mid T \subseteq A ~\text{and}~ |T \cap Y| < (1 - \epsilon)\frac{|Y|}{2}\}$ in time $$\tO(|L'_{A}|)
\leq \tO(2^{|A \setminus Y|} \cdot |L'_Y|)
\leq \tO(2^{|A \setminus Y|} \cdot 2^{H(\frac{1 - \epsilon}{2}) \cdot |Y|})
\leq \tO(2^{n / 2} \cdot \word^{-\delta / 2}),$$ following Stirling's approximation (\cref{eq:stirling}) and the choices of $|Y|$ and $|A|$. Moreover, we can use \Cref{lem:sse} to create the sorted list $L_{X \setminus A}$ in time $O(2^{|X \setminus A|}) = \tO(2^{n / 2} \cdot \word^{-\delta / 2})$.


Provided $|S \cap Y| < (1 - \epsilon)\frac{|Y|}{2}$, running \blackref{alg:MeetInTheMiddle} on $L'_{A}$ and $L_{X \setminus A}$ solves $(X, t)$ and takes time $O(|L'_A| + |L_{X \setminus A}|) = \tO(2^{n/2} \cdot \word^{-\delta / 2})$. The overall runtime is $\tO(2^{n/2} \cdot \word^{-\delta / 2})$.
\end{proof}

The next lemma specifies a parameter space within which any set of distinct integers is likely to fall into many residue classes modulo a random prime. This allows us to reduce the search space by considering only solutions that fall into certain residue classes.

\begin{lemma}[Distribution of Integer Sets modulo Random Primes]
\label{lemma:prime-dist}
Fix a set $Y$ of at most $|Y| \leq \word^{\beta / 2}$ distinct $\word$-bit integers for $\beta$ as specified in \blackref{alg:memo-ov}. For a uniform random modulus $\bp \sim \P[\word^{1 + \beta / 2}: 2\word^{1 + \beta / 2}]$, the residue set has size $|(Y \bmod \bp)| = \Theta(|Y|)$ with probability at least $3 / 4$.
\end{lemma}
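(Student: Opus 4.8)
The plan is to show that very few primes $p$ in the interval $[\word^{1+\beta/2}: 2\word^{1+\beta/2}]$ can cause a large number of collisions among the elements of $Y$, and then conclude by a counting (Markov-type) argument over the random choice of $\bp$. First I would set up the collision count: for a fixed prime $p$, say a pair $\{y, z\} \subseteq Y$ with $y \neq z$ \emph{collides mod $p$} if $p \mid (y - z)$. Since each $y, z$ is an $\word$-bit integer, the nonzero difference $y - z$ has absolute value less than $2^\word$, and hence is divisible by at most $\word$ distinct primes that are all at least $\word^{1+\beta/2}$ (because a product of $k$ such primes already exceeds $\word^{k(1+\beta/2)}$, which is $\geq 2^\word$ once $k > \word / ((1+\beta/2)\log\word)$; in fact the number of such prime divisors is $O(\word / \log \word)$). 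Summing over all $\binom{|Y|}{2} \leq |Y|^2 / 2$ pairs, the total number of (pair, prime) collision incidents, over all primes $p$ in the interval, is at most $|Y|^2 \cdot O(\word/\log\word)$.

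Next I would lower-bound the number of primes in the interval. By the prime number theorem (or Bertrand-type bounds, which suffice here), $|\P[\word^{1+\beta/2}: 2\word^{1+\beta/2}]| = \Theta\!\left(\frac{\word^{1+\beta/2}}{\log(\word^{1+\beta/2})}\right) = \Theta\!\left(\frac{\word^{1+\beta/2}}{\log \word}\right)$. Therefore the \emph{average} number of colliding pairs over a uniformly random $\bp$ in this interval is at most
\[
    \frac{|Y|^2 \cdot O(\word / \log \word)}{\Theta(\word^{1+\beta/2} / \log \word)}
    ~=~ O\!\left(\frac{|Y|^2}{\word^{\beta/2}}\right)
    ~=~ O(|Y|),
\]
using the hypothesis $|Y| \leq \word^{\beta/2}$. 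By Markov's inequality, with probability at least $3/4$ the number of colliding pairs is at most $c \cdot |Y|$ for an absolute constant $c$. Finally I would translate "few colliding pairs" into "large residue set": if $\bp$ produces at most $c|Y|$ colliding pairs, then the elements of $Y$ can be grouped into residue classes where the classes of sizes $n_1, n_2, \dots$ satisfy $\sum_j \binom{n_j}{2} \leq c|Y|$; since $\sum_j n_j = |Y|$, a standard convexity/pigeonhole argument shows the number of nonempty classes is $\Omega(|Y|)$ (e.g., if there were only $o(|Y|)$ classes, some class would have $\omega(1)$ elements on average and $\sum \binom{n_j}{2}$ would be $\omega(|Y|)$ — more carefully, $|(Y \bmod \bp)| \geq |Y|^2 / (|Y| + 2\sum_j \binom{n_j}{2}) \geq |Y|/(1+2c)$). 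Combined with the trivial upper bound $|(Y \bmod \bp)| \leq |Y|$, this gives $|(Y \bmod \bp)| = \Theta(|Y|)$ with probability $\geq 3/4$, as claimed.

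The main obstacle, such as it is, is getting the bound on the number of prime divisors of a difference $y-z$ exactly right and making sure the interval $[\word^{1+\beta/2}: 2\word^{1+\beta/2}]$ is chosen large enough that the ratio collapses to $O(|Y|)$ — i.e., that the exponent $1+\beta/2$ genuinely dominates $|Y| \leq \word^{\beta/2}$ with room to spare for the $O(\word/\log\word)$ factor. One should double-check that $\word = \poly(n) = \Omega(n)$ is large enough that $\word^{1+\beta/2}$ exceeds any fixed polynomial slack; since we only need the ratio to be $O(|Y|)$ and we have a full extra factor of $\word$ in the modulus range versus the bound on $|Y|^2/\word^{\beta/2}$... wait, $|Y|^2/\word^{\beta/2} \le \word^{\beta/2}$, and the extra $O(\word/\log\word)$ from prime divisors is absorbed because the denominator has $\word^{1+\beta/2}/\log\word$, leaving $\word^{\beta/2} \cdot (\word/\log\word)/(\word^{1+\beta/2}/\log\word) = 1$ — so the bound is in fact $O(|Y|)$ with no slack issues. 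The rest is routine.
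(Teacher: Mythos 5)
Your proof is correct and takes essentially the same route as the paper: both bound the expected number of collision pairs (the paper via the second frequency moment $\boldf_{(2)}$, you via counting (pair, prime) incidents and dividing by the number of primes), then apply Markov and the Cauchy--Schwarz bound $|(Y \bmod \bp)| \geq |Y|^2 / \sum_j n_j^2$. The only cosmetic difference is that you track unordered off-diagonal pairs while the paper's $\boldf_{(2)}$ counts ordered pairs including the diagonal, but the argument is the same.
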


\begin{proof}
By the prime number theorem \cite[Equation~(22.19.3)]{hardy1979introduction}, there are at least $\frac{\word^{1 + \beta / 2}}{(1 + \beta / 2) \cdot \log \word}$ primes in $\P[\word^{1 + \beta / 2}: 2\word^{1 + \beta / 2}]$ (for any sufficiently large $\word$). Given any two distinct integers $y \neq z \in Y$, the difference $|y - z| \leq 2^{\word}$ has at most $\frac{\word}{(1 + \beta / 2) \cdot \log \word}$ distinct prime factors in $\P[\word^{1 + \beta / 2}: 2\word^{1 + \beta / 2}]$.
Thus under a uniform random choice of modulus $\bp \sim \P[\word^{1 + \beta / 2}: 2\word^{1 + \beta / 2}]$, the second frequency moment $\boldf_{(2)} := |\{(y,\, z) \mid y,\, z \in Y ~\text{with}~ y \equiv_{\bp} z\}|$ has the expectation
\begin{align*}
    \E_{\bp} [\boldf_{(2)}]
    ~=~ |\{y = z \in Y\}| + |\{y \neq z \in Y\}| \cdot \word^{-\beta / 2}
    ~=~ |Y| + (|Y|^{2} - |Y|) \cdot \word^{-\beta / 2}
    ~\leq~ 2|Y|.
\end{align*}
Therefore, with an arbitrarily high constant probability, we have $\boldf_{(2)} = O(|Y|)$ and, by the Cauchy-Schwarz inequality $|Y \bmod \bp| \cdot \boldf_{(2)} \geq |Y|^{2}$,
a residue set of size $|Y \bmod \bp| = \Omega(|Y|)$.
\end{proof}

\begin{figure}[t]
    \centering
    \begin{mdframed}
    Subroutine $\term[\ResidueCoupleList]{alg:residue-couple-list}(\{L_{A,\, i}\}_{i \in [p]},\, \calQ^{+\epsilon}(C),\, r)$

    \begin{flushleft}
    {\bf Input:}
    A collection of $p = \poly(\word)$ sorted sublists $L_{A} = \bigcup_{i \in [p]} L_{A,\, i}$, for some subset $A \subseteq X$, indexed by residue class modulo $p$. \\
    
    \vspace{.05in}
    {\bf Output:}
    A sorted sum-subset list $R_{A,\, r}$ with elements in the sum-collection format $(a',\, \calQ_{a'})$.
    
    \begin{enumerate}[label = 5(\alph*)]
        \item\label{alg:residue-couple-list:2}
        For each $Q \in \calQ^{+\epsilon}(C)$, create the sum-subset sublist $R_{Q} := f_{Q}(L_{A,\, j(Q)})$ by applying $f_{Q}$, the element-to-couple operation $a \mapsto (a' := a + \Sigma(Q),\, Q)$, to the particular input sublist of index $j(Q) := ((r - \Sigma(Q)) \bmod p)$.
        \algcomment{Each $R_{Q}$ is sorted by the sums $a'$.}
        
        \item\label{alg:residue-couple-list:3}
        Let $R_{A,\, r}$ be the list obtained by merging $\{R_{Q}\}_{Q \in \calQ^{+\epsilon}(C)}$, sorted by sums $a' = a + \Sigma(Q)$. \\
        For each distinct sum $a'$, compress all couples $(a',\, Q_1)$, $(a',\, Q_2)$, $\dots$ with the same first element $= a'$ into a single data object $(a',\, \calQ_{a'} := \{Q_1, Q_2, \dots\})$. \\
        \algcomment{Note that for each sum $a'$ we have $a' \equiv_{p} r$ and $|\calQ_{a'}| \leq |\calQ^{+\epsilon}(C)|$.}
    \end{enumerate}
    \end{flushleft}
    \end{mdframed}
    \caption{The {\ResidueCoupleList} subroutine.
    \label{fig:alg:residue-couple-list}}
\end{figure}

\begin{lemma}[Sorted Lists $\bR_{A,\, \boldr}$ and $\bR_{B,\, \boldr}$; {Lines~\ref{alg:memo-ov:6} and~\ref{alg:memo-ov:7}}]
\label{lemma:sse-quarter}
In the context of \blackref{alg:memo-ov}:
{\bf (i)}~Line~\ref{alg:memo-ov:6} uses the subroutine \blackref{alg:residue-couple-list} to create the sorted lists $\bR_{A,\, \boldr}$, $\bR_{B,\, \boldr}$ in time $\tO(|\bR_{A,\, \boldr}| + |\bR_{B,\, \boldr}|)$. Moreover, {\bf (ii)}~Line~\ref{alg:memo-ov:7} finds a solution pair $(a',\, Q_1) \in \bR_{A,\, \boldr}$, $(b',\, Q_2) \in \bR_{B,\, \boldr}$, if one exists, in time $O(|\bR_{A,\, \boldr}| + |\bR_{B,\, \boldr}|)$.
\end{lemma}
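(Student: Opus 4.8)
The plan is to analyze the two parts separately, in each case exploiting the ``succinct'' sum-collection format $(a',\, \calQ_{a'})$ that groups together all near-quartersets producing the same augmented sum, so that the total running time is proportional to the number of \emph{distinct} sums rather than to the number of sum-subset couples.

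For part~\textbf{(i)}, I would examine \blackref{alg:residue-couple-list} directly. In Line~\ref{alg:residue-couple-list:2}, for each $Q \in \calQ^{+\epsilon_2}(C)$ the subroutine reads off a single precomputed sorted residue sublist $L_{A,\, j(Q)}$ (already available from Line~\ref{alg:memo-ov:3} of \blackref{alg:memo-ov}) and applies the shift $a \mapsto (a+\Sigma(Q),\, Q)$, which preserves sortedness; this costs $O(|L_{A,\, j(Q)}|)$ time (plus an $O(\log \word)$ factor for the modular index computation $j(Q)$, absorbed into $\tO(\cdot)$), and summing over the $|\calQ^{+\epsilon_2}(C)|$ choices of $Q$ gives $\tO(\sum_Q |R_Q|)$. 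In Line~\ref{alg:residue-couple-list:3}, merging the $|\calQ^{+\epsilon_2}(C)| = \poly(\word)$ sorted lists $\{R_Q\}$ and compressing equal-sum couples is a standard $k$-way merge, taking time $\tO(\sum_Q |R_Q|)$ as well. The key observation is that $\sum_Q |R_Q|$ is exactly the number of sum-subset couples feeding into $\bR_{A,\, \boldr}$, and after compression $|\bR_{A,\, \boldr}|$ counts the distinct sums; since each couple contributes to exactly one distinct sum and $|\calQ_{a'}| \leq |\calQ^{+\epsilon_2}(C)| = \poly(\word)$, we have $\sum_Q|R_Q| = \tO(|\bR_{A,\, \boldr}|)$ once we account for the $\poly(\word)$ blow-up inside $\tO(\cdot)$ --- or, more carefully, one should state the running time as $\tO(|\bR_{A,\, \boldr}|)$ where $|\bR_{A,\, \boldr}|$ denotes the succinct length (number of data objects), which is how it is used in the runtime proof of \Cref{thm:memo-ov}. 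The same reasoning applies verbatim to $\bR_{B,\, \boldr}$ with residue target $t - \boldr$ in place of $\boldr$.

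For part~\textbf{(ii)}, I would run the two-pointer scan of \blackref{alg:MeetInTheMiddle} (Line~\ref{alg:MeetInTheMiddle:3} of \Cref{fig:alg:MeetInTheMiddle}) on the sums $a'$ in $\bR_{A,\, \boldr}$ and the sums $b'$ in $\bR_{B,\, \boldr}$, looking for $a' + b' = t$. Because $a' \equiv_{\bp} \boldr$ for every entry of $\bR_{A,\, \boldr}$ and $b' \equiv_{\bp} t - \boldr$ for every entry of $\bR_{B,\, \boldr}$, any matching pair automatically has $a' + b' \equiv_\bp t$, so the scan is not searching among spuriously misaligned residues. When the pointers land on a pair $(a',\, \calQ_{a'})$, $(b',\, \calQ_{b'})$ with $a' + b' = t$, we must check whether some $Q_1 \in \calQ_{a'}$ and $Q_2 \in \calQ_{b'}$ are disjoint, i.e.\ $Q_1 \cap Q_2 = \emptyset$: this is exactly an instance of Orthogonal Vectors on two lists of at most $|\calQ^{+\epsilon_2}(C)| \leq \word / |C|$ Boolean vectors of dimension $|C| = \Theta(\log\word)$ (identifying each near-quarterset with its indicator vector), which is precisely what \blackref{function:OV} computes as a single $\mathsf{AC}^0$ word operation, hence in $O(1)$ time in the circuit RAM model. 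I would note that each such equality $a'+b' = t$ can be encountered only $O(|\bR_{A,\,\boldr}| + |\bR_{B,\,\boldr}|)$ times over the whole scan (the pointers advance monotonically, so the walk visits at most $|\bR_{A,\,\boldr}| + |\bR_{B,\,\boldr}|$ pairs total), and each visit costs $O(1)$; thus the total time is $O(|\bR_{A,\, \boldr}| + |\bR_{B,\, \boldr}|)$.

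I do not anticipate a serious obstacle here --- the lemma is essentially bookkeeping --- but the one point requiring care is making the ``succinct format'' bookkeeping precise: one must fix whether $|\bR_{A,\, \boldr}|$ refers to the number of couples or the number of distinct sums, confirm that \blackref{function:OV} genuinely accepts the compressed collections $\calQ_{a'}$, $\calQ_{b'}$ as inputs (their sizes are bounded by $|\calQ^{+\epsilon_2}(C)| \le \word/|C|$, matching the signature of \blackref{function:OV}), and verify that the $\poly(\word) = \poly(n)$-factor overheads from modular arithmetic, $k$-way merging, and the OV-vector packing are all of the form $\polylog(\word)$ or are charged against quantities already present in the $\tO(\cdot)$ bound used downstream. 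Once those conventions are pinned down, parts~(i) and~(ii) follow as described.
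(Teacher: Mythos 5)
Your proposal follows the paper's approach—same decomposition into Lines~\ref{alg:residue-couple-list:2} and~\ref{alg:residue-couple-list:3}, same $k$-way merge analysis, same observation that the disjointness check is a single $\mathsf{AC}^0$ call to \blackref{function:OV}—but you get tangled on precisely the bookkeeping point you flag as needing care, and neither of the two resolutions you offer is the right one.

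In the paper, $\bR_{A,\, \boldr}$ is \emph{defined} in Line~\ref{alg:memo-ov:6} of \blackref{alg:memo-ov} as a set of sum-subset couples, so $|\bR_{A,\, \boldr}|$ means the total number of couples, which is exactly $\sum_Q |R_Q|$. The ``succinct format'' of \blackref{alg:residue-couple-list} is purely a storage optimization; it does not change the meaning of $|\bR_{A,\, \boldr}|$. This convention is also what the downstream runtime argument for \Cref{thm:memo-ov} uses, where the cutoff $k(n,\word)$ bounds the total number of sum-subset couples generated. With this reading, part~(i) requires no bridging inequality at all: the work in the subroutine is $\tO\bigl(\sum_Q |R_Q|\bigr) = \tO(|\bR_{A,\, \boldr}|)$. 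Part~(ii) follows because the two-pointer scan visits at most as many data objects as there are couples, and each visit (including the one \blackref{function:OV} call) costs $O(1)$.

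Your first suggestion—that a $\poly(\word)$ blow-up can be absorbed into $\tO(\cdot)$—is wrong: the paper's $\tO(\cdot)$ suppresses only $\polylog(\word) = \polylog(n)$ factors. Your second suggestion—that $|\bR_{A,\, \boldr}|$ should denote the number of compressed data objects—would actually make part~(i) false as stated, since the subroutine's work is proportional to $\sum_Q |R_Q|$, which can exceed the number of distinct sums by a factor up to $|\calQ^{+\epsilon_2}(C)| = \poly(\word)$. Once you adopt the couple-count convention, there is no gap to close and the rest of your argument matches the paper's proof.
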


\begin{proof}
Line~\ref{alg:memo-ov:6} creates $\bR_{A,\, \boldr}$ and $\bR_{B,\, \boldr}$ using the subroutine \blackref{alg:residue-couple-list} (see \Cref{fig:alg:residue-couple-list}). First, we claim that \blackref{alg:residue-couple-list} takes time $\tO(|\bR_A|)$ in the regime $\word = \poly(n)$:
\begin{flushleft}
\begin{itemize}
    \item Line~\ref{alg:residue-couple-list:2} takes time $\Sigma_{Q \,\in\, \calQ^{+\epsilon_2}(C)} O(|\bR_Q|) = O(|\bR_{A,\, \boldr}|)$, since all shifts $\Sigma(Q)$ for $Q \in \calQ^{+\epsilon_2}(C)$ can be precomputed and memoized when the collection $\calQ^{+\epsilon_2}(C)$ is created in Line~\ref{alg:memo-ov:3}.
    
    \item Line~\ref{alg:residue-couple-list:3} builds one sorted list $\bR_{A,\, \boldr}$ from $|\calQ^{+\epsilon_2}(C)| \leq 2^{|C| / \beta} = \word / (\beta\log\word) = \poly(n)$ sorted sublists $\{\bR_{Q}\}_{Q \,\in\, \calQ^{+\epsilon_2}(C)}$, taking time $O(|\bR_{A,\, \boldr}| \cdot \log n)$ via the classic {\em merge sort} algorithm.
\end{itemize}
\end{flushleft}

After Line~\ref{alg:memo-ov:6} creates the sorted lists $\bR_{A,\, \boldr} = \{(a',\, \bcalQ_{a'})\}$ and $\bR_{B,\, \boldr} = \{(b',\, \bcalQ_{b'})\}$, Line~\ref{alg:memo-ov:7} can run \blackref{alg:MeetInTheMiddle} based on the (ordered) indices $a'$ and $b'$ in time $O(|\bR_{A,\, \boldr}| + |\bR_{B,\, \boldr}|)$.
This ensures that we discover every pair $(a',\, \bcalQ_{a'})$, $(b',\, \bcalQ_{b'})$ such that $a' + b' = t$. Such a pair yields a solution if and only if it contains two disjoint near-quartersets $Q_1 \in \bcalQ_{a'}$, $Q_2 \in \bcalQ_{b'}$ with $Q_1 \cap Q_2 = \emptyset$, which we can check in constant time by one call of the Boolean function \blackref{function:OV}.
Namely, each near-quarterset $Q \in \calQ^{+\epsilon_2}(C)$ is stored in $|C| < \beta \log \word$ bits, so each collection $\bcalQ_{a'},\, \bcalQ_{b'} \subseteq \calQ^{+\epsilon_2}(C)$
can be stored a single word $|C| \cdot |\calQ^{+\epsilon_2}(C)| \leq \word$. Thus one call of \blackref{function:OV} suffices to check a given pair $(a',\, \bcalQ_{a'})$, $(b',\, \bcalQ_{b'})$. Overall, Line~\ref{alg:memo-ov:7} takes time $O(|\bR_{A,\, \boldr}| + |\bR_{B,\, \boldr}|)$.
\end{proof}

\begin{observation}[Adapting \blackref{alg:memo-ov} to Word RAM]
\label{obs:memo-ov-word-RAM}
In the word RAM model, it may take superconstant time to evaluate the Boolean function \blackref{function:OV}. Similar to the strategy used in \Cref{sec:bit-packing}, our word RAM variant avoids this issue by performing \blackref{alg:memo-ov} as if the word length were $\word' := 0.1n$ (using sets $A'$, $B'$, $C'$, etc., with appropriate size modifications), except for three modifications:
\begin{enumerate}
    \item In lines~\ref{alg:memo-ov:3} and \ref{alg:residue-couple-list:3}, on creating a (sub)collection $\calQ \subseteq \calQ^{+\epsilon_2}(C')$ as a bit string, store it in at most $\leq \lceil |\calQ^{+\epsilon_2}(C')| \cdot |C'| / \word \rceil \leq \lceil \word' / \word \rceil = \Theta(1)$ words (since a single word with $\word$ bits may be insufficient).
    
    \item In Line~\ref{alg:memo-ov:3}, after creating the collection $\calQ^{+\epsilon_2}(C')$, create a lookup table \texttt{OV}$'$ that memoizes the input-output result of the Boolean function \blackref{function:OV} on each subcollection pair $\calQ_{a'},\, \calQ_{b'} \subseteq \calQ^{+\epsilon_2}(C')$
    in time $(2^{|\calQ^{+\epsilon_2}(C')|})^2 \cdot \poly(|\calQ^{+\epsilon_2}(C')|) \leq (2^{\word'})^{2} \cdot \poly(\word') = O(2^{0.21n})$. This table can then be accessed using a $2\lceil \word' / \word \rceil = \Theta(1)$-word index in constant time.
    \item Line~\ref{alg:memo-ov:7} replaces the Boolean function \blackref{function:OV} (namely a constant-time $\mathsf{AC}^0$ circuit RAM operation) with constant-time lookup into \texttt{OV}$'$.
\end{enumerate}
Compared with running \blackref{alg:memo-ov} itself for $\word' = 0.1n$, the only difference of this variant is that $\bR_{A,\, \boldr}$ and $\bR_{B,\, \boldr}$ are stored in $\lceil \word' / \word \rceil = \Theta(1)$ times as many words, given $\word = \Omega(n)$. Hence, it is easy to check the correctness and the runtime $O(2^{n/2} \cdot {\word'}^{-\gamma} \cdot \lceil \word' / \word \rceil) = O(2^{n/2} \cdot n^{-\gamma})$, for the same constant $\gamma > 0.01$.
\end{observation}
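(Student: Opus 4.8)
The plan is to mimic the word-RAM adaptation of \blackref{alg:BitPacking} from \Cref{obs:bit-packing-word-RAM}: the only step of \blackref{alg:memo-ov} that genuinely exploits circuit RAM is the single $\mathsf{AC}^0$ call to \blackref{function:OV} in Line~\ref{alg:memo-ov:7}, so we replace that call by a lookup into a precomputed table, and to keep the table small we run the whole algorithm at an artificially shrunk word length $\word' := 0.1n$. Concretely, I would set $|C'| = \beta\log(\word'/(\beta\log\word'))$ and derive $|A'| = |B'|$, the modulus range for $\bp$, and the bounds $s(\cdot)$, $k(\cdot)$ exactly as in \blackref{alg:memo-ov} but with $\word'$ in place of $\word$, keeping all numerical constants $\epsilon_1, \epsilon_2, \beta, \lambda, \gamma$ unchanged. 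Since $\word' = \Theta(n)$, \Cref{thm:memo-ov} already predicts a runtime of $\tO(2^{n/2}\cdot{\word'}^{-\gamma}) = \tO(2^{n/2}\cdot n^{-\gamma})$ for this parameter setting; the work is to certify that each operation \blackref{alg:memo-ov} charged as ``constant time on one $\word'$-bit circuit-RAM word'' can be done in constant time on the real machine, whose words have length $\word = \Omega(n)$ but possibly $\word < \word'$.

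First I would deal with storage (modification~1). A near-quarterset $Q\in\calQ^{+\epsilon_2}(C')$ takes $|C'| < \beta\log\word'$ bits and a collection $\calQ\subseteq\calQ^{+\epsilon_2}(C')$ takes at most $|C'|\cdot|\calQ^{+\epsilon_2}(C')|\leq\word'$ bits, so on an $\word$-bit machine it fits in $\lceil\word'/\word\rceil = \Theta(1)$ words. Hence the lists $\bR_{A,\,\boldr}$, $\bR_{B,\,\boldr}$ built by \blackref{alg:residue-couple-list} occupy a $\Theta(1)$ factor more words than in the circuit-RAM version. This does not change the per-couple cost: the \blackref{alg:MeetInTheMiddle}-style sweep in Line~\ref{alg:memo-ov:7} only compares the integer sums $a',b'$ (each a single $\word$-bit value), and the merges in \blackref{alg:residue-couple-list} and the modular reductions in Line~\ref{alg:memo-ov:3} are native word-RAM operations up to $\polylog$ factors absorbed by $\tO(\cdot)$.

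Next I would handle \blackref{function:OV} (modifications~2 and~3). Immediately after $\calQ^{+\epsilon_2}(C')$ is created in Line~\ref{alg:memo-ov:3}, precompute a table $\texttt{OV}'$ recording, for every ordered pair of subcollections $\calQ_{a'},\calQ_{b'}\subseteq\calQ^{+\epsilon_2}(C')$, whether the pair contains two disjoint near-quartersets. There are at most $\big(2^{|\calQ^{+\epsilon_2}(C')|}\big)^2\leq(2^{\word'})^2 = 2^{0.2n}$ such pairs, each handled by brute force in $\poly(\word') = \poly(n)$ time, so the table is built in $O(2^{0.21n})$ time, which is $o(2^{n/2}n^{-\gamma})$ and thus negligible. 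Since each subcollection is a $\Theta(1)$-word bit string, a pair is addressed by a $2\lceil\word'/\word\rceil = \Theta(1)$-word index, so lookups run in $O(1)$ time; Line~\ref{alg:memo-ov:7} then replaces the $\mathsf{AC}^0$ evaluation of \blackref{function:OV} by one such lookup.

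To conclude, I would note that these modifications touch only the data layout and not the control flow, so correctness follows verbatim from the proof of \Cref{thm:memo-ov}, and the total runtime is the runtime of \blackref{alg:memo-ov} at word length $\word'$ (namely $\tO(2^{n/2}{\word'}^{-\gamma})$), plus the one-time $O(2^{0.21n})$ preprocessing, times the $\lceil\word'/\word\rceil = \Theta(1)$ blow-up from multi-word objects --- giving $O(2^{n/2}\cdot n^{-\gamma})$ for the same $\gamma > 0.01$. The only point that really needs care, and the main obstacle, is to confirm that \blackref{function:OV} is the \emph{sole} place where \blackref{alg:memo-ov} used more than native word operations --- everything else (the prime-modulus reductions, the merge steps, the couple-compression in \blackref{alg:residue-couple-list}, and the enumeration lemmas) is either a standard arithmetic operation or a routine with $\polylog$ overhead folded into $\tO(\cdot)$ --- and to check that the preprocessing exponent $0.21n$ stays comfortably below $n/2$; both are straightforward.
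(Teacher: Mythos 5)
Your proposal is correct and follows essentially the same approach as the paper: run the algorithm at artificial word length $\word'=0.1n$, store subcollections in $\lceil\word'/\word\rceil=\Theta(1)$ machine words, precompute a $(2^{\word'})^2\cdot\poly(\word')=O(2^{0.21n})$-size lookup table for \blackref{function:OV}, and replace the $\mathsf{AC}^0$ call with constant-time table lookup, with the $\Theta(1)$ storage blowup absorbed into the constant. You also correctly identify the only point needing care (that \blackref{function:OV} is the sole non-native operation, and that the $O(2^{0.21n})$ preprocessing is negligible), which is exactly the justification the paper sketches.
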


\newcommand{\bcalC}{\boldsymbol{\calC}}
\newcommand{\SampleList}{\text{\tt Sample-Packing}}
\newcommand{\SearchList}{\text{\tt Sample-Searching}}

\section{ Subset Sum in Time \texorpdfstring{$O(2^{n/2} \cdot n^{-0.5023})$}{} }
\label{sec:main}

The algorithm in this section is a delicate combination of \blackref{alg:BitPacking} and \blackref{alg:memo-ov}. Our main result (see \Cref{thm:main}) demonstrates that problem-specific features of Subset Sum can be exploited to obtain an (additional) nontrivial time savings beyond what can be achieved only by augmenting \blackref{alg:MeetInTheMiddle} with generic bit-packing tricks. Although the bookkeeping to analyze the algorithm of this section is somewhat intricate, many of the ideas in this section were previewed in \Cref{sec:memo-ov}.

We begin by describing a difficulty that arises in the attempt to combine the two building-block algorithms: (i)~\blackref{alg:BitPacking} saves time by removing a subset $D \subseteq X$ from the input, then running a \blackref{alg:MeetInTheMiddle} variant on the resulting subinstance multiple times.
(ii)~\blackref{alg:memo-ov} runs a \blackref{alg:MeetInTheMiddle} variant on multiple subinstances indexed by residue classes modulo a random prime $\bp$.
This presents a problem: to get the time savings from bit packing, we would like to reuse subinstances multiple times, but to get the time savings from the representation method we need to build separate subinstances with respect to each residue class $\pmod{\bp}$ that contains elements of $W(D)$. To solve this problem, we construct $D$ in a way that ensures the elements of $W(D)$ fall into few residue classes $\pmod{\bp}$.
Specifically, we fix $\bp$ and consider two cases.
In one case\ignore{Case~I}, the elements of $X$ fall into few residue classes $\pmod{\bp}$ and it is possible to choose a small set $D$ such that the elements of $D$ (and $W(D)$) fall into very few residue classes $\pmod{\bp}$. In this case we need to construct just $|W(D) \bmod \bp| = \polylog(n)$ distinct subinstances.
In the other case\ignore{Case~II}, the elements of $X$ fall into many residue classes $\pmod{\bp}$. In this case, a carefully selected $D$ satisfies the weaker bound $|W(D) \bmod \bp| = \tO(n^\delta)$ for a small constant $\delta > 0$, increasing the number of subinstances we need to construct. However, the fact that the elements of $X$ fall into many residue classes $\pmod{\bp}$ lets us select a larger set $C$ such that the subset sums in $W(S \cap C)$ distribute well $\pmod{\bp}$ for any solution $S$, which offsets the increase in runtime.

Like the $\mathsf{AC}^0$ operation \blackref{function:OV} in the \blackref{alg:memo-ov} algorithm,
the core of our new algorithm is another $\mathsf{AC}^0$ operation, \blackref{function:packed-OV}, that solves Orthogonal Vectors on small instances. In comparison, \blackref{function:packed-OV} also takes as input two $\word$-bit words containing multiple bit vectors of length $O(\log n)$, but now the bit vectors in either word may come from two or more lists, and each list is indexed by the $m$-bit hash $\bh_m(s)$ of a corresponding subset sum $s$, for $m = 3\log\word$. Hence, \blackref{function:packed-OV} may solve multiple small Orthogonal Vectors instances at once.

\afterpage{
\begin{figure}[t]
\centering
\begin{mdframed}
Procedure $\term[\PackedRepresentationOV]{alg:packed-rep-ov}(X,\, t)$

\begin{flushleft}
{\bf Input:} A multiset $X = \{x_{1},\, x_{2},\, \dots,\, x_{n}\}$ and an integer target $t$.

\vspace{.05in}
{\bf Setup:} Constants $\epsilon_1 \approx 0.1579$ and $\epsilon_2 \approx 0.2427$.
\\
\white{\bf Setup:} Constants $\beta = \beta(\rho)$, $\epsilon'_1 = \epsilon'_1(\rho)$, $\lambda = \lambda(\rho)$, solutions to \Cref{eq:lambda,eq:main}. \\
\white{\bf Setup:} Parameters $e_1 < 0.5$, $e_2$, $s(\word)$, $k(n,\, \word)$ to be specified. \\
\white{\bf Setup:} Modulus $\bp \sim \P[\word^{1+\beta/2}: 2\word^{1+\beta/2}]$. \\
\white{\bf Setup:} A random pseudolinear hash function $\bh_m$ satisfying \Cref{lem:hash}.
\begin{enumerate}
\setcounter{enumi}{-1}
\setlength\itemsep{0.3em}
    \item\label{alg:packed-rep-ov:0}
    Create random disjoint subsets $\bC,\, \bD \subseteq X$ (whose existence is guaranteed) as  \\
    follows, and then fix any partition of $X \setminus (\bC \cup \bD) = \bA \cup \bB$ such that $|\bA| = |\bB| = \frac{n - |\bC| - |\bD|}{2}$.
    
    {\bf \term[{\bf Case~(I)}]{pack-rep-case1}: $|X \bmod \bp| > \word^{\rho} / \log \word$.\quad}
    Let $e_1 := \epsilon'_1$ and $e_2 := 0$. \\
    Use \Cref{lemma:select-C} to find a particular size-$(\frac{1}{2}\log(\word^{\rho} / \log \word))$ subset $\bC \subseteq X$. \\
    Use \Cref{lemma:select-D} to find a particular size-$((2 - \rho + \beta/2)\log \word)$ subset $\bD \subseteq (X \setminus \bC)$.
    
    {\bf \term[{\bf Case~(II)}]{pack-rep-case2}: $|X \bmod \bp| \leq \word^{\rho} / \log \word$.\quad}
    Let $e_1 := \epsilon_1$ and $e_2 := \epsilon_2$. \\
    Find a size-$(\log \word)$ subset $\bD \subseteq X$ of congruent integers $|\bD \bmod \bp| = 1$. \\
    Sample a uniform random subset $\bC \sim \{T \mid T \subseteq (X \setminus \bD) ~\text{and}~ |T| = \beta \log(\word / (\beta \log \word))\}$.
    
    \item\label{alg:packed-rep-ov:1}
    Use \cref{lemma:input2-2} to solve $(X,\, t)$ if there exists a solution $S \subseteq X$ with $|S \cap \bC| \notin (1 \pm e_1)\frac{|\bC|}{2}$.
    
    \item\label{alg:packed-rep-ov:2}
    In \blackref{pack-rep-case2} only, use \Cref{lemma:input1-2} to solve $(X,\, t)$ if $|W(\bC)| \leq 2^{|\bC|} \cdot \word^{-\lambda}$.
    
    \item\label{alg:packed-rep-ov:3}
    Create the set $\bW(\bD)$ and the collection $\bcalQ^{+e_2}(\bC) = \{ Q \mid Q \subseteq \bC \text{ and } |Q| \leq (1 + e_2)\frac{|\bC|}{4}\}$. \\
    Create the sorted lists $\bL_{\bA}$, $\bL_{\bB}$ using \blackref{alg:SSE}.
    Let $\{\bL_{\bA,\, i}\}_{i \in [\bp]}$ be the sorted sublists given by $\bL_{\bA,\, i} := \{a \in \bL_{\bA} \mid a \equiv_{\bp} i\}$ and likewise for $\{\bL_{\bB,\, i}\}_{i \in [\bp]}$.
    
    \item\label{alg:packed-rep-ov:4}
    For each $t' \in ((t - \bW(\bD)) \bmod \bp)$:
    
    \item\label{alg:packed-rep-ov:5}
    \qquad Repeat Line~\ref{alg:packed-rep-ov:6} either $s(\word)$ times or until a total of $k(n,\, \word)$ many sum-subset \\
    \qquad couples have been created (whichever comes first):
    
    \item\label{alg:packed-rep-ov:6}
    \qquad\qquad Sample a uniform random residue $\boldr \sim [\bp]$. Then create the sorted  \\
    \qquad\qquad sum-subset lists,\\
    $\bR_{\bA,\, t',\, \boldr} = \{(a' := a + \Sigma(Q_1), Q_1) \mid (a, Q_1) \in \bL_{\bA} \times \bcalQ^{+e_2}(\bC) ~\text{with}~ a' \equiv_{\bp} \boldr \}$, \\
    $\bR_{\bB,\, t',\, \boldr} = \{(b' := b + \Sigma(Q_2), Q_2) \mid (b, Q_2) \in \bL_{\bB} \times \bcalQ^{+e_2}(\bC) ~\text{with}~ b' \equiv_{\bp} (t' - \boldr) \}$, \\
    \qquad\qquad in the sum-collection format $\bR_{\bA,\, t',\, \boldr} = \{(a',\, \bcalQ_{a'})\}$, $\bR_{\bB,\, t',\, \boldr} = \{(b',\, \bcalQ_{b'})\}$.
    
    \item\label{alg:packed-rep-ov:7}
    \qquad Let $\bR_{\bA,\, t'}$ be the list obtained by merging $\bigcup_{\boldr} \bR_{\bA,\, t',\, \boldr}$, sorted by sums $a'$, and\\  
    \qquad likewise for $\bR_{\bB,\, t'}$. Then use \Cref{lemma:pack-main}\ignore{ and the subroutine \blackref{alg:sample-list}} to create the hashed and packed lists\\
    \qquad  $\bH_{\bA,\, t'}$, $\bH_{\bB,\, t'}$ from $\bR_{\bA,\, t'}$, $\bR_{\bB,\, t'}$, respectively.
    
    \item\label{alg:packed-rep-ov:8}
    For each $t'' \in (t - \bW(\bD))$:
    
    \item\label{alg:packed-rep-ov:9}
    \qquad Use \Cref{lemma:MiM-main} 
    to search the sorted lists $\bH_{\bA,\, (t'' \bmod \bp)}$, $\bH_{\bB,\, (t'' \bmod \bp)}$, \\
    \qquad  $\bR_{\bA,\, (t'' \bmod \bp)}$, $\bR_{\bB,\, (t'' \bmod \bp)}$ 
    for a solution pair $(a',\, Q_1)$, $(b',\, Q_2)$ such that \\ 
    \qquad $a' + b' = t''$ and $Q_1 \cap Q_2 = \emptyset$. Halt and return ``yes'' if a solution is found. \\
    
    \item\label{alg:packed-rep-ov:10}
    Return ``no'' (i.e., no solution was found).
    \algcomment{Possibly a false negative.}
\end{enumerate}
\end{flushleft}
\end{mdframed}
\caption{The {\PackedRepresentationOV} algorithm.  \label{fig:alg:packed-rep-ov}}
\end{figure}
\clearpage}

\begin{theorem}
\label{thm:main}
    \PackedRepresentationOV is a one-sided error randomized algorithm for the Subset Sum problem (with no false positives) with worst-case runtime $O(2^{n/2} \cdot n^{-(1/2 + \gamma)})$, for some constant $\gamma > 0.0023$, and a success probability at least $1/12$ in the circuit RAM model.
\end{theorem}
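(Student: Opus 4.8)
The plan is to follow the two-part template of the proof of \Cref{thm:memo-ov} --- establish correctness (no false positives, and ``yes'' instances accepted with probability $\geq 1/12$), then bound the runtime --- but now with the bit-packing analysis of \Cref{thm:bit-packing} layered on top of the representation-method analysis, and with everything split on \blackref{pack-rep-case1} versus \blackref{pack-rep-case2}. No false positives is immediate: \blackref{alg:packed-rep-ov} returns ``yes'' only when Line~\ref{alg:packed-rep-ov:9} exhibits $a \in \bL_{\bA}$, $b \in \bL_{\bB}$, disjoint $Q_1, Q_2 \in \bcalQ^{+e_2}(\bC)$, and $t'' \in t - \bW(\bD)$ with $a + \Sigma(Q_1) + b + \Sigma(Q_2) = t''$, which assembles a genuine subset of $X$ summing to $t$. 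For the ``yes'' direction I would first invoke the preprocessing lemmas \Cref{lemma:input1-2,lemma:input2-2} (the prime/packed analogues of \Cref{lemma:input1,lemma:input2}) so that, after Lines~\ref{alg:packed-rep-ov:1}--\ref{alg:packed-rep-ov:2}, it suffices to treat a solution $S$ with $|S \cap \bC| \in [(1-e_1)\tfrac{|\bC|}{2} : \tfrac{|\bC|}{2}]$ and, in \blackref{pack-rep-case2}, $|W(\bC)| > 2^{|\bC|}\cdot\word^{-\lambda}$. Then, exactly as in \Cref{thm:memo-ov}, let $W'$ be the set of sums of $e_2$-balanced subsets of $S \cap \bC$ and call $\boldr$ \emph{good} if $\boldr - \Sigma(S \cap \bA) \in (W' \bmod \bp)$; a good residue together with the correct shift $t' := (t - \Sigma(S \cap \bD)) \bmod \bp$ places the splitting pair $(Q_1,\, Q_2 := (S\cap\bC)\setminus Q_1)$ into $\bR_{\bA,t',\boldr}$ and $\bR_{\bB,t',\boldr}$, so \Cref{lemma:MiM-main} recovers $S$ once $t''$ ranges over $t - \bW(\bD)$ and hits $\Sigma(S\cap\bD)$. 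The quantitative content is then: (i)~$|W'|$ is large \emph{and} spreads over $\tOmega(|W'|)$ residues mod $\bp$ --- in \blackref{pack-rep-case2} this is Condition~(2) plus \Cref{lemma:prime-dist} as before, and in \blackref{pack-rep-case1} it comes from \Cref{lemma:select-C}, which uses $|X \bmod \bp|$ being large to choose $\bC$ so that $W(S\cap\bC)$ distributes well mod $\bp$ regardless of additive structure; (ii)~hence $s(\word)=\tTheta(\bp/|W'\bmod\bp|)$ samples $\boldr$ contain a good one with probability $\geq 2/3$; (iii)~the couple-count cutoff $k(n,\word)$, set to a large constant times the expected number of couples generated per shift (at most $(|\bL_{\bA}|+|\bL_{\bB}|)\cdot|\bcalQ^{+e_2}(\bC)|\cdot s(\word)/\bp$), is exceeded with probability $\leq 1/6$ by Markov; and (iv)~the pseudolinear hash $\bh_m$ with $m=3\log\word$ loses nothing, since the true pair survives with probability $1$ by \Cref{lem:hash} and the expected extra work from collisions in the packed \blackref{alg:MeetInTheMiddle} of \Cref{lemma:MiM-main} is $o_n(1)$, exactly as in \Cref{thm:bit-packing} (the compared word-sequence is hash-independent). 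The construction of $\bD$ via \Cref{lemma:select-D} keeps $|\bW(\bD)\bmod\bp|$ small, so Lines~\ref{alg:packed-rep-ov:4}--\ref{alg:packed-rep-ov:7} build only $\tO(n^\delta)$ (resp.\ $\polylog n$) packed subinstances, one of which is the shift we need. Multiplying ($\geq 3/4$ over $\bp$, $\geq 2/3$ over the $\boldr$ samples, less $1/6$ for the cutoff, times an extra constant lost to the randomness in $\bC,\bD$) gives success probability $\geq 1/12$.

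For the runtime I would bound each line with its auxiliary lemma: Lines~\ref{alg:packed-rep-ov:1}--\ref{alg:packed-rep-ov:2} cost $\tO(2^{n/2}\cdot\word^{-\Theta(1)})$ by \Cref{lemma:input1-2,lemma:input2-2}; Line~\ref{alg:packed-rep-ov:3} costs $\tO(2^{n/2}\cdot\word^{-|\bC|/(2\log\word)})$ by \Cref{lem:sse}; Lines~\ref{alg:packed-rep-ov:4}--\ref{alg:packed-rep-ov:7} cost $|(t-\bW(\bD))\bmod\bp|\cdot\tO(k(n,\word))$, since the succinct storage of \Cref{lemma:pack-main} charges only $\tO(1)$ per couple; and Lines~\ref{alg:packed-rep-ov:8}--\ref{alg:packed-rep-ov:9} cost $\sum_{t''\in t-\bW(\bD)}\tO(|\bH_{\bA,(t''\bmod\bp)}|+|\bH_{\bB,(t''\bmod\bp)}|)$, where each $|\bH|$ already carries the $\word/m$ packing speedup via \Cref{lemma:MiM-main}. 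I then substitute the parameters --- $|\bC|=\tfrac12\log(\word^\rho/\log\word)$, $|\bD|=(2-\rho+\beta/2)\log\word$ in \blackref{pack-rep-case1}, and $|\bC|=\beta\log(\word/(\beta\log\word))$, $|\bD|=\log\word$ in \blackref{pack-rep-case2}, with $\beta(\rho),\epsilon'_1(\rho),\lambda(\rho)$ the solutions of \Cref{eq:lambda,eq:main} --- and optimize over $\rho$. In each case the speedup exponent is the minimum of a short list of expressions (one per bottleneck line), and I would verify numerically that this minimum, at the optimal $\rho$, exceeds $1/2+0.0023$. The word-RAM version follows the memoization trick of \Cref{obs:bit-packing-word-RAM,obs:memo-ov-word-RAM}, running at word length $\word'=0.1n$.

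The delicate part is not any individual estimate but the global parameter balancing: choosing the case threshold $\rho$ (and then $\beta,\epsilon'_1,\lambda,e_1,e_2,s(\word),k(n,\word)$) so that \emph{simultaneously} (a) in \blackref{pack-rep-case1} the set $\bC$ is large enough that the representation method shrinks the search space appreciably, yet $\bD$ can still be taken large with $W(\bD)$ confined to $\tO(n^\delta)$ residues mod $\bp$; (b) in \blackref{pack-rep-case2} the representation-method accounting of \Cref{thm:memo-ov} still closes with the extra $2^{|\bD|}$-fold reuse and $\word/m$-fold packing savings layered on; and (c) the $o_n(1)$ collision bound of \blackref{function:packed-OV} survives the fact that the packed sublists are now residue classes mod $\bp$ possibly drawn from several $\bh_m$-indexed lists, so one must confirm the compared word-sequence is still hash-independent. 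Getting all the bottleneck exponents clear of $1/2+0.0023$ at a common $\rho$ is the quantitative heart of the argument; the rest is bookkeeping parallel to \Cref{sec:bit-packing,sec:memo-ov}.
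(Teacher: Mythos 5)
Your proposal follows the paper's approach at a high level — split on \blackref{pack-rep-case1} versus \blackref{pack-rep-case2}, invoke \Cref{lemma:input1-2,lemma:input2-2} to enforce Conditions~\blackref{con:packed-rep-ov:1}--\blackref{con:packed-rep-ov:2}, use \Cref{lemma:select-C} (resp.\ \Cref{lemma:prime-dist} with Condition~\blackref{con:packed-rep-ov:2}) to get a large residue set $(W'\bmod \bp)$, set $s(\word)$ and $k(n,\word)$ accordingly, and then do the line-by-line runtime accounting with the hashing/packing savings of \Cref{lemma:pack-main,lemma:MiM-main} layered on. That is the right skeleton. However, there is a genuine gap in your success-probability accounting, and it is precisely where the paper spends the most care. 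You treat the various sources of randomness as roughly independent and absorb the rest into ``an extra constant lost to the randomness in $\bC,\bD$,'' but the key difficulty is that \emph{which case the algorithm enters is itself a function of $\bp$}. In \blackref{pack-rep-case2}, the guarantee from \Cref{lemma:prime-dist} (probability $\geq 3/4$ that $|W'\bmod\bp|$ is large) is over the uniform prime in $\P[\word^{1+\beta/2}:2\word^{1+\beta/2}]$, \emph{not} over the conditional distribution of $\bp$ given that the algorithm entered \blackref{pack-rep-case2}. A naive multiplication of probabilities is therefore not valid: the ``good'' primes for \Cref{lemma:prime-dist} could, in principle, be concentrated among \blackref{pack-rep-case1} primes, leaving nothing for \blackref{pack-rep-case2}. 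The paper resolves this by partitioning the primes into $P_1$ (entering \blackref{pack-rep-case1}) and $P_2$ (entering \blackref{pack-rep-case2}) and branching on whether $|P_2|\geq 3|P_1|$: if not, \blackref{pack-rep-case1} already fires with probability $>1/4$ and succeeds with probability $\geq 1/3$ (for \emph{any} outcome of $\bp$, since \blackref{pack-rep-case1} correctness via \Cref{lemma:select-C} is prime-deterministic), giving $\geq 1/12$; if so, removing the at-most-$1/4$ mass of $P_1$ primes from the $3/4$ bound yields a $\geq 2/3$ guarantee conditional on $\bp\sim P_2$. Your sketch never confronts this correlation, and without it the constant $1/12$ is not established. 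A second, smaller omission is the effect of restricting $\bC$ to $X\setminus\bD$ in \blackref{pack-rep-case2}: the condition-satisfying $\bC$ are characterized for all of $X$, so one must argue (as the paper does via the $\calC'(X)$ vs.\ $\calC(X\setminus D)$ comparison) that the $o_n(1)$ fraction of subsets touching $\bD$ is negligible; this is what licenses the ``$-o_n(1)$'' in the final $\geq (2/3-o_n(1))\cdot 4/5 - 1/6$ bound.

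One further clarification on the runtime side: $\rho=\frac{\log n}{\log\word}$ is a parameter of the \emph{instance}, not a free knob to ``optimize over.'' The constants $\beta(\rho)$, $\epsilon'_1(\rho)$, $\lambda(\rho)$ are tuned as functions of $\rho$ (via \Cref{eq:lambda,eq:main}), and the claimed speedup $\Omega(n^{0.5023})$ is the \emph{worst case} over the relevant range, attained at $\rho=1$ (with the sub-range $\rho \leq (2-H(1/4))^{-1}$ handled by falling back to a direct \blackref{alg:BitPacking} simulation). Your bottleneck expressions and the final numeric check are otherwise on the right track.
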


Define the density constant $\rho := \frac{\log n}{\log \word}$, which is bounded as $0 < \rho \leq 1 + \Theta(\frac{1}{\log n})$ given $\word = \poly(n)$ and $\word = \Omega(n)$.\footnote{Another density parameter $n / \log t$ is widely used in the literature (e.g.,\, \cite{austrin2016dense}), which can be regarded as $n / \word$ in our language (since $\word = \Omega(\log t)$). In comparison, our density constant $\rho = \frac{\log n}{\log \word}$ is more ``fine-grained''.}






In the density range $0 < \rho \leq (2 - H(1/4))^{-1} \approx 0.8412$, \blackref{alg:packed-rep-ov} can achieve the claimed speedup $\tilde{\Omega}(\word^{1/2}) = \tilde{\Omega}(n^{1/(2\rho)}) \geq \Omega(n^{0.5943}) \geq \Omega(n^{0.5023})$ by naively simulating \blackref{alg:BitPacking}, so it suffices to consider the remaining range $(2 - H(1/4))^{-1} < \rho \leq 1 + \Theta(1 / \log n)$.

\begin{proof}[Proof of Correctness for \PackedRepresentationOV]
The constant $\lambda = \lambda(\rho)$ given in the description of the algorithm is defined by the following equation:
\begin{equation}
    \label{eq:lambda}
    \textstyle
    \lambda ~=~ \lambda(\rho) ~:=~ (1 - 10^{-5}) \cdot \frac{1 - \epsilon_1}{2} \cdot (1 - H(\frac{1 - \epsilon_2}{2})) \cdot \beta
    ~\approx~ 1.8065 \times 10^{-2} \cdot \beta(\rho).
\end{equation}
The other two constants $\beta = \beta(\rho)$, $\epsilon'_1 = \epsilon'_1(\rho)$ will be specified later.

\vspace{.1in}
Lines~\ref{alg:packed-rep-ov:1} and \ref{alg:packed-rep-ov:2} preprocess the instance $(X,\, t)$, solving it by either or both of \Cref{lemma:input1-2,lemma:input2-2}, unless the outcome of the random subset $\bC = C$ satisfies Conditions~\blackref{con:packed-rep-ov:1} and \blackref{con:packed-rep-ov:2} in both cases.
\Cref{lemma:input1-2} applies in \blackref{pack-rep-case2}, provided $2^{|C|} = 2^{\beta \log(\word / (\beta \log \word))} = \tOmega(\word^\beta) > \word^{\lambda}$ by \Cref{eq:lambda}.
(In \blackref{pack-rep-case1} we have $|W(C) \pmod \bp| = 2^{|C|}$ by \Cref{lemma:select-C}, which is even stronger than Condition \blackref{con:packed-rep-ov:2}.)

\vspace{.05in}
\noindent
{\bf Condition~\term[(1)]{con:packed-rep-ov:1}:}
$(X,\, t)$ is either a ``yes'' instance with $|S \cap C| \in [(1 - e_1)\frac{|C|}{2}: \frac{|C|}{2}]$ for each solution $S \subseteq X$ (cf.\ \Cref{footnote:symmetry}), or a ``no'' instance.

\vspace{.05in}
\noindent
{\bf Condition~\term[(2)]{con:packed-rep-ov:2}:}
$|W(C)| > 2^{|C|} \cdot \word^{-\lambda}$. Note that this implies $|W(T)| > 2^{|T|} \cdot \word^{-\lambda}$ for each $T \subseteq C$.

\vspace{.05in}
\noindent
Lines~\ref{alg:packed-rep-ov:3} to \ref{alg:packed-rep-ov:10} accept only if a solution $(a',\, Q_1)$, $(b',\, Q_2)$ is found in Line~\ref{alg:packed-rep-ov:9} for which $a' + b' = t''$ and $Q_1 \cap Q_2 = \emptyset$. As a consequence, the algorithm never makes a false positive mistake.

\vspace{.1in}
It remains to show that Lines~\ref{alg:packed-rep-ov:3} to \ref{alg:packed-rep-ov:9} accept a ``yes'' instance $(X,\, t)$ with constant probability. Fix any outcome of the partition $X = \bA \cup \bB \cup \bC \cup \bD = A \cup B \cup C \cup D$. Consider a solution $S \subseteq X$ and the set $W' = W'(S \cap C)$ of {\em distinct} sums of all the ``$e_2$-balanced'' subsets of $S \cap C$:
\[
    W' ~:=~ \{\, \Sigma(Q) \mid Q \subseteq (S \cap C) ~\text{and}~ |Q| \in (1 \pm e_2)\tfrac{|S \cap C|}{2} \,\}.
\]
Line~\ref{alg:packed-rep-ov:3} creates the residue sublists $L_{A} = \{L_{A,\, i}\}_{i \in [\bp]}$ and $L_{B} = \{L_{B,\, i}\}_{i \in [\bp]}$. We say that a residue $i \in [\bp]$ is {\em good} if it satisfies $i - \Sigma(S \cap A) \in (W' \bmod \bp)$, namely there exists a subset $Q_1 \subseteq (S \cap C)$ of size $|Q_1| \in (1 \pm e_2)\frac{|S \cap C|}{2}$ such that $\Sigma(S \cap A) + \Sigma(Q_1) \equiv_{p} i$.

In the particular iteration $t'_* := ((t - \Sigma(S \cap D)) \bmod \bp)$ of Line~\ref{alg:packed-rep-ov:4}: on sampling a good residue $\boldr = i$ in Line~\ref{alg:packed-rep-ov:6}, both sets $Q_1$ and $Q_2 := (S \cap C) \setminus Q_1$ are of size $\leq (1 + e_2)\frac{|S \cap C|}{2} \leq (1 + e_2)\frac{|C|}{4}$, so they are included in $\calQ^{+e_2}(C)$ and, respectively, in the lists $R_{A,\, t'_*,\, \boldr}$ and $R_{B,\, t'_*,\, \boldr}$ created in Line~\ref{alg:packed-rep-ov:6}.
As a consequence, later in the iteration $t''_* := t - \Sigma(S \cap D)$ of Line~\ref{alg:packed-rep-ov:8}, we can find the solution pair $(a',\, Q_1)$, $(b',\, Q_2)$ with constant probability, by \Cref{lemma:MiM-main}, for which $a' + b' = (\Sigma(S \cap A) + \Sigma(Q_1)) + (\Sigma(S \cap B) + \Sigma(Q_2)) = t''_*$ and $Q_1 \cap Q_2 = \emptyset$.

Hence, it suffices to reason about the iteration $t'_*$ of Line~\ref{alg:packed-rep-ov:4}, {\bf (i)}~lower-bounding the probability that at least one of the $s(\word)$ many samples $\boldr \sim [\bp]$ in Line~\ref{alg:packed-rep-ov:6} is good, and {\bf (ii)}~upper-bounding the probability that these samples generate a total of $k(n,\, \word)$ or more sum-subset couples.


\vspace{.1in}
\noindent
\blackref{pack-rep-case1}{\bf :}
The proof below only relies on Condition~\blackref{con:packed-rep-ov:1}, and works for any outcome of the modulus $\bp = p = \Theta(n^{1+\beta/2})$ and the partition $X = \bA \cup \bB \cup \bC \cup \bD = A \cup B \cup C \cup D$.

\vspace{.1in}
\noindent
{\bf (i).}
By \Cref{lemma:select-C}, we have $|W(C) \bmod p| = 2^{|C|}$, which implies that $|W(S \cap C) \bmod p| = 2^{|S \cap C|}$ and $|W' \bmod p| = \binom{|S \cap C|}{|S \cap C| / 2} = \tOmega(2^{|S \cap C|}) \geq \tOmega(2^{(1 - \epsilon'_1) \cdot |C| / 2}) = \tOmega(\word^{(1 - \epsilon'_1) \cdot \rho / 4})$, by Stirling's approximation, Condition~\blackref{con:packed-rep-ov:1}, and the choice of $|C|$, $e_1$, $e_2$.
Hence, a number of $s(\word) = \Omega(\tfrac{p}{|W' \bmod p|})$ samples $\boldr \sim [p]$ yield at least one good residue with probability $\geq 2 / 3$, for any large enough
\begin{align*}
    s(\word) ~=~ s'(\word) ~:=~ \tTheta(\word^{1 + \beta / 2 - (1 - \epsilon'_1) \cdot \rho / 4}).
\end{align*}

\noindent
{\bf (ii).}
All candidate lists $\{R_{A,\, t'_*,\, i}\}_{i \in [p]}$, $\{R_{B,\, t'_*,\, i}\}_{i \in [\bp]}$ contain a total of $(|L_{A}| + |L_{B}|) \cdot |\calQ^{+e_2}(C)| = \tO(2^{n/2} \cdot \word^{-1/2} \cdot \word^{(H(1 / 4) - 1 / 2) \cdot \rho / 2 - (1 - \rho + \beta / 2) / 2})$ many sum-subset couples, by their construction (Line~\ref{alg:packed-rep-ov:3}), the choices of $|A|$, $|B|$, $|C|$, $e_2$, and Stirling's approximation (\Cref{eq:stirling}).
Thus, a number of $s'(\word)$ many samples $\boldr \sim [p]$ in expectation generate at most $(|L_{A}| + |L_{B}|) \cdot |\calQ^{+e_2}(C)| \cdot s'(\word) / p \leq \frac{1}{3}k(n,\, \word)$ sum-subset couples, for any large enough cutoff
\begin{align*}
    k(n,\, \word) ~=~ k'(n,\, \word)
    ~:=~ \tTheta(2^{n/2} \cdot \word^{-1/2} \cdot \word^{-((1 - H(1 / 4) - \epsilon'_1 / 2) \cdot \rho / 2 + (1 - \rho + \beta / 2) / 2)})
\end{align*}
By Markov's inequality, the probability that $k'(n,\, \word)$ or more couples are generated is at most $1 / 3$.

To conclude, \blackref{pack-rep-case1} succeeds with probability $\geq 2 / 3 - 1 / 3 = 1 / 3$.

\vspace{.1in}
\noindent
\blackref{pack-rep-case2}{\bf .}
The proof below relies on both Conditions~\blackref{con:packed-rep-ov:1} and \blackref{con:packed-rep-ov:2}, and works for any outcome of the created subset $\bD = D \subseteq X$.

Consider the collection $\calC'(X)$ of size-$(\beta\log(\frac{\word}{\beta \log \word}))$ subsets $C' \subseteq X$ that satisfy both conditions:
\[
    \calC'(X) ~:=~ \{C' \subseteq X \mid \text{Conditions~\blackref{con:packed-rep-ov:1} and \blackref{con:packed-rep-ov:2} hold, and $|C'| = \beta\log(\tfrac{\word}{\beta \log \word})$}\},
\]
and also consider its restriction to $(X \setminus D)$:
\[
    \calC(X \setminus D) ~:=~ \{C \subseteq (X \setminus D) \mid \text{Conditions~\blackref{con:packed-rep-ov:1} and \blackref{con:packed-rep-ov:2} hold, and $|C| = \beta\log(\tfrac{\word}{\beta \log \word})$}\}.
\]
Obviously, the conditional random subset $\bC$, given \blackref{pack-rep-case2}
and Conditions~\blackref{con:packed-rep-ov:1} and \blackref{con:packed-rep-ov:2}, is distributed as $\bC \sim \calC(X \setminus D)$.

Let $P_{2} \subseteq \P[\word^{1+\beta/2} : 2\word^{1+\beta/2}]$
be the set of primes 
such that for the input instance $X$ the algorithm enters \blackref{pack-rep-case2} (``\blackref{pack-rep-case2} primes''), and $P_{1}$ the set of primes such that for the input instance $X$ the algorithm enters \blackref{pack-rep-case1} (``\blackref{pack-rep-case1} primes'').
Without loss of generality we suppose $|P_2| \geq 3|P_1|$, i.e., \blackref{pack-rep-case2} occurs with probability $\geq 3 / 4$ over the modulus $\bp \sim \P[\word^{1+\beta/2} : 2\word^{1+\beta/2}]$. (Otherwise, \blackref{pack-rep-case1} occurs with probability $\geq 1 / 4$ and already gives a success probability $\geq 1/4 \cdot 1/3 = 1/12$.)
Obviously, the conditional modulus $\bp$, given \blackref{pack-rep-case2}, is distributed as $\bp \sim P_{2}$.

\vspace{.1in}
\noindent
{\bf (i).}
Each subset $C' \in \calC'(X)$ satisfies $\frac{|S \cap C'|}{\log \word} \cdot (1 - H(\frac{1 - \epsilon_2}{2})) \geq (1 - o_{n}(1)) \cdot \beta \cdot \frac{1 - \epsilon_1}{2} \cdot (1 - H(\frac{1 - \epsilon_2}{2})) > \lambda$, by Condition~\blackref{con:packed-rep-ov:1}, \Cref{eq:lambda}, and the choice of $|C'| = |C|$.
This technical condition, together with Condition~\blackref{con:packed-rep-ov:2}, allows us to use the same arguments as in the proof of correctness for \Cref{thm:memo-ov}, showing that $|W'(S \cap C') \bmod \bp| = \Theta(|W'(S \cap C')|) = \tOmega(\word^{(1 - \epsilon_1) \cdot \beta / 2 - \lambda})$ with probability $\geq 3 / 4$ over the modulus $\bp \sim (P_1 \cup P_2) = \P[\word^{1+\beta/2} : 2\word^{1+\beta/2}]$.
On sampling a random subset $\bC' \sim \calC'(X)$, we have an analogous guarantee\ignore{ on the random set $\bW'(S \cap \bC')$}:
\begin{align*}
    \Pr_{(\bp,\, \bC') \,\sim\, (P_1 \cup P_2) \times \calC'(X)} \big[\, |\bW'(S \cap \bC') \bmod \bp| = \tOmega(\word^{(1 - \epsilon_1) \cdot \beta / 2 - \lambda}) \,\big] ~\geq~ 3 / 4.
\end{align*}
Since there are at least three times as many \blackref{pack-rep-case2} primes as  \blackref{pack-rep-case1} primes, we have
\begin{align*}
    \Pr_{(\bp,\, \bC') \,\sim\, P_2 \times \calC'(X)} \big[\, |\bW'(S \cap \bC') \bmod \bp| = \tOmega(\word^{(1 - \epsilon_1) \cdot \beta / 2 - \lambda}) \,\big] ~\geq~ 2 / 3.
\end{align*}

But the (actual) random subset $\bC$ is distributed as $\bC \sim \calC(X \setminus D)$ rather than $\bC' \sim \calC'(X)$, i.e., we shall also consider the effect of the restriction to $(X \setminus D)$. For the choices of $|C'| = |C|$ and $|D|$, this restriction only removes a negligible fraction of subsets:
\begin{align*}
    \textstyle \frac{|\calC'(X) \setminus \calC(X \setminus D)|}{|\calC'(X)|}
    ~\leq~ 1 - \binom{n - |D|}{|C'|} \big/ \binom{n}{|C'|}
    ~\in~ \frac{\beta\log^{2}(\word)}{n} \cdot (1 \pm o_{n}(1))
    ~=~ o_{n}(1).
\end{align*}
Thus, a uniform random $\bC \sim \calC(X \setminus D)$ yields a large residue set\ignore{ $(\bW'(S \cap \bC) \bmod \bp)$} with constant probability:
\begin{align*}
    \Pr_{(\bp,\, \bC) \,\sim\, P_2 \times \calC(X \setminus D)} \big[\, |\bW'(S \cap \bC) \bmod \bp| = \tOmega(\word^{(1 - \epsilon_1) \cdot \beta / 2 - \lambda}) \,\big] ~\geq~ 2 / 3 - o_{n}(1).
\end{align*}
Conditioned on this event, a number of $s(\word) = \Omega(\tfrac{\bp}{|\bW'(S \cap \bC) \bmod \bp|})$ samples $\boldr \sim [\bp]$ yields at least one good residue with probability $\geq 4 / 5$, for any large enough
\begin{align*}
    s(\word) ~=~ s''(\word)
    ~:=~ \tTheta(\word^{1 + \epsilon_1 \cdot \beta / 2 + \lambda}).
\end{align*}

\noindent
{\bf (ii).}
By adapting the proof of \blackref{pack-rep-case1}, {\bf (ii)} for the current choices of $|A|$, $|B|$, $|C|$, the probability that $k(n,\, \word)$ or more sum-subset couples are generated is at most $1 / 6$, for any large enough cutoff
\[
    k(n,\, \word) ~=~ k''(n,\, \word) ~:=~ \tTheta(2^{n/2} \cdot \word^{-1/2} \cdot \word^{-((1 - H(\frac{1 + \epsilon_2}{4}) - \epsilon_1 / 2) \cdot \beta - \lambda)}).
\]

To conclude, \blackref{pack-rep-case2} succeeds with probability $\geq (2 / 3 - o_{n}(1)) \cdot 4 / 5 - 1 / 6 \geq 1 / 3$.
\end{proof}

\afterpage{
\begin{figure}
    \centering
    \subfloat[\label{fig:main:alpha}
    $\alpha_*(\rho)$]{{\includegraphics[width = .48\textwidth]
    {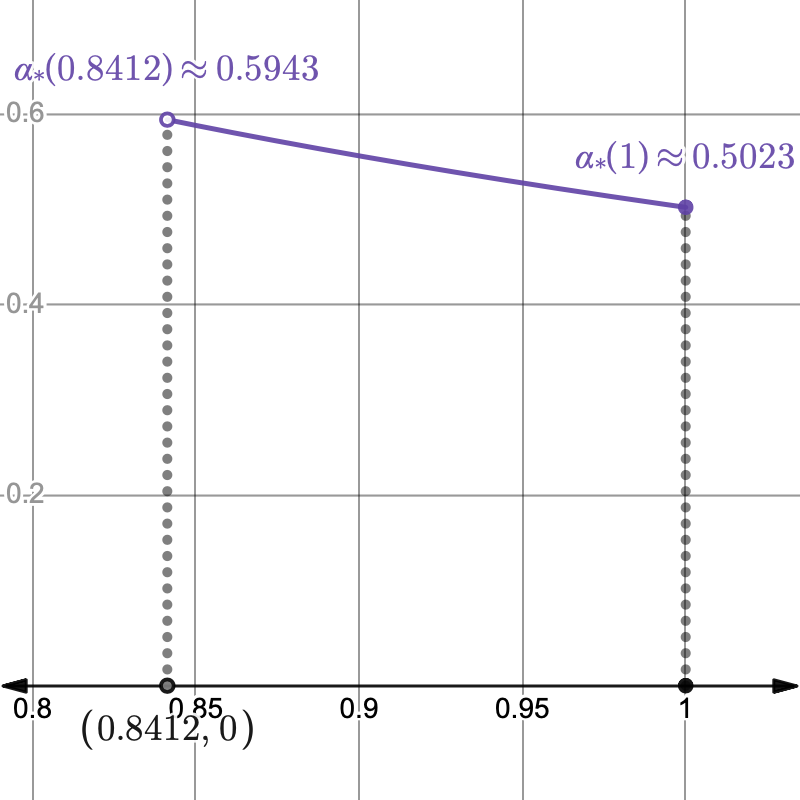}}}
    \hfill
    \subfloat[\label{fig:main:gamma}
    $\gamma_*(\rho)$]{{\includegraphics[width = .48\textwidth]
    {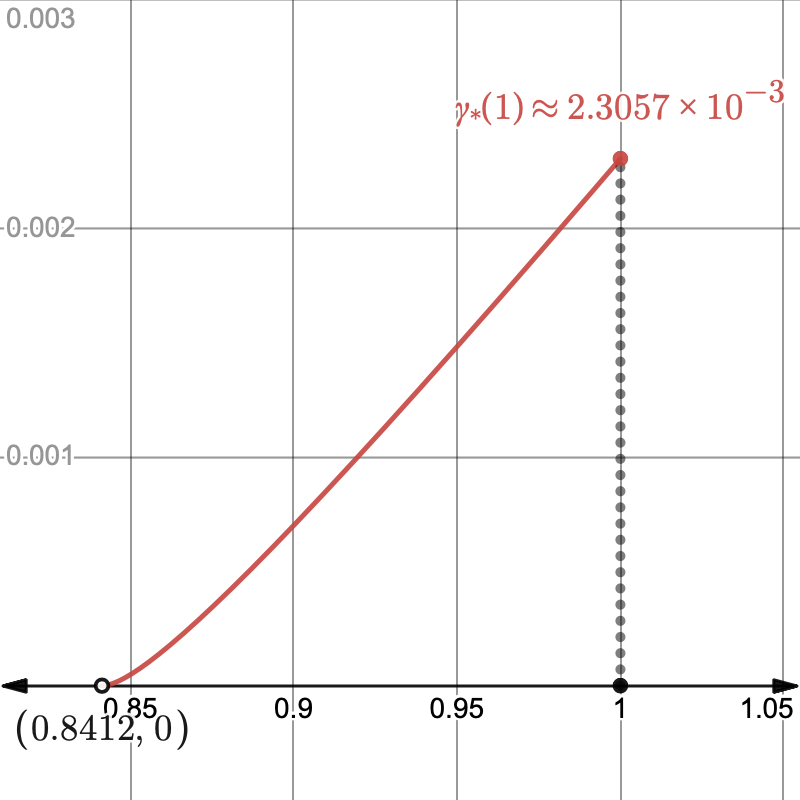}}}
    \\
    \subfloat[\label{fig:main:beta}
    $\beta(\rho)$]{{\includegraphics[width = .48\textwidth]
    {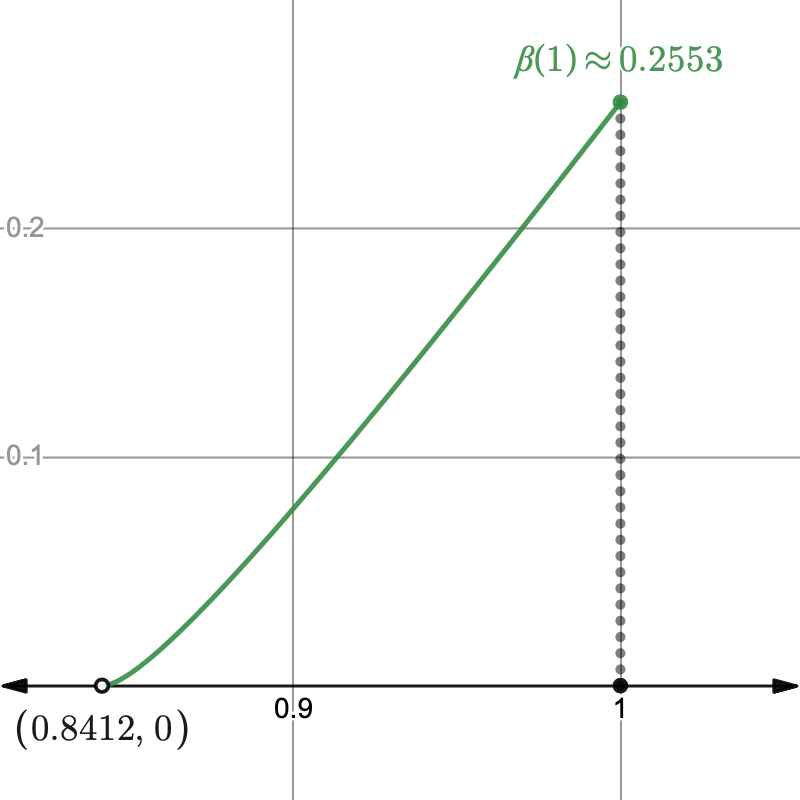}}}
    \hfill
    \subfloat[\label{fig:main:epsilon}
    $\epsilon'_1(\rho)$]{{\includegraphics[width = .48\textwidth]
    {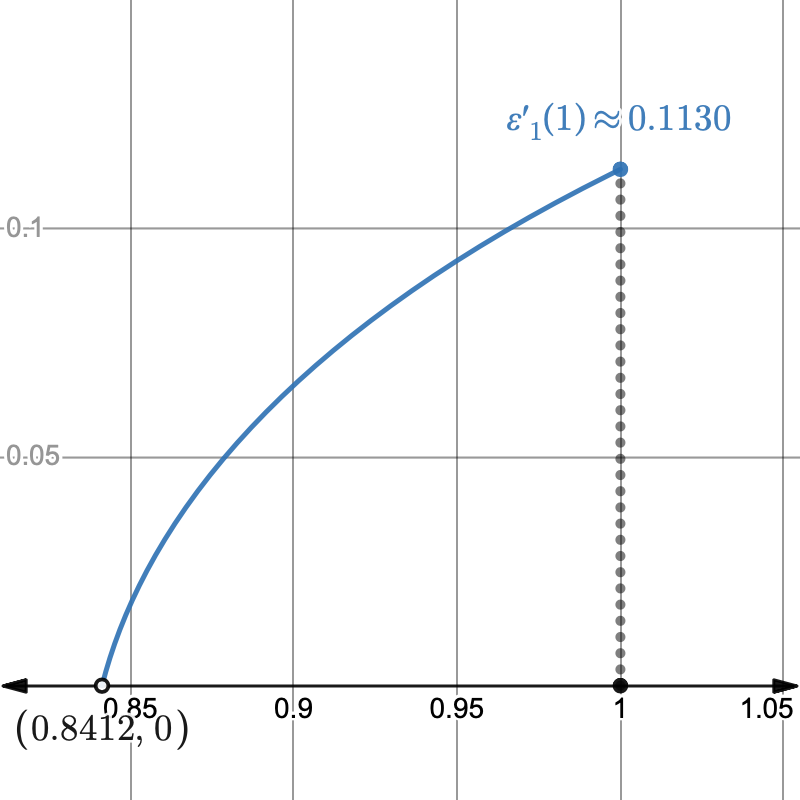}}}
    \caption{Expressions $\alpha_*(\rho)$, $\gamma_*(\rho)$, $\beta(\rho)$, $\epsilon'_1(\rho)$ in the density range $\frac{1}{2 - H(1/4)} < \rho \leq 1 + \Theta(\frac{1}{\log n})$. \\
    $\alpha_*(\rho)$ refers to the input-size-based speedup $\Omega(n^{\alpha})$, for $1/2 < \alpha < \alpha_*(\rho)$, over the runtime $O(2^n)$ of {\MeetInTheMiddle}. $\gamma_*(\rho)$ refers to the word-length-based speedup $\Omega(\word^{\gamma})$, for $0 < \gamma < \gamma_*(\rho)$, over the runtime $\tilde{O}(2^n \cdot n^{-1/2})$ of {\BitPacking}. Further, $\beta(\rho)$ and $\epsilon'_1(\rho)$ are the optimized parameters to achieve these speedups.}
    \label{fig:main}
\end{figure}
\clearpage}

\begin{proof}[Proof of Runtime for \blackref{alg:packed-rep-ov}]
Line~\ref{alg:packed-rep-ov:0} takes time $\poly(\word) = \poly(n)$ in the either case given \Cref{lemma:select-C,lemma:select-D}, which is dominated by runtime of the rest of our algorithm:

\vspace{.1in}
\noindent
\blackref{pack-rep-case1}{\bf .}
For the choices of parameters, the bottleneck occurs in Line~\ref{alg:packed-rep-ov:1} or Lines~\ref{alg:packed-rep-ov:4} to \ref{alg:packed-rep-ov:9}:
\ignore{For the choices of $e_1$, $e_2$, $|A|$, $|B|$, $|C|$ in this case:}
\begin{flushleft}
\begin{itemize}
    \item Line~\ref{alg:packed-rep-ov:1} takes time $\tO(2^{n/2} \cdot \word^{-1/2} \cdot \word^{-(1 - H(\frac{1 - \epsilon'_1}{2})) \cdot \rho / 4})$, by \Cref{lemma:input2-2} and the choice of $|C|$.
    
    \item Line~\ref{alg:packed-rep-ov:2} is skipped in this case.
    
    \item Line~\ref{alg:packed-rep-ov:3} takes time $O(2^{|A|} + 2^{|B|} + 2^{|C|} + 2^{|D|}) = \tO(2^{n/2} \cdot \word^{-1 / 2} \cdot \word^{-(2 - \rho + \beta) / 4})$, by \Cref{lem:sse} and the choices of $|A|$, $|B|$, $|C|$, $|D|$.
    This runtime is dominated by that of Line~\ref{alg:packed-rep-ov:1} as $(1 - H(\frac{1 - \epsilon_1'}{2}))\rho < (2 - \rho + \beta)$ for $\rho \leq 1$.
    
    \item Lines~\ref{alg:packed-rep-ov:4} to \ref{alg:packed-rep-ov:9} take time $\tO(k'(n,\, \word) \cdot \word^{1 - \rho + \beta/2}) = \tO(2^{n/2} \cdot \word^{-1/2} \cdot \word^{-((4 - 2H(1 / 4) - \epsilon'_1) \cdot \rho - 2 - \beta) / 4})$.
    
    Lines~\ref{alg:packed-rep-ov:4} iterates Lines~\ref{alg:packed-rep-ov:5} to \ref{alg:packed-rep-ov:7} a total of $|W(D) \bmod \bp| = \tO(\word^{1 - \rho + \beta/2})$ times, by \Cref{lemma:select-D}. \\
    Here each iteration takes time $\tO(\sum_{\boldr} (|\bR_{A,\, t',\, \boldr}| + |\bR_{B,\, t',\, \boldr}|)) = \tO(k'(n,\, \word))$, by \Cref{lemma:pack-main} and the classic {\em merge sort} algorithm (of depth $\log s'(\word) = O(\log \word) = O(\log n)$).
    
    Line~\ref{alg:packed-rep-ov:8} iterates Line~\ref{alg:packed-rep-ov:9} a total of $2^{|D|} = \word^{2 - \rho + \beta/2}$ times, by the choice of $|D|$. \\
    Here each iteration takes time $\tO(k'(n,\, \word) \cdot \word^{-1})$, by \Cref{lemma:MiM-main}.
\end{itemize}
\end{flushleft}
Thus we achieve a speedup of $\tOmega(\word^{1 / 2 + \gamma'_*})$, where the
\[
    \textstyle
    \gamma'_*
    ~:=~ \min\big\{\,
    \frac{1 - H((1 - \epsilon'_1) / 2)}{4} \cdot \rho,\quad
    \frac{4 - 2H(1/4) - \epsilon'_1}{4} \cdot \rho - \frac{1}{2} - \frac{\beta}{4} \,\big\}.
\]

\noindent
\blackref{pack-rep-case2}{\bf .}
For the choices of parameters, the bottleneck occurs in Line~\ref{alg:packed-rep-ov:1}, Line~\ref{alg:packed-rep-ov:2}, or Lines~\ref{alg:packed-rep-ov:4} to \ref{alg:packed-rep-ov:9}:
\ignore{$\lambda$, $e_1$, $e_2$, $|A|$, $|B|$, $|C|$, in this case:}
\begin{flushleft}
\begin{itemize}
    \item Line~\ref{alg:packed-rep-ov:1} takes time $\tO(2^{n/2} \cdot \word^{-1/2} \cdot \word^{-(1 - H(\frac{1-\epsilon_1}{2})) \cdot \beta/2})$, by \Cref{lemma:input2-2} and the choice of $|C|$.
    
    \item Line~\ref{alg:packed-rep-ov:2} takes time $\tO(2^{n/2} \cdot \word^{-1 / 2} \cdot \word^{-\lambda / 2})$, by \Cref{lemma:input1-2}.
    
    \item Line~\ref{alg:packed-rep-ov:3} takes time $O(2^{|A|} + 2^{|B|} + 2^{|C|} + 2^{|D|}) = \tO(2^{n/2} \cdot \word^{-1/2} \cdot \word^{-\beta/2})$, by \Cref{lem:sse} and the choices of $|A|$, $|B|$, $|C|$, $|D|$.
    This runtime is dominated by that of Line~\ref{alg:packed-rep-ov:1}.
    
    \item Lines~\ref{alg:packed-rep-ov:4} to \ref{alg:packed-rep-ov:9} take time $\tO(k''(n,\, \word)) = \tTheta(2^{n/2} \cdot \word^{-1/2} \cdot \word^{-((1 - H(\frac{1 + \epsilon_2}{4}) - \epsilon_1 / 2) \cdot \beta - \lambda)})$.
    
    Lines~\ref{alg:packed-rep-ov:4} iterates Lines~\ref{alg:packed-rep-ov:5} to \ref{alg:packed-rep-ov:7} a total of $|W(D) \bmod \bp| \leq |D| + 1 = O(\log\word) = O(\log n)$ times, \\
    because all integers in $\bD$ are congruent modulo $\bp$ by construction (Line~\ref{alg:packed-rep-ov:0}). \\
    Here each iteration takes time $\tO(\sum_{\boldr} (|\bR_{A,\, t',\, \boldr}| + |\bR_{B,\, t',\, \boldr}|)) = \tO(k''(n,\, \word))$, by \Cref{lemma:pack-main} and the classic {\em merge sort} algorithm (of depth $\log s''(\word) = O(\log \word) = O(\log n)$).
    
    Line~\ref{alg:packed-rep-ov:8} iterates Line~\ref{alg:packed-rep-ov:9} a total of $2^{|D|} = \word$ times, by the choice of $|D|$. \\
    Here each iteration takes time $\tO(k''(n,\, \word) \cdot \word^{-1})$, by \Cref{lemma:MiM-main}.
\end{itemize}
\end{flushleft}
Thus we achieve a speedup of $\tOmega(\word^{1 / 2 + \gamma''_*})$, where the
\begin{align*}
    \gamma''_* ~=~ \gamma''_*(\rho) & \textstyle ~:=~ \min\big\{\,
    \frac{\lambda}{2},\quad
    \frac{1 - H((1 - \epsilon_1) / 2)}{2} \cdot \beta,\quad
    (1 - H(\frac{1 + \epsilon_2}{4}) - \frac{\epsilon_1}{2}) \cdot \beta - \lambda \,\big\} \\
    & \textstyle ~\phantom{:}=~ \frac{\lambda}{2}
    ~=~ (1 - 10^{-5}) \cdot \frac{1 - \epsilon_1}{4} \cdot (1 - H(\frac{1 - \epsilon_2}{2})) \cdot \beta
    ~\geq~ 9.0324 \times 10^{-3} \cdot \beta.
\end{align*}
Here, we know that $\frac{\lambda}{2}$ is the minimum term for our chosen values of $\epsilon_1$, $\epsilon_2$ and $\lambda$ by \Cref{eq:lambda}.

\vspace{.1in}
\noindent
{\bf Worst Case.}
Overall, we achieve a speedup of $\Omega(\word^{1/2 + \gamma})$ for any constant $\gamma \in (0,\, \gamma_*)$. In particular, the constants $\gamma_* = \gamma_*(\rho) := \min \{ \gamma_*',\, \gamma_*''\}$, $\beta = \beta(\rho)$, $\epsilon'_1 = \epsilon'_1(\rho)$ are the solutions to the following chain of equations:
\begin{align}
\label{eq:main}
    \gamma_*(\rho)
    ~=~ 9.0324 \times 10^{-3} \cdot \beta(\rho)
    ~=~ \textstyle \frac{1 - H((1 - \epsilon'_1(\rho)) / 2)}{4} \cdot \rho
    ~=~ \textstyle \frac{4 - 2H(1/4) - \epsilon'_1(\rho)}{4} \cdot \rho - \frac{1}{2} - \frac{\beta(\rho)}{4}.
\end{align}
Alternatively, we can write the speedup as $\Omega(n^{\alpha})$ for $\alpha = \frac{1 + 2\gamma}{2\rho} \in (0,\, \alpha_*(\rho))$ and $\alpha_*(\rho) := \frac{1 + 2\gamma_*(\rho)}{2\rho}$.

In the density range $(2 - H(1/4))^{-1} < \rho \leq 1 + \Theta(\frac{1}{\log n})$, all expressions $\alpha_*(\rho)$, $\gamma_*(\rho)$, $\beta(\rho)$, $\epsilon'_1(\rho)$ are demonstrated in \Cref{fig:main}. In particular, the boundary $(2 - H(1/4))^{-1} \approx 0.8412$ is the solution to the equation $\alpha_*(\rho) = 1 / (2\rho)$. The minimum speedup is a factor of $\tOmega(n^{\alpha_*(1)}) \geq \Omega(n^{0.5023})$ for linear word length $\word = \Theta(n)$, and the maximum speedup is a factor of $\tOmega(n^{(2 - H(1/4)) / 2}) \geq \Omega(n^{0.5943})$ for slightly superlinear word length $\word = \Theta(n^{2 - H(1/4)}) \approx \Theta(n^{1.1887})$.

In the other density range $0 < \rho \leq (2 - H(1/4))^{-1} \approx 0.8412$, as mentioned, we can gain a better speedup $\tilde{\Omega}(n^{1/(2\rho)}) \geq \tOmega(n^{(2 - H(1/4)) / 2}) \geq \Omega(n^{0.5943})$ by naively simulating \blackref{alg:BitPacking}.
\end{proof}

\subsection{Auxiliary Lemmas}

\begin{lemma}[Finding a Subset $C$ with Large $W(C)$]
\label{lemma:select-C}
\begin{flushleft}
Given as input a modulus $p$ and a size-$n$ multiset $Y$ such that $|Y \bmod p| > \word^{\rho} / \log \word$, there always exists a size-$(\frac{1}{2}\log(\word^{\rho} / \log \word))$ subset $C \subseteq Y$ such that $|W(C) \bmod p| = 2^{|C|}$, and such a set $C$ can be found in time $\poly(\word) = \poly(n)$.
\end{flushleft}
\end{lemma}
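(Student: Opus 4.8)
The plan is to build the set $C$ greedily, one element at a time, maintaining the invariant that the partial sums of the chosen elements are all distinct modulo $p$. Concretely, suppose we have already selected $C_j = \{c_1, \dots, c_j\} \subseteq Y$ with $|W(C_j) \bmod p| = 2^j$, i.e.\ the $2^j$ subset sums of $C_j$ are pairwise distinct mod $p$. To extend, I want to find a new element $c_{j+1} \in Y \setminus C_j$ such that $c_{j+1} \not\equiv \sigma - \sigma' \pmod p$ for every pair $\sigma, \sigma' \in W(C_j)$; this guarantees that no sum involving $c_{j+1}$ collides mod $p$ with any sum not involving it, so $|W(C_{j+1}) \bmod p| = 2^{j+1}$. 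The number of ``forbidden'' residues $\sigma - \sigma'$ is at most $|W(C_j)|^2 = 4^j$. Since the elements of $Y$ occupy more than $\word^\rho / \log\word$ distinct residues mod $p$, as long as $4^j < \word^\rho/\log\word$ — equivalently $j < \tfrac12 \log(\word^\rho/\log\word)$ — there is a residue class, hence an element of $Y$, avoiding all forbidden differences; pick any such element as $c_{j+1}$. (Strictly, a forbidden difference could be realized only by an element already in $C_j$, but that case is harmless, and in any event we have slack because $|Y \bmod p|$ strictly exceeds $4^j$ for all relevant $j$.) Running this for $j = 0, 1, \dots, \tfrac12\log(\word^\rho/\log\word) - 1$ produces the desired set $C$ of size $\tfrac12\log(\word^\rho/\log\word)$ with $|W(C) \bmod p| = 2^{|C|}$.

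For the running time: at step $j$ we compute the (at most $4^j \leq \word^\rho/\log\word = \poly(n)$) forbidden residues $\sigma - \sigma' \bmod p$ from the list $W(C_j)$, which itself has size $2^j \leq \sqrt{\word^\rho/\log\word} = \poly(n)$ and can be maintained incrementally; then we scan the $n$ elements of $Y$, reducing each mod $p$, and test membership in the forbidden set (e.g.\ after sorting, or via a lookup structure). Each step is thus $\poly(n)$ time, and there are $O(\log n)$ steps, for a total of $\poly(\word) = \poly(n)$. The strict inequality $|Y \bmod p| > \word^\rho/\log\word$ is exactly what guarantees the greedy step never gets stuck, so existence and constructibility come together.

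I don't anticipate a genuine obstacle here — the argument is a standard Sidon-type / Behrend-flavored greedy construction. The one point requiring a little care is the bookkeeping that a forbidden difference $c_{j+1} \equiv \sigma - \sigma' \pmod p$ with $\sigma \ne \sigma'$ genuinely does create a collision in $W(C_{j+1}) \bmod p$ only when the element is ``new'' relative to the subsets realizing $\sigma$ and $\sigma'$; but because we always have strictly more available residues than the $\le 4^j$ forbidden ones, we can simply insist that $c_{j+1}$'s residue differs from every element of $\{\sigma - \sigma' : \sigma,\sigma' \in W(C_j)\}$ and from the residues of $c_1,\dots,c_j$ themselves, still leaving a valid choice, which makes the invariant immediate. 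I would also double-check the arithmetic that $\tfrac12\log(\word^\rho/\log\word)$ is a nonnegative integer (or round down), and note that the hypothesis $\word^\rho = n$ makes this quantity $\tfrac12\log(n/\log\word) = \Theta(\log n)$, consistent with the size of $C$ claimed in \blackref{alg:packed-rep-ov}.
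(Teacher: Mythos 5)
Your proof takes essentially the same greedy approach as the paper's: at each step, add an element of $Y$ whose residue modulo $p$ avoids all differences $\sigma - \sigma'$ with $\sigma, \sigma' \in W(C_j)$, and the counting $4^j \le \word^\rho/\log\word < |Y \bmod p|$ guarantees such an element exists for $j < \tfrac12\log(\word^\rho/\log\word)$. Your bookkeeping concern about already-chosen elements is in fact moot: each $c_i \in C_j$ equals the difference $c_i - 0$ of two members of $W(C_j)$, so previously chosen elements already lie in forbidden residue classes and are never candidates.
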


\begin{proof}
We create the subset $C \subseteq Y$ using a simple greedy algorithm: starting with $C_{0} = \emptyset$, for $i \in [\frac{1}{2}\log(\word^{\rho} / \log \word)]$, select any $y_i \in (Y \setminus C_{i - 1})$ such that $y_i \not\equiv_{p} (c' - c'')$ for any two $c',\, c'' \in W(C_{i - 1})$, and then augment $C_{i} := C_{i - 1} \bigcup \{y_{i}\}$.
The existence of such an element $y_{i}$ at each stage is ensured, since any intermediate subset $C_{i - 1}$ has at most $|W(C_{i - 1})| \leq 2^{i - 1}\ignore{ \leq \sqrt{n / \log n}}$ subset sums, and thus $|\{(c' - c'') \bmod p \mid c',\, c'' \in C_{i - 1}\}| \leq \word^{\rho} / \log \word < |Y \bmod p|.$
\end{proof}

\begin{lemma}[Finding a Subset $D$ with Small $W(D)$]
\label{lemma:select-D}
\begin{flushleft}
Given as input a modulus $p = \Theta(\word^{1+\beta/2})$ and a size-$\Theta(n)$ multiset $Y$, there always exists a size-$((2 - \rho + \beta/2)\log \word)$ subset $D \subseteq Y$ such that $|W(D) \bmod p| = \tO(\word^{1 - \rho + \beta/2})$, and such a subset $D$ can be found in time $\poly(\word) = \poly(n)$.
\end{flushleft}
\end{lemma}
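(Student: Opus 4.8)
The plan is a single application of the pigeonhole principle, carried out at exactly the right scale. Write $k := (2 - \rho + \beta/2)\log\word$ for the target size of $D$ (so $2^{k} = \word^{2-\rho+\beta/2}$), and recall $p = \Theta(\word^{1+\beta/2})$ and $\word^{\rho} = n$ by the definition $\rho := \frac{\log n}{\log\word}$. Set a window width $L := \Theta(kp/n) = \Theta(\word^{1-\rho+\beta/2}\log\word)$, where the hidden constant is chosen large enough to make the pigeonhole bound below hold; note $L \le p$ because $k = \Theta(\log\word) = o(n)$.

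First I would compute $y \bmod p$ for every $y \in Y$ and distribute these residues into the $\lceil p/L\rceil \le 2p/L$ consecutive windows $[0,L),[L,2L),\dots$ of $\Z_p$. Since $|Y| = \Theta(n)$ and there are only $O(p/L) = O(n/\log\word)$ windows, by pigeonhole — and by the choice of the constant hidden in $L$ — some window $[a,\,a+L)$ receives at least $k$ of the residues; let $D$ be any $k$ elements of $Y$ whose residues lie in that window (possible since $k = o(n) \le |Y|$). All of this — residue computation, bucketing, and locating the heavy window — takes $O(n + p/L) = O(n)$ arithmetic operations on $\word$-bit words, hence $\poly(\word) = \poly(n)$ time.

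It remains to bound $|W(D) \bmod p|$, which I would do by grouping the subset sums of $D$ by cardinality. For a subset $T \subseteq D$ with $|T| = j$, each $y \in T$ satisfies $(y \bmod p) \in [a,\,a+L)$, so $\Sigma(T) \equiv \sum_{y \in T}(y \bmod p) \pmod p$ with $\sum_{y \in T}(y\bmod p) \in [ja,\, ja + jL)$; reducing mod $p$, the residues obtainable from $j$-element subsets of $D$ all lie in one arc of $\Z_p$ of length at most $jL$. Taking the union over $j = 0, 1, \dots, k$ gives
\[
    |W(D) \bmod p| ~\le~ \sum_{j=0}^{k} jL ~\le~ k^{2} L ~=~ \tO(\word^{1-\rho+\beta/2}),
\]
since the factor $k^{2}$ and the $\log\word$ inside $L$ are polylogarithmic and absorbed by $\tO(\cdot)$; and this quantity is genuinely below $p$ because $1-\rho+\beta/2 < 1+\beta/2$ for $\rho > 0$.

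The point to get right is the choice of $L$: it must be large enough that pigeonhole over $\Theta(n)$ residues placed in $\Theta(p/L)$ windows forces some window to contain $k$ elements of $Y$, yet small enough that $k^{2}L = \tO(\word^{1-\rho+\beta/2})$. Both constraints pin $L$ down to $\tTheta(kp/n) = \tTheta(\word^{1-\rho+\beta/2})$, so they are simultaneously satisfiable precisely because the target subset size $|D| = (2-\rho+\beta/2)\log\word$ and the target bound $\tO(\word^{1-\rho+\beta/2})$ have been matched to $p$ and $n$ in exactly this way — that matching is the crux of the lemma. A minor point is that the argument is worst-case and needs no case split: if $Y$ happens to have few residues modulo $p$ the heavy window only gets heavier, so the lemma holds for every size-$\Theta(n)$ multiset, with no hypothesis on $|Y \bmod p|$ (the one used elsewhere in \blackref{pack-rep-case1}).
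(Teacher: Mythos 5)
Your proof is correct and yields the same $\tO(\word^{1-\rho+\beta/2})$ bound, but through a genuinely different bucketing. The paper's proof pigeonholes the residues of $Y$ by congruence class modulo a moderate $q = \Theta(\word^{\rho}/\log\word)$: some residue class $j^{*}$ catches at least $|D|$ elements, and writing each such residue as $q k_i + j^{*}$ with $k_i \leq p/q$, every subset-sum residue has the form $q\big(\sum_{i \in T} k_i\big) + |T|\cdot j^{*} \pmod p$, so $|W(D) \bmod p| \leq |W(\{k_i\})| \cdot (|D|+1) = O(|D|^2 \cdot p/q)$. You instead pigeonhole into contiguous windows of width $L = \tTheta(p/q)$: all chosen residues lie within $L$ of a common base $a$, so $j$-element subsets have residue sums confined to an arc of length $jL$ modulo $p$, and summing over $j$ gives $O(|D|^2 L)$. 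The two schemes are dual --- arithmetic progressions of common difference $q$ versus intervals of length $L \approx p/q$ --- and both exploit the same concentration phenomenon (a structured set of residues has compressed subset sums), but neither step is a restatement of the other. Your version avoids the quotient-and-remainder step and reasons directly about arcs modulo $p$, which is arguably a touch more elementary; the paper's version rescales by $q$ to reduce to subset sums of integers bounded by $p/q$. Both arguments are unconditional in $|Y \bmod p|$, as you observe.
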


\begin{proof}
Given a small enough $q = \Theta(n/\log \word) = \Theta(\word^{\rho} / \log \word)$, partition $[p] = \bigcup_{j \in [q]} P_{j}$ into arithmetic progressions $P_j := \{r \in [p] \mid r \equiv_{q} j\}$ for $j \in [q]$, each of which contains $\lceil p / q \rceil$ or $\lfloor p / q \rfloor$ many residues.
By the pigeonhole principle, at least one of these progressions $P_{j^{*}}$ involves at least $|Y_{j^{*}}| \geq |Y| / q > (2 - \rho + \beta/2)\log \word$ many integers $Y_{j^{*}} \subseteq Y$, where $Y_{j^{*}} := \{y \in Y \mid y \in P_{j^{*}} \pmod{p}\}$.
We can find such a progression $P_{j^{*}}$ and the corresponding subset $Y_{j^{*}}$ in time $\poly(\word) = \poly(n)$.

We select $D = \{y_{i}\}_{i \in [|D|]}$ to be any $((2 - \rho + \beta/2)\log \word)$-integer subset of $Y_{j^{*}}$. Each selected $y_{i}$ has the residue $(y_{i} \bmod p) = q \cdot k_{i} + j^{*}$, where $k_{i} := \lfloor (y_{i} \bmod p) / q \rfloor \leq p/q$. Thus this subset $D$ has
\begin{align*}
    |W(D) \bmod p|
    \leq |W(\{k_{i}\}_{i \in [|D|]})| \cdot |D|
    \leq \Sigma(\{k_{i}\}_{i \in [|D|]}) \cdot |D|
    \leq p / q \cdot |D|^2
    = O(\word^{1 - \rho + \beta/2} \cdot \log^{3}(\word))
\end{align*}
distinct subset sums modulo $p$, as desired.
\end{proof}

Below, \Cref{lemma:input1-2,lemma:input2-2} are refinements of \Cref{lemma:input1,lemma:input2}, respectively, which leverage bit packing tricks used in the \blackref{alg:BitPacking} algorithm (recall \Cref{fig:alg:BitPacking}).

\begin{lemma}[Speedup via Additive Structure]
\label{lemma:input1-2}
Let $(X,\, t)$ be a Subset Sum instance. Given a subset $Y \subseteq X$ of size $|Y| \leq \frac{n - \log \word}{2}$ such that $|W(Y)| \leq 2^{|Y|} \cdot \word^{-\epsilon}$ for some constant $0 < \epsilon < |Y| / \log \word$\ignore{$\epsilon > 0$ satisfying $2^{|Y|} > \word^\epsilon$}, the instance $(X,\, t)$ can be solved in time $\tO(2^{n/2} \cdot \word^{-(1+\epsilon) / 2})$ with arbitrarily high constant probability and with no false positives.
\end{lemma}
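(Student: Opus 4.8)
The plan is to mirror the proof of \Cref{lemma:input1} but replace the final \blackref{alg:MeetInTheMiddle} step with the bit-packing machinery of \blackref{alg:BitPacking}, thereby gaining an additional $\word^{-1/2}$ factor. First I would fix a subset $A$ with $Y \subseteq A \subseteq X$ and $|A| = \tfrac{n - \log\word + \epsilon\log\word}{2} = \tfrac{n + (\epsilon-1)\log\word}{2}$, so that $|X\setminus A| = \tfrac{n - (\epsilon - 1)\log\word}{2} \cdot$ wait---more carefully, I want $|A\setminus Y|$ chosen so that both $|W(A)| = 2^{|A\setminus Y|}\cdot|W(Y)| \leq 2^{|A\setminus Y|+|Y|}\word^{-\epsilon}$ and $|W(X\setminus A)| \leq 2^{|X\setminus A|}$ are each at most $2^{n/2}\cdot\word^{-(1+\epsilon)/2}$. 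Setting $|A| = \tfrac{n + (\epsilon+1)\log\word}{2} - \log\word = \tfrac{n + (\epsilon - 1)\log\word}{2}$ gives $|X\setminus A| = \tfrac{n - (\epsilon-1)\log\word}{2}$, hence $2^{|X\setminus A|} = 2^{n/2}\cdot\word^{-(\epsilon-1)/2}$, which is $2^{n/2}\cdot\word^{(1-\epsilon)/2}$---too large. So instead I want $|A|$ slightly larger: take $|A| = \tfrac{n + (1+\epsilon)\log\word}{2}$ minus a correction, chosen precisely so the two list lengths balance at $2^{n/2}\word^{-(1+\epsilon)/2}$ each. The clean statement: pick $|A\setminus Y|$ so that $|A| = \tfrac{n}{2} + \tfrac{(\epsilon - 1)}{2}\log\word + \tfrac{1}{2}\log\word$; I would just write down the balancing equation $2^{|A|-|Y|}\cdot 2^{|Y|}\cdot\word^{-\epsilon} = 2^{n - |A|}$, solve for $|A| = \tfrac{n + (1-\epsilon)\log\word}{2}$, wait that gives $|X\setminus A| = \tfrac{n-(1-\epsilon)\log\word}{2}$ so $2^{|X\setminus A|} = 2^{n/2}\word^{-(1-\epsilon)/2}$, which is still $\geq 2^{n/2}\word^{-1/2}$ only. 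The point of the extra $\word^{-1/2}$ must come from the bit packing during the \blackref{alg:MeetInTheMiddle} phase, not from list length---so I should keep $|A|$ as in \Cref{lemma:input1} (lists of length $2^{n/2}\word^{-\epsilon/2}$) and then run the packed meet-in-the-middle on those lists.

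Concretely, the key steps in order: (1) Use \Cref{lem:sse} and \Cref{lemma:sse-poly} to build the sorted lists $L_A$ and $L_{X\setminus A}$ in time $\tO(2^{n/2}\word^{-\epsilon/2})$, exactly as in \Cref{lemma:input1}; note $|W(A)|, |W(X\setminus A)| \leq 2^{n/2}\word^{-\epsilon/2}$. (2) Further remove a set $D'$ of $\log\word$ additional elements (from $X\setminus A$, say) and shorten both lists, so that after removing $D'$ both $|W(A)|$ and $|W((X\setminus A)\setminus D')|$ are bounded by $2^{n/2}\word^{-(1+\epsilon)/2}$; iterate over the $\leq\word$ shifted targets $t - W(D')$. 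Actually the cleaner route, which I would adopt, is to directly invoke the \blackref{alg:BitPacking} analysis as a black box: given two lists each of length $\leq 2^{n/2}\word^{-\epsilon/2}$, running the bit-packing meet-in-the-middle (hash with $m = 3\log\word$, pack $\word/m$ hashes per word, set aside $|D'| = \log\word$ elements and loop over $W(D')$) finds a target-$t'$ solution, if one exists, in expected time $\tO(2^{n/2}\word^{-\epsilon/2}\cdot\word^{-1/2}) = \tO(2^{n/2}\word^{-(1+\epsilon)/2})$, by exactly the runtime computation in the proof of \Cref{thm:bit-packing} with the base list length improved from $2^{n/2}$ to $2^{n/2}\word^{-\epsilon/2}$. (3) Conclude correctness: no false positives since we only output ``yes'' on an actual witnessed solution; the success probability is an arbitrarily high constant by the collision bound (amplify by a constant number of repetitions or by halting after $C$ times the expected runtime, as in the footnote to \Cref{thm:bit-packing}).

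The main obstacle---really a bookkeeping obstacle rather than a conceptual one---is making sure the parameter choices for the ``nested'' set-aside $D'$ are consistent: the \blackref{alg:BitPacking} analysis removes a $\log\word$-size set to control list length, but here list length is \emph{already} controlled by the additive structure of $Y$, so the roles overlap and I must verify that after removing $D'$ the two lists still have length $\tO(2^{n/2}\word^{-(1+\epsilon)/2})$ and that enumerating them via \Cref{lemma:sse-poly} still runs in that time (which needs $|W(\cdot)| \leq 2^{|\cdot|}\cdot(\text{size})^{-\epsilon'}$ for a genuine constant $\epsilon' > 0$---this is where the hypothesis $\epsilon < |Y|/\log\word$ enters, ensuring $2^{|Y|}\word^{-\epsilon}$ really is $2^{|Y| - \epsilon\log\word}$ with a nonnegative, bounded exponent gap). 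A secondary point: the \blackref{alg:BitPacking} expected-time bound becomes a high-probability bound by truncation, which is exactly the ``arbitrarily high constant probability'' claimed. I would also remark that the word-RAM adaptation follows verbatim from \Cref{obs:bit-packing-word-RAM}.
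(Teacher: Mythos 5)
Your high-level plan -- augment the partition of \Cref{lemma:input1} with a size-$\log\word$ set $D$ and run \blackref{alg:BitPacking} rather than plain \blackref{alg:MeetInTheMiddle} -- is the right idea and is exactly what the paper does. But the concrete parameter choice you settle on does not work, and the "black box" runtime claim you rely on is false. If you keep $|A| = \tfrac{n+\epsilon\log\word}{2}$ (the \Cref{lemma:input1} choice) and carve $D'$ out of $X\setminus A$ only, then $|W(A)|$ is unaffected by the removal of $D'$ and stays as large as $2^{n/2}\word^{-\epsilon/2}$. Already the \blackref{alg:SSE} step and the hash-and-pack step each cost $\tO(|L_A|) = \tO(2^{n/2}\word^{-\epsilon/2})$, which exceeds the target by a factor $\word^{1/2}$ no matter what you do afterward. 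The claim ``given two lists of length $L$, packed meet-in-the-middle with a fresh size-$\log\word$ set-aside runs in $\tO(L\word^{-1/2})$'' has no support in the \blackref{alg:BitPacking} analysis: even ignoring setup costs, the outer loop runs $|W(D)|\leq\word$ times and the inner loop runs $O(Lm/\word)$ times, so the product is $\tO(L)$, not $\tO(L\word^{-1/2})$. The $\word^{1/2}$ speedup in \Cref{thm:bit-packing} comes from the \emph{list-length reduction} caused by setting aside $D$, which the $\word/m$ packing factor then lets you re-amortize over the $\word$ shifted targets; it is not an independent multiplicative savings you can stack on top of arbitrary lists.

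The correct fix -- and the paper's proof -- is to rebalance $|A|$ against $|B|$ by exactly $\epsilon\log\word$, so that $Y$'s additive structure lets a slightly larger $A$ produce just as few distinct subset sums as the smaller $B$. Concretely, take $|D|=\log\word$, $|A| = \tfrac{n-(1-\epsilon)\log\word}{2}$ with $Y\subseteq A$, and $|B| = \tfrac{n-(1+\epsilon)\log\word}{2}$; then $|W(A)|\leq 2^{|A\setminus Y|}|W(Y)|\leq 2^{|A|}\word^{-\epsilon}=2^{n/2}\word^{-(1+\epsilon)/2}$ and $|W(B)|\leq 2^{|B|}=2^{n/2}\word^{-(1+\epsilon)/2}$, and now \Cref{lem:sse}, \Cref{lemma:sse-poly}, and the \Cref{thm:bit-packing} runtime computation all land at $\tO(2^{n/2}\word^{-(1+\epsilon)/2})$, with a truncation-based conversion to a one-sided-error Monte Carlo algorithm as in the footnote to \Cref{thm:bit-packing}. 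You actually set up the balancing equation $2^{|A|}\word^{-\epsilon}=2^{n-|A|-|D|}$ early in your write-up but made an arithmetic slip in solving it; had you included the $|D|=\log\word$ offset and solved correctly, you would have landed on exactly these sizes. The hypothesis $\epsilon<|Y|/\log\word$ (together with $|Y|\leq\tfrac{n-\log\word}{2}$) is what guarantees both that the exponent $|Y|-\epsilon\log\word$ is nonnegative and that a subset $A\supseteq Y$ of the required size actually exists inside $X$.
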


\begin{proof}
Fix any partition $X = A \cup B \cup D$ such that $|D| = \log\word$, $|A| = \frac{n - (1-\epsilon)\log\word}{2}$ with $A \supseteq Y$, and $|B| = \frac{n - (1+\epsilon)\log\word}{2}$.
We have $|W(A)| \leq 2^{|A \setminus Y|} \cdot |W(Y)| \leq 2^{n / 2} \cdot \word^{-(1 + \epsilon) / 2}$ and $|W(B)| \leq 2^{n / 2} \cdot \word^{-(1 + \epsilon) / 2}$. Using \Cref{lem:sse,lemma:sse-poly,thm:bit-packing}, it takes expected time $O(2^{n/2} \cdot \word^{-(1+\epsilon) / 2} \cdot \log \word)$ to create the sorted lists $L_{A}$, $L_{B}$ and to run \blackref{alg:BitPacking}.

Cutting off the algorithm if it takes longer than a large constant times the expected runtime yields an one-sided error Monte Carlo algorithm by Markov's inequality.
\end{proof}

\begin{lemma}[Speedup via Unbalanced Solutions]
\label{lemma:input2-2} Let $(X,\, t)$ be a Subset Sum instance that has a solution. Given $Y \subseteq X$ of size $|Y| = c \log \word$ such that some solution $S \subseteq X$ satisfies $|S \cap Y| \notin (1 \pm \epsilon)\frac{|Y|}{2}$ for some constants $c > 0, \epsilon \in (0,\, 1)$, the solution $S$ can be found  in time $\tO(2^{n/2} \cdot \word^{-(1+\delta) / 2})$, where $\delta := (1 - H(\frac{1 - \epsilon}{2})) \cdot c$, with arbitrarily high constant probability.
\end{lemma}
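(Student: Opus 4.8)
The plan is to transcribe the proof of \Cref{lemma:input2} almost verbatim, replacing the final call to \blackref{alg:MeetInTheMiddle} with a call to \blackref{alg:BitPacking} — precisely the way \Cref{lemma:input1-2} refines \Cref{lemma:input1}, with the ``reduction factor'' $\word^{-\delta}$ here playing the role that $\word^{-\epsilon}$ plays there. As in \Cref{lemma:input2}, we first observe that we may assume without loss of generality that $|S \cap Y| \le (1-\epsilon)\tfrac{|Y|}{2}$, since either $(X,t)$ or the complementary instance $(X, \Sigma(X)-t)$ has this property and we can attempt both, only doubling the runtime.

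Next I would set up the partition. Fix $X = A \cup B \cup D$ with $Y \subseteq A$, $|D| = \log\word$, $|A| = \tfrac{n - (1-\delta)\log\word}{2}$, and $|B| = \tfrac{n - (1+\delta)\log\word}{2}$, so that $|A| + |B| + |D| = n$ (and $Y \subseteq A$ is feasible since $|A| = \tfrac{n}{2} - O(\log n) \gg |Y|$). Build the sorted list $L'_Y$ of $\{\Sigma(T) : T \subseteq Y,\ |T| \le (1-\epsilon)\tfrac{|Y|}{2}\}$ in time $O(2^{|Y|}) = O(\word^c) = \poly(n)$ via \blackref{alg:SSE} restricted to small subsets; then, exactly as in \Cref{lemma:input2}, use $L'_Y$ and \Cref{lemma:sse-poly} to build the sorted list $L'_A$ of $\{\Sigma(T) : T \subseteq A,\ |T \cap Y| \le (1-\epsilon)\tfrac{|Y|}{2}\}$. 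By Stirling's approximation (\Cref{eq:stirling}) and the definition $\delta = (1 - H(\tfrac{1-\epsilon}{2}))\,c$,
\[
    |L'_A| ~\le~ 2^{|A \setminus Y|}\cdot 2^{H((1-\epsilon)/2)\,|Y|} ~=~ 2^{|A|}\cdot\word^{-\delta} ~=~ 2^{n/2}\cdot\word^{-(1+\delta)/2},
\]
so this step runs in time $\tO(2^{n/2}\word^{-(1+\delta)/2})$, and the bound $|W(A')| \le 2^{|A|}\word^{-\delta}$ is exactly the polynomial gap needed for \Cref{lemma:sse-poly} to apply. Likewise build $W(D)$ in time $O(2^{|D|}) = O(\word)$ and $L_B$ in time $O(2^{|B|}) = O(2^{n/2}\word^{-(1+\delta)/2})$ via \blackref{alg:SSE}.

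Finally, run \blackref{alg:BitPacking} on the pair of sorted lists $L'_A$, $L_B$ with removed set $D$: for each shifted target $t' \in t - W(D)$, hash and pack both lists and perform the packed sweep exactly as in the proof of \Cref{thm:bit-packing}. The algorithm reports ``yes'' only when it exhibits a pair $a+b=t'$ with $a\in L'_A$, $b\in L_B$, so there are no false positives; and for $t' = t - \Sigma(S \cap D)$ the pair $(\Sigma(S \cap A),\, \Sigma(S \cap B))$ lies in $L'_A \times L_B$ — here we use $|S \cap Y| \le (1-\epsilon)\tfrac{|Y|}{2}$, so $S\cap A$ is one of the admissible subsets defining $L'_A$ — and \blackref{alg:BitPacking} finds it. As in \Cref{thm:bit-packing}, choosing $m = 3\log\word$ makes the packing absorb the $|W(D)| \le \word$ outer-loop iterations and makes the expected per-iteration collision cost $o_n(1)$, so the run takes expected time $\tO(|L'_A| + |L_B|) = \tO(2^{n/2}\word^{-(1+\delta)/2})$; halting after a large constant multiple of this expectation (Markov) gives the claimed one-sided-error algorithm, and the two complementary instances leave the bound unchanged.

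The only point requiring care is the size bookkeeping: the extra $\word^{-1/2}$ over \Cref{lemma:input2} must come from shrinking \emph{both} lists symmetrically by splitting off $D$ with $|D| = \log\word$, rather than (as in \Cref{lemma:input2}) from the one-sided restriction on $Y$ alone. One checks that with $|A| = \tfrac{n-(1-\delta)\log\word}{2}$ and $|B| = \tfrac{n-(1+\delta)\log\word}{2}$ both $|L'_A|$ and $|L_B|$ land at $2^{n/2}\word^{-(1+\delta)/2}$ while $|A|+|B|+|D| = n$; everything else is a direct transcription of the proofs of \Cref{lemma:input2} and \Cref{lemma:input1-2}.
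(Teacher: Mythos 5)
Your proof is correct and is exactly the ``straightforward combination'' of the proofs of \Cref{lemma:input2} and \Cref{lemma:input1-2} that the paper intends: you split off a size-$(\log\word)$ set $D$, push the $(1-\epsilon)$-unbalance restriction onto the $A$-side list $L'_A$ to shrink it by $\word^{-\delta}$, rebalance $|A|$ and $|B|$ so that both lists have size $2^{n/2}\word^{-(1+\delta)/2}$, and run the bit-packed sweep over the $|W(D)|\le\word$ shifted targets. Your size bookkeeping checks out (the stray ``$W(A')$'' in the display should read $|L'_A|$), and the correctness and Markov-cutoff arguments are exactly as in the paper.
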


\begin{proof}
The proof is a straightforward combination of the proofs of \Cref{lemma:input2,lemma:input1-2}.
\end{proof}

\begin{figure}[t]
    \centering
    \begin{mdframed}
    Subroutine $\term[\SampleList]{alg:sample-list}(\{R_{A,\, t',\, r}\}_{r}, h_m)$

    \begin{flushleft}
    {\bf Input:}
    A sorted list $R_{A,\, t'}$ containing elements in the sum-collection couple format $(a',\, \calQ_{a'})$, and a hash function $h_m$.
    
    
    \vspace{.05in}
    {\bf Output:}
    A hashed and packed list $H_{A,\, t'}$.
    
    \begin{enumerate}[label =
    7(\alph*)]
        \item\label{alg:sample-list:1}
        Initialize indices $i,\, j := 0$.
        While $i < |R_{A,\, t'}|$, set three words of $H_{A,\, t'}$ as follows:
        
        \item\label{alg:sample-list:2}
        \qquad $H_{A,\, t'}[3j] := a'_{i}$ stores the full integer $a'_{i}$, the smallest not-yet-packed sum.
        
        \item\label{alg:sample-list:3}
        \qquad $H_{A,\, t'}[3j+1] := ((h_{m}(a'_{i}),\, \calQ_{a'_{i}}),\, (h_{m}(a'_{i + 1}),\, \calQ_{a'_{i + 1}}),\, \dots)$ packs as many \\
        \qquad hash-collection couples as will fit into a single $\word$-bit word. \\
        \qquad Update $i$ to the index of the next not-yet-packed sum in $R_{A,\,t'}$.
        
        \item\label{alg:sample-list:4}
        \qquad $H_{A,\, t'}[3j+2] := a'_{i-1}$ stores the full integer $a'_{i-1}$, the largest already-packed sum.

        \item\label{alg:sample-list:5}
        \qquad Increment $j \gets j + 1$.
    \end{enumerate}
    \end{flushleft}
    \end{mdframed}
    \caption{The {\SampleList} subroutine.
    \label{fig:alg:sample-list}}
\end{figure}

\begin{lemma}[Hashing and Packing; Line~\ref{alg:packed-rep-ov:7}]
\label{lemma:pack-main}
In the context of \blackref{alg:packed-rep-ov}: \\
Line~\ref{alg:packed-rep-ov:7} uses the subroutine \blackref{alg:sample-list} to create lists $\bH_{\bA,\, t'}$ and $\bH_{\bB,\, t'}$ in time $\tO(k(n,\, \word))$ and to store these lists in $|\bH_{\bA,\, t'}| + |\bH_{\bB,\, t'}| = \tO(k(n,\, \word) \cdot \word^{-1})$ many words.
\end{lemma}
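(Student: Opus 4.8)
The plan is to analyze the three-word block structure that \blackref{alg:sample-list} imposes on the sorted, sum-collection-compressed list $\bR_{\bA,\, t'}$ (and symmetrically $\bR_{\bB,\, t'}$), and to charge both the running time and the word count against the number of sum-subset couples, which is at most $k(n,\, \word)$ for the iteration $t'$ by the cutoff in Line~\ref{alg:packed-rep-ov:5}. Recall that \blackref{alg:sample-list} first merges the $\le s(\word) = \poly(\word)$ sorted sublists $\{\bR_{\bA,\, t',\, \boldr}\}_{\boldr}$ (regrouping couples by their sum $a'$) into $\bR_{\bA,\, t'}$, and then walks through $\bR_{\bA,\, t'}$ greedily, emitting one block at a time consisting of two words holding the smallest not-yet-packed and the largest already-packed full sums $a'$, and one word into which as many hash-collection couples $(\bh_m(a'),\, \bcalQ_{a'})$ as will fit are packed.

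For the running time I would first observe that the merge-and-regroup step runs in time $O(|\bR_{\bA,\, t'}| \cdot \log s(\word)) = \tO(k(n,\, \word))$ by merge sort; that computing $\bh_m(a')$ for each of the at most $k(n,\, \word)$ distinct sums costs $O(\log \word)$ apiece (\Cref{lem:hash}), a total of $\tO(k(n,\, \word))$; and that the packing loop performs $O(1)$ word operations per couple consumed and $O(1)$ more per block emitted, hence also runs in $\tO(k(n,\, \word))$ time. Adding the symmetric cost for $\bR_{\bB,\, t'}$ and $\bH_{\bB,\, t'}$ yields the claimed $\tO(k(n,\, \word))$ bound.

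The storage bound reduces to showing that each of $\bH_{\bA,\, t'}$, $\bH_{\bB,\, t'}$ has $\tO(k(n,\, \word)/\word)$ blocks, since each block is three words. A couple $(\bh_m(a'),\, \bcalQ_{a'})$ occupies $m + |C| \cdot |\bcalQ_{a'}| \le m + |C| \cdot |\calQ^{+e_2}(C)|$ bits, storing each near-quarterset as its $|C|$-bit membership vector. I would first verify that this is at most $\word$ in both cases, so that a couple always fits in a word and the greedy packing makes progress: in \blackref{pack-rep-case1}, $|C| = \Theta(\log \word)$ and $|\calQ^{+0}(C)| \le 2^{H(1/4)|C|} = \word^{H(1/4)\rho/2} = o(\word/\log\word)$ for $\rho \le 1$; in \blackref{pack-rep-case2}, the choice $\beta = 1/H((1+\epsilon_2)/4)$ gives $|\calQ^{+\epsilon_2}(C)| \le 2^{|C|/\beta} \le \word/(\beta\log\word)$, so $|C| \cdot |\calQ^{+\epsilon_2}(C)| \le \word - \tOmega(\word/\log\word)$, leaving ample room for the $m = 3\log\word$ hash bits. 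Next, the total bit-length of all couples of $\bR_{\bA,\, t'}$ is $O(k(n,\, \word) \cdot \log \word)$: one $m$-bit hash per distinct sum (at most $k(n,\, \word)$ of them), one $|C|$-bit vector per sum-subset couple (at most $k(n,\, \word)$ of them), and $m, |C| = O(\log\word)$. Finally, the crux: letting $B_1, \dots, B_L$ be the blocks produced and $c^{(\ell)}$ the first couple of $B_{\ell+1}$, the greedy rule (a block closes exactly when the next couple would overflow its word) gives $\sum_{c \in B_\ell} |c| + |c^{(\ell)}| > \word$ for every $\ell < L$; summing over $\ell = 1, \dots, L-1$, each couple is counted at most twice (once inside its block, once as the ``next'' couple of the preceding block), so $(L-1)\cdot\word < 2\sum_i |c_i| = O(k(n,\, \word)\cdot\log\word)$, i.e., $L = \tO(k(n,\, \word)/\word)$. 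Hence $|\bH_{\bA,\, t'}| = 3L = \tO(k(n,\, \word)/\word)$, and likewise for $\bH_{\bB,\, t'}$.

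The main obstacle — and the reason the storage bound needs this averaging argument rather than a one-line ``$\Theta(\word/m)$ couples per word'' estimate — is that a single collection $\bcalQ_{a'}$ can contain as many as $|\calQ^{+e_2}(C)| = \tTheta(\word/\log\word)$ near-quartersets, so one couple may occupy nearly a full word and a greedily formed block may carry only a handful of bits. The ``overlapping inequalities / double counting'' trick sidesteps this precisely because it never requires any individual block to be close to full, only that each non-final block together with the couple that forced it to close accounts for at least $\word$ bits.
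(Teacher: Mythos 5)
Your proof is correct in outline and would carry the lemma, but it deviates from the paper's argument for the storage bound, and it contains one substantive error about the parameter $\beta$.

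For the storage bound, the paper makes a much blunter observation: it bounds a single hash-collection couple by $m + |\bcalQ_{a'}|\cdot|\bC| \leq m + 2^{|\bC|}\cdot|\bC| = \tO(\word^{1/2}) = o(\word)$ bits, using only $|\bC| \leq \tfrac{1}{2}\log\word$ and the trivial bound $|\bcalQ^{+e_2}(\bC)| \leq 2^{|\bC|}$. Since each couple is $o(\word)$ bits, every packed word $\bH_{\bA,\,t'}[3j+1]$ (except possibly the last) has at most $o(\word)$ unused bits, and the word count is then immediately total bits divided by $\Omega(\word)$, i.e.\ $\tO(k(n,\word))/\Omega(\word)$, multiplied by three. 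No averaging or double counting is needed. Your ``overlapping inequalities'' argument (sum $\sum_{c\in B_\ell}|c| + |c^{(\ell)}| > \word$ over non-final blocks, count each couple at most twice) is correct and is in fact more robust, since it only needs couples to fit in a word rather than be $o(\word)$ bits; but that extra robustness is not needed here, as explained next.

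The error: in \blackref{pack-rep-case2} you take $\beta = 1/H((1+\epsilon_2)/4) \approx 1.1186$, which is the $\beta$ from \blackref{alg:memo-ov} in \Cref{sec:memo-ov}. In \blackref{alg:packed-rep-ov}, $\beta = \beta(\rho)$ is a different constant determined by \Cref{eq:lambda,eq:main} (see \Cref{fig:main:beta}), and it satisfies $\beta(\rho) < 1/2$ throughout the relevant density range, precisely so that $|\bC| \leq \tfrac12\log\word$ holds in both cases. With the correct $\beta$, your concern that ``one couple may occupy nearly a full word'' does not arise: a couple is $\tO(\word^{1/2})$ bits, not $\word - \tOmega(\word/\log\word)$ bits. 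So the paragraph you flag as the ``crux'' is solving a problem that the paper's parameter choice avoids. The argument still goes through — a weaker hypothesis makes the double-counting step genuinely necessary, and that step is sound — but the proof is more elaborate than the paper's and is motivated by a false premise about the size of $\beta$.

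The running-time analysis matches the paper's (the paper folds the merge-sort cost into the outer runtime analysis rather than into this lemma's proof, but this is a bookkeeping choice, not a mathematical difference).
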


\begin{proof}
Line~\ref{alg:packed-rep-ov:7} creates $\bH_{\bA,\, t'}$, $\bH_{B,\, t'}$ using the subroutine \blackref{alg:sample-list} (\Cref{fig:alg:sample-list}).
The claimed runtime $\tO(k(n,\, \word))$ follows from the fact that the input list $\bR_{\bA,\, t'}$ contains $|\bR_{\bA,\, t'}| \leq \sum_{a'} |\bcalQ_{a'}| \leq k(n,\, \word)$ sum-collection couples by construction (\Cref{lemma:sse-quarter} and Line~\ref{alg:packed-rep-ov:5}), and \blackref{alg:sample-list} goes through it once, spending time $O(\log \word)$ on each couple to compute the hash $\bh_{m}(a')$.

To see that the output list $\bH_{\bA,\, t'}$ is stored in $\tO(k(n,\, \word) \cdot \word^{-1})$ many $\word$-bit words, we first observe that
a single hash-collection couple $(\bh_{m}(a'),\, \bcalQ_{a'})$ takes at most
\[
    m + |\bcalQ_{a'}| \cdot |\bC|
    ~\leq~ m + |\bcalQ^{+e_3}(\bC)| \cdot |\bC|
    ~\leq~ m + 2^{|\bC|} \cdot |\bC|
    ~=~ \tO(\word^{1/2})
    ~=~ o(\word)
\]
many bits, for $m = 3\log \word$ and $|\bC| =  \max\{\frac{1}{2}\log(\frac{\word^{\rho}}{\log \word}),\, \beta\log(\frac{\word}{\beta\log \word})\} \leq \frac{1}{2}\log \word$ (Line~\ref{alg:packed-rep-ov:0} and \Cref{fig:main:beta}).
That is, a single word $\bH_{\bA,\, t'}[3j+1]$ created in Line~\ref{alg:sample-list:3} has at most $o(\word)$ many unused bits, which is negligible compared to the word length. 
Moreover, all of the $|\bR_{\bA,\, t'}| \leq \sum_{a'} |\bcalQ_{a'}| \leq k(\word)$ many hash-collection couples $(\bh_{m}(a'),\, \bcalQ_{a'})$ take a total of at most
\[
    \textstyle
    |\bR_{\bA,\, t'}| \cdot m + \sum_{a'} |\bcalQ_{a'}| \cdot |\bC|
    ~\leq~ k(n,\, \word) \cdot 3\log \word + k(n,\, \word) \cdot \frac{1}{2}\log \word
    ~=~ \tO(k(n,\, \word))
\]
many bits. Accordingly, a total of $\tO(k(n,\, \word)) \cdot \word^{-1} = \tO(k(n,\, \word) \cdot \word^{-1})$ many words $\bH_{\bA,\, t'}[3j+1]$ are created throughout all the executions of Line~\ref{alg:sample-list:3}. The entire output list $\bH_{\bA,\, t'}$ has three times as many words because of the additional elements storing the highest and the lowest values hashed into each word $\bH_{\bA,\, t'}[3j+1]$.
\end{proof}

\begin{figure}[t]
    \centering
    \begin{mdframed}
    Subroutine $\term[\SearchList]{alg:search-list}(t'',\, R_{A,\, t'},\, R_{B,\, t'},\, H_{A,\, t'},\, H_{B,\, t'})$

    \begin{flushleft}
    {\bf Input:}
    A shifted target $t''$ and sorted lists $R_{A,\, t'}$, $R_{B,\, t'}$, $H_{A,\, t'}$, $H_{B,\, t'}$ for $t' = (t'' \bmod p)$.
    
    \begin{enumerate}[label =
    9(\alph*)]
    \item\label{alg:sample-searching:1}
    Initialize indices $i := 0$ and $j := |H_{B,\,t'}|-1$. While $i < |H_{A,\,t'}|$ and $j \geq 0$:
    
    \item\label{alg:sample-searching:2}
    \qquad If the indexed words $H_{A,\,t'}[3i+1]$, $H_{B,\,t'}[3j+1]$ contain hash-collection couples \\ 
    \qquad $(\bh_{m}(a'),\, \calQ_{a'})$, $(\bh_{m}(b'),\, \calQ_{b'})$ with $\bh_{m}(a') + \bh_{m}(b') \in \bh_{m}(t'') - \{0,\, 1\} \pmod{2^{m}}$ \\
    \qquad and $Q_{a'} \cap Q_{b'} = \emptyset$ for some $Q_{a'} \in \calQ_{a'}$, $Q_{b'} \in \calQ_{b'}$, use \Cref{lemma:sse-quarter} to search the \\
    \qquad 
    corresponding sublists of $R_{A,\, t'}$, $R_{B,\, t'}$ for a solution as described in the proof of \\
    \qquad  \Cref{lemma:MiM-main}. Halt and return ``yes'' if a solution is found.
    
    \item\label{alg:sample-searching:3}
    \qquad If $H_{A,\,t'}[3i] + H_{B,\,t'}[3j+2] < t''$, set $i \gets i + 1$. Otherwise, set  $j \gets j - 1$.
    \end{enumerate}
    \end{flushleft}
    \end{mdframed}
    \caption{The {\SearchList} subroutine.
    \label{fig:alg:search-list}}
\end{figure}

\begin{lemma}[Searching for Solutions; Line \ref{alg:packed-rep-ov:9}]
\label{lemma:MiM-main}
In the context of \blackref{alg:packed-rep-ov}: \\
{\bf (i)}~Line~\ref{alg:packed-rep-ov:9} finds a solution via \blackref{alg:search-list}, i.e., a pair of sum-subset couples $(a',\, Q_1)$, $(b',\, Q_2)$ with $a' + b' = t''$ and $Q_1 \cap Q_2 = \emptyset$, with arbitrarily high constant probability, if the lists $\bR_{\bA,\, (t'' \bmod \bp)}$ and $\bR_{\bB,\, (t'' \bmod \bp)}$ contain such a solution.
{\bf (ii)}~\blackref{alg:search-list} has 
runtime $\tO(k(n,\, \word) \cdot \word^{-1})$.
\end{lemma}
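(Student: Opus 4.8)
\textbf{Proof plan for \Cref{lemma:MiM-main}.}
The plan is to view \blackref{alg:search-list} as nothing more than the two-pointer ``packed meet-in-the-middle'' inner loop of \blackref{alg:BitPacking}, with the single modification that a comparison of two packed words is now carried out by the $\mathsf{AC}^0$ operation \blackref{function:packed-OV}, which simultaneously tests the pseudolinear hash-sum relation and resolves the disjointness constraints $Q_{a'}\cap Q_{b'}=\emptyset$. Both parts then follow by combining the analysis of \Cref{thm:bit-packing} with that of \Cref{lemma:sse-quarter}.

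For part (i), I would first record that the sequence of word pairs $(\bH_{\bA,\,t'}[3i+1],\,\bH_{\bB,\,t'}[3j+1])$ examined by \blackref{alg:search-list} is a function of $\bR_{\bA,\,t'}$ and $\bR_{\bB,\,t'}$ alone and is independent of the random hash $\bh_m$, since the pointer updates in Line~\ref{alg:sample-searching:3} read only the exact boundary sums stored by \blackref{alg:sample-list} in $\bH_{\bA,\,t'}[3i],\bH_{\bA,\,t'}[3i+2],\bH_{\bB,\,t'}[3j],\bH_{\bB,\,t'}[3j+2]$. Now suppose $\bR_{\bA,\,t'}$ and $\bR_{\bB,\,t'}$ contain a valid solution realized by couples $(a',Q_1)$, $(b',Q_2)$ with $a'+b'=t''$ and $Q_1\cap Q_2=\emptyset$; let $\bH_{\bA,\,t'}[3i^*+1]$ and $\bH_{\bB,\,t'}[3j^*+1]$ be the packed words carrying $(\bh_m(a'),\bcalQ_{a'})$ and $(\bh_m(b'),\bcalQ_{b'})$, so that $Q_1\in\bcalQ_{a'}$ and $Q_2\in\bcalQ_{b'}$. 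By the same two-pointer invariant used in the correctness proof of \Cref{thm:bit-packing}, the equality $a'+b'=t''$ forces \blackref{alg:search-list} to compare these two words before either pointer steps past them. At that comparison, pseudolinearity of $\bh_m$ (\Cref{lem:hash}) gives $\bh_m(a')+\bh_m(b')\in\bh_m(t'')-\{0,\,1\}\pmod{2^m}$, and $Q_1,Q_2$ witness the disjointness clause, so the ``If'' test in Line~\ref{alg:sample-searching:2} succeeds; the ensuing exact search of the corresponding sublists of $\bR_{\bA,\,t'},\bR_{\bB,\,t'}$ via \Cref{lemma:sse-quarter}(ii) then deterministically returns a solution pair. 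Thus \blackref{alg:search-list} finds a solution whenever its input lists contain one, except when it is halted by the runtime cutoff; by part (ii) and Markov's inequality, stopping after a sufficiently large constant multiple of the expected runtime loses the solution with arbitrarily small constant probability, which yields the stated ``arbitrarily high constant probability''.

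For part (ii), the main loop runs for at most $|\bH_{\bA,\,t'}|+|\bH_{\bB,\,t'}|=\tO(k(n,\,\word)\cdot\word^{-1})$ iterations by \Cref{lemma:pack-main}, each advancing one pointer; the test in Line~\ref{alg:sample-searching:2} is one invocation of the $\mathsf{AC}^0$ operation \blackref{function:packed-OV} on a constant number of words, costing $O(1)$ in the circuit RAM model, and Line~\ref{alg:sample-searching:3} costs $O(1)$. The ``If'' test is passed (a) at most once on account of the genuine solution, triggering a single exact sublist search of cost $\tO(\word)$ (a packed word holds $q\le\word/m=\tO(\word)$ hash-collection couples, and \Cref{lemma:sse-quarter}(ii) on two such sublists is linear in $q$); and (b) once per spurious hash collision, i.e.\ a couple pair from the two words whose full sums differ from $t''$ yet whose hashes satisfy $\bh_m(a')+\bh_m(b')\in\bh_m(t'')-\{0,\,1\}\pmod{2^m}$ --- note that the disjointness clause of \blackref{function:packed-OV} can only reject a pair, never make it pass on its own. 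Because the compared word sequence is hash-independent, pseudouniversality (\Cref{lem:hash}) applies verbatim: each of the at most $q^2=\tO(\word^2)$ couple pairs inside one word pair collides with probability $O(2^{-m})=O(\word^{-3})$ (using $m=3\log\word$), and each collision costs a fruitless sublist search of $\tO(\word)$, so --- mirroring the collision bookkeeping in the proof of \Cref{thm:bit-packing} --- the expected overhead per word-pair comparison is $O(q^3\cdot 2^{-m})=O(1/\log^3\word)=o(1)$. Summing over all $\tO(k(n,\,\word)\cdot\word^{-1})$ comparisons and adding the single $\tO(\word)$-cost genuine search (which is $o(k(n,\,\word)\cdot\word^{-1})$ since $k(n,\,\word)$ is exponential in $n$ while $\word=\poly(n)$) gives expected runtime $\tO(k(n,\,\word)\cdot\word^{-1})$; halting at a constant multiple of this bound turns it into a worst-case guarantee.

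The step I expect to be the main obstacle is the hash-collision accounting in this ``packed-and-indexed'' regime: unlike in \blackref{alg:BitPacking}, a single word of $\bH_{\bA,\,t'}$ packs a variable number of hash-collection couples coming from several distinct sums, so one must confirm (through \Cref{lemma:pack-main}) that both the number of couples per word and the bit-length of each couple stay controlled, verify that carrying the collections $\bcalQ_{a'}$ through \blackref{alg:search-list} does not make the comparison schedule depend on $\bh_m$, and check that the Orthogonal-Vectors clause of \blackref{function:packed-OV} only ever filters pairs out (never spuriously admits them), so that the collision probability is still governed purely by the $O(2^{-m})$ bound of \Cref{lem:hash}; once these points are pinned down, the runtime calculation is identical to that in \Cref{thm:bit-packing} and the correctness argument reduces to \Cref{lemma:sse-quarter}(ii).
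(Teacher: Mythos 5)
Your proposal takes essentially the same approach as the paper's proof: it reduces \blackref{alg:search-list} to the two-pointer analysis of \blackref{alg:BitPacking}, uses the exact boundary sums $\bH[3i],\bH[3i+2]$ to argue that the comparison schedule is determined by $\bR_{\bA,t'},\bR_{\bB,t'}$ alone (hence independent of $\bh_m$), observes that pseudolinearity plus the disjoint pair $Q_1,Q_2$ forces the \blackref{function:packed-OV} test to pass on the relevant word pair, bounds the number of scan iterations by $|\bH_{\bA,t'}|+|\bH_{\bB,t'}|=\tO(k(n,\word)\word^{-1})$ via \Cref{lemma:pack-main}, and bounds the expected collision-verification overhead per word pair by $O(q^3 2^{-m})=O(1/\log^3\word)$ exactly as in \Cref{thm:bit-packing}. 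The points you flag as the ``main obstacle'' (variable couples per word, hash-independence of the schedule, one-sidedness of the OV clause) are indeed the key observations, and the paper handles them the same way. Two small remarks: first, you write that a fruitless sublist search costs $\tO(\word)$ but your arithmetic $q^3\cdot 2^{-m}$ correctly uses the tighter bound $O(q)=O(\word/m)$ per search, which is what makes the $o(1)$ work out — keep the tighter bound in the prose as well. Second, the paper does not explicitly spell out where the ``arbitrarily high constant probability'' in part (i) comes from; your reading (the runtime bound of (ii) is expected, so one cuts off at a large constant multiple and loses the solution only with small probability by Markov) is the natural and correct way to close that gap.
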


\begin{proof}
\blackref{alg:search-list} adapts the \blackref{alg:BitPacking} algorithm (Lines \ref{alg:BitPacking:4} to \ref{alg:BitPacking:6}) for the lists $\bH_{\bA,\, t'}$, $\bH_{\bB,\, t'}$ to address two additional issues.
First, not only a hash collision but also a pair of overlapping near-quartersets in $\bH_{\bA,\, t'}$, $\bH_{\bB,\, t'}$ may incur a false positive. Also, each word in $\bH_{\bA,\, t'}$, $\bH_{\bB,\, t'}$ packs an uncertain amount of hash-collision couples (while \blackref{alg:BitPacking} processes exactly $(\word / m)$ hashes per iteration).

Line~\ref{alg:sample-searching:2} settles the first issue by modifying the ``If'' test in line \ref{alg:BitPacking:4} of \blackref{alg:BitPacking}. This new test is implemented by the $(2\word)$-bit Boolean function \term[\texttt{Hash-OV}]{function:packed-OV}: it takes as input two packed words
$u = ((\bh_m(a'_1),\, \bcalQ_{a'_1}),\, (\bh_m(a'_2),\, \bcalQ_{a'_2}),\, \dots)$ and $v = ((\bh_m(b'_1),\, \bcalQ_{b'_1}),\, (\bh_m(b'_2),\, \bcalQ_{b'_2}),\, \dots)$, where each $\bh_m(a')$, $\bh_m(b')$ is a hashed value and each $\bcalQ_{a'},\, \bcalQ_{b'} \subseteq \bcalQ^{+e_3}(\bC)$ is a collection of near-quartersets.
$\blackref{function:packed-OV}(u,\, v)$ returns $1$ if and only if two conditions hold:
first, there is a pair of hash-collection couples $(\bh_{m}(a'), \calQ_{a'})$, $(\bh_{m}(b'), \calQ_{b'})$ with $\bh_{m}(a') + \bh_{m}(b') \in \bh_{m}(t'') - \{0,\, 1\} \pmod{2^{m}}$;
second, there are two disjoint near-quartersets $Q_{a'} \cap Q_{b'} = \emptyset$, for $Q_{a'} \in \calQ_{a'}$, $Q_{b'} \in \calQ_{b'}$, in the packed collections indexed by $\bh_m(a')$, $\bh_m(b')$.
\blackref{function:packed-OV} is an $\mathsf{AC}^0$ operation on two words and takes constant time to evaluate in the circuit RAM model, since all hash-collection couples can be checked in parallel.

Now on comparing any two words $\bH_{\bA,\, t'}[3i+1]$, $\bH_{\bB,\, t'}[3j+1]$, by the second condition above, overlapping near-quartersets never incur false positives. Hence like \blackref{alg:BitPacking}, the new ``If'' test in Line~\ref{alg:sample-searching:2} is passed by every correct solution and (some of) the hash collisions.
We further verify such a potential solution by returning back to the (unhashed) lists $\bR_{\bA,\, t'}$, $\bR_{\bB,\, t'}$ and checking all the sum-collection couples that are packed into $\bH_{A,\,t'}[3i+1]$, $\bH_{\bB,\,t'}[3j+1]$, namely the two sorted sublists
$\{(a',\, \bcalQ_{a'}) \mid \bH_{\bA,\, t'}[3i] \leq a' \leq \bH_{\bA,\, t'}[3i+2]\}$ and
$\{(b',\, \bcalQ_{b'}) \mid \bH_{\bB,\, t'}[3j] \leq b' \leq \bH_{\bB,\, t'}[3j+2]\}$.
Either sublist is stored in at most $(\word / m)$ words by construction (see \blackref{alg:sample-list},  Line~\ref{alg:sample-list:3}).
Hence by \Cref{lemma:sse-quarter}, this verification process takes time $O(\word / m) = \poly(n)$.

Line~\ref{alg:sample-searching:3} settles the second issue by replacing the test in line \ref{alg:BitPacking:6} of \blackref{alg:BitPacking} with a new test:
If $\bH_{\bA,\, t'}[3i] + \bH_{\bB,\, t'}[3j+2] < t''$? By construction (\blackref{alg:sample-list}), $\bH_{\bA,\,t'}[3i]$ and $\bH_{\bB,\,t'}[3j+2]$ are the exact values of the smallest sum packed into $\bH_{\bA,\,t'}[3i+1]$ and the largest sum packed into $\bH_{\bB,\,t'}[3j+1]$, respectively. By the same argument as in the proof of correctness for \blackref{alg:BitPacking}, in a single scan of $\bH_{\bA,\, t'}$ and $\bH_{\bB,\, t'}$, we never miss a pair of words that packs a correct solution.

\vspace{.1in}
\noindent
{\bf (ii).}~\blackref{alg:search-list} performs a single scan of $\bH_{\bA,\, t'}$ and $\bH_{\bB,\, t'}$, plus the verification of at most one correct solution versus the hash collisions. Using the same argument as in the proof of runtime for \blackref{alg:BitPacking} (Line~\ref{alg:BitPacking:5}), the expected verification-time $\poly(n) + (|\bH_{\bA,\, t'}| + |\bH_{\bB,\, t'}|) \cdot o_{n}(1)$ is dominated by the scan-time $= O(|\bH_{\bA,\, t'}| + |\bH_{\bB,\, t'}|) = \tO(k(n,\, \word) \cdot \word^{-1})$, given \Cref{lemma:pack-main}.
\end{proof}

\begin{observation}[Adapting \blackref{alg:packed-rep-ov} to Word RAM]
\label{obs:packed-rep-ov-word-RAM}
As in \Cref{sec:memo-ov}, to adapt \blackref{alg:packed-rep-ov} to the word RAM model, we run the algorithm as if the word length were $\word' := 0.1n$ and memoize \blackref{function:packed-OV} to speed up the evaluation of this function. The resulting runtime is $O(2^{n/2} \cdot n^{-0.5023})$ for $(2 - H(1/4))^{-1} < \rho < 1 + \Theta(1 / \log n)$.
There are three additional modifications:
\begin{enumerate}
    
    \item Line~\ref{alg:sample-list:1} sets $\lceil \word' / \word \rceil + 2$ words (rather than three words) in a single iteration, 
    i.e., the single word set in Line~\ref{alg:sample-list:3} is replaced with $\lceil \word' / \word \rceil = \Theta(1)$ words (since a single word with $\word$ bits may be insufficient).
    \item Before the execution of Line~\ref{alg:packed-rep-ov:8}, create a table $\texttt{Hash-OV}'$ that memoizes all input-output results of the Boolean function \blackref{function:packed-OV}, in time $(2^{\word'})^{2} \cdot \poly(\word') = O(2^{0.21n})$\ignore{$(3^{\word'})^{2} \cdot \poly(\word') = 2^{(0.2\log3)n} \cdot \poly(n) = O(2^{0.32n})$}. We also replicate the table $\texttt{OV}'$ described in \Cref{obs:memo-ov-word-RAM}. Then either table can be accessed using a $2\lceil \word' / \word \rceil = \Theta(1)$-word index in constant time.
    \item Line~\ref{alg:sample-searching:2} replaces the functions \blackref{function:packed-OV}
    (used in \Cref{lemma:MiM-main}) and \blackref{function:OV} (used in \Cref{lemma:sse-quarter}) with constant-time lookup into the above memoized tables $\texttt{Hash-OV}'$ and $\texttt{OV}'$, respectively.
\end{enumerate}
Compared with running \blackref{alg:packed-rep-ov} itself for $\word' = 0.1n$, the only significant difference of this word RAM variant is that the number of packed words in $\bH_{A, t'}$ and $\bH_{B, t'}$ increases by a $\Theta(1)$ factor, so we can easily check the correctness and the (asymptotically) same runtime.\ignore{and correctness and runtime are otherwise as in the $\word = \Theta(n)$ case.}
\end{observation}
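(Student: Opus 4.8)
The plan is to mirror the word-RAM conversions already carried out in \Cref{obs:bit-packing-word-RAM} and \Cref{obs:memo-ov-word-RAM}. The only primitive used by \blackref{alg:packed-rep-ov} that is unit-cost in the circuit RAM model but not in the word RAM model (beyond multiplication, which is atomic in both) is the $\mathsf{AC}^0$ word operation \blackref{function:packed-OV} invoked in Line~\ref{alg:sample-searching:2} via \Cref{lemma:MiM-main}, together with the operation \blackref{function:OV} invoked through \Cref{lemma:sse-quarter}. As in the earlier observations, I would run the entire algorithm pretending the word length is $\word' := 0.1n$, so that every ``pseudo-word'' the algorithm manipulates is at most $\word'$ bits wide and hence occupies $\lceil \word'/\word \rceil = \Theta(1)$ genuine machine words (using $\word = \Omega(n)$ and $\word' = 0.1n$), and I would replace every evaluation of \blackref{function:packed-OV} and \blackref{function:OV} by a constant-time lookup into a precomputed table. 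Simulating at $\word' = \Theta(n)$ sets the effective density to $\rho' := \log n / \log \word' = 1 + \Theta(1/\log n)$, which lies in the hard range $(2-H(1/4))^{-1} < \rho' < 1 + \Theta(1/\log n)$ of \Cref{thm:main}, so the target runtime $O(2^{n/2} \cdot n^{-0.5023})$ is exactly the circuit-RAM bound of \Cref{thm:main} read with $\word$ replaced by $\word'$; for an original density $\rho$ below this range one instead uses the word-RAM adaptation of \blackref{alg:BitPacking} from \Cref{obs:bit-packing-word-RAM}, which already gives an $\Omega(n^{0.5943})$ speedup.

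Concretely I would implement the three stated modifications. First, in \blackref{alg:sample-list}, Line~\ref{alg:sample-list:1} sets $\lceil \word'/\word \rceil + 2 = \Theta(1)$ machine words per iteration instead of three: the single packed word of Line~\ref{alg:sample-list:3} becomes a block of $\lceil \word'/\word \rceil$ machine words holding the same bit string of length $m' + |\bcalQ^{+e_2}(\bC)| \cdot |\bC| = \tilde O(\word'^{1/2}) = o(\word')$ bits (the bound from the proof of \Cref{lemma:pack-main}, now read at word length $\word'$), while the boundary words $\bH_{\bA,\, t'}[3j]$ and $\bH_{\bA,\, t'}[3j+2]$ are unchanged. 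Second, just before Line~\ref{alg:packed-rep-ov:8} we build a table $\texttt{Hash-OV}'$ recording the output of \blackref{function:packed-OV} on every pair of $\word'$-bit strings: there are $(2^{\word'})^2 = 2^{0.2n}$ such pairs, each entry is computable in $\poly(\word') = \poly(n)$ time, so this costs $(2^{\word'})^2 \cdot \poly(\word') = O(2^{0.21n})$; we likewise replicate the table $\texttt{OV}'$ of \Cref{obs:memo-ov-word-RAM}. Both tables are indexed by a $2\lceil \word'/\word \rceil = \Theta(1)$-word key and support constant-time access. Third, in Line~\ref{alg:sample-searching:2} the calls to \blackref{function:packed-OV} (inside \Cref{lemma:MiM-main}) and to \blackref{function:OV} (inside \Cref{lemma:sse-quarter}) are replaced by the corresponding table lookups.

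For correctness I would observe that the only way this variant differs from an execution of \blackref{alg:packed-rep-ov} at word length $\word'$ is that the packed lists $\bH_{\bA,\, t'}$ and $\bH_{\bB,\, t'}$ are spread over a $\lceil \word'/\word \rceil = \Theta(1)$ factor more machine words; all invariants used in the proofs of \Cref{lemma:pack-main} and \Cref{lemma:MiM-main} — the sorted order, the fact that $\bH[3j]$ and $\bH[3j+2]$ store the exact smallest and largest sums packed into the $j$-th block, the pseudolinear hash-collision test, and the disjointness (OV) test — are word-boundary-agnostic and carry over verbatim, so \blackref{alg:search-list} still never misses a packed solution and never reports a false positive. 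For the runtime I would replay the two-case analysis of \Cref{thm:main} with every occurrence of $\word$ replaced by $\word' = \Theta(n)$: each scan of $\bH_{\bA,\, t'}$, $\bH_{\bB,\, t'}$ touches a $\Theta(1)$ factor more words but still costs $\tilde O(k(n,\, \word') \cdot \word'^{-1})$, the expected time lost to hash collisions in each scan remains $o_n(1)$ because $m' = 3\log\word' = \Theta(\log n)$, and the $O(2^{0.21n})$ table preprocessing is dominated by the $2^{n/2}/\poly(n)$ main term; substituting $\rho' = 1 + \Theta(1/\log n)$ into the exponent optimization \Cref{eq:main} gives total runtime $O(2^{n/2} \cdot n^{-0.5023})$. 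The one point needing care — and the main (though largely bookkeeping) obstacle — is verifying that at the simulated word length $\word'$ a single packed word still genuinely fits in $\word'$ bits, i.e.\ that $|\bC| \le \tfrac12 \log \word'$ and $|\bcalQ^{+e_2}(\bC)| \cdot |\bC| \le \word'$ persist after the substitution $\word \mapsto \word'$, so that \blackref{function:packed-OV} and $\texttt{Hash-OV}'$ are well-defined on legitimate pseudo-words; this follows from the parameter choices in Line~\ref{alg:packed-rep-ov:0} and the size bounds in \Cref{lemma:pack-main} read with $\word$ set to $\word'$.
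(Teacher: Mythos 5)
Your proposal is correct and follows essentially the same route as the paper: simulate the algorithm at word length $\word' = 0.1n$, spread each packed pseudo-word over $\lceil \word'/\word\rceil = \Theta(1)$ machine words, memoize \blackref{function:packed-OV} (and replicate $\texttt{OV}'$) in time $O(2^{0.21n})$, and replace the $\mathsf{AC}^0$ calls with constant-time table lookups, noting that all invariants of \Cref{lemma:pack-main,lemma:MiM-main} are insensitive to word boundaries. Your added bookkeeping on the effective density $\rho'$ and on the packed-word size bounds is consistent with (indeed slightly more explicit than) the paper's own justification.
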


\section{Extensions and Future Work}
\label{sec:extensions}

Our results open up several natural directions for future investigation:

\begin{itemize}
    \item {\bf Derandomization:}  All of the $2^{n/2}/\poly(n)$-time algorithms we have given for Subset Sum use randomness.
    Can our results be extended to achieve deterministic algorithms with worst-case running time $2^{n/2}/\poly(n)$?
    
    \item {\bf Counting:} It is straightforward to modify the  \blackref{alg:MeetInTheMiddle} algorithm to output a count of the number of Subset Sum solutions in time $O(2^{n/2})$ (essentially, by keeping track of the multiplicity with which each value occurs in each list $L_A,L_B$). Can our techniques be extended to give counting algorithms for Subset Sum that run in time $2^{n/2}/\poly(n)$? In time $2^{n/2}/n^{0.501}$?
    
    \item {\bf Faster runtimes:} Finally, an obvious goal is to quantitatively strengthen our results by developing faster algorithms for worst-case Subset Sum. It would be particularly interesting to achieve running times of the form $2^{n/2}/f(n)$ for some $f(n)=n^{\omega(1)}$.
\end{itemize}




\section*{Acknowledgements}
We would thank Martin Dietzfelbinger for pointing out the work \cite{dietzfelbinger1997reliable}.

\bibliographystyle{alpha}
\bibliography{main}
    
\appendix
    
\end{document}